\definecolor{mygreen}{rgb}{0.75,0.8,0.72}
\newcommand{\RED}[1]{{\color{black}#1}}
\newcommand{\printfnsymbol}[1]{%
  \textsuperscript{\@fnsymbol{#1}}%
}
\newcommand{\diag}{\mbox{diag}}
\DeclarePairedDelimiter\abs{\lvert}{\rvert}
\DeclarePairedDelimiter{\bra}{\lbrack}{\rbrack}
\DeclarePairedDelimiter{\bre}{\lbrace}{\rbrace}
\DeclarePairedDelimiter{\para}{(}{)}
\DeclarePairedDelimiter{\dotp}{\langle}{\rangle}
\newtheorem{theorem}{Theorem}
\newtheorem{lemma}{Lemma}
\newtheorem{remark}{Remark}
\DeclareMathOperator{\E}{\mathsf{E}}
\DeclareMathOperator{\pr}{\mathsf{P}}
\title{Robust Mean Estimation in High Dimensions: An Outlier-Fraction Agnostic and Efficient Algorithm}
\author{\resizebox{1\hsize}{!}{Aditya Deshmukh\printfnsymbol{1},
\textit{Student Member, IEEE}, Jing Liu\printfnsymbol{1},
\textit{Member, IEEE},
Venugopal V. Veeravalli,
\textit{Fellow, IEEE}
\thanks{\printfnsymbol{1}Equal contribution.}
\thanks{A. Deshmukh and V.V. Veeravalli are with the ECE Department and Coordinated Science Lab, University of Illinois at Urbana-Champaign, Illinois,
USA. Email: ad11,vvv@illinois.edu}
\thanks{J. Liu is with the Mitsubishi Electric Research Laboratories, Massachusetts,
USA. Email: jiliu@merl.com}
\thanks{This research was supported by the US Army Research Laboratory under Cooperative Agreement W911NF-17-2-0196 and by the US National Science Foundation under grant 2106727, through the University of Illinois at Urbana-Champaign.}
}}
\begin{document}
% \pagecolor{mygreen}
\maketitle

\begin{abstract}
 The problem of robust mean estimation in high dimensions is studied, in which a certain fraction (less than half) of the datapoints can be arbitrarily corrupted. Motivated by compressive sensing, the robust mean estimation problem is formulated as the minimization of the  $\ell_0$-`norm' of an \emph{outlier indicator vector}, under a second moment constraint on the datapoints. The $\ell_0$-`norm' is then relaxed to the $\ell_p$-norm ($0<p\leq 1$) in the objective, and it is shown that the global minima for each of these objectives are order-optimal and have optimal breakdown point for the robust mean estimation problem. Furthermore, a computationally tractable iterative $\ell_p$-minimization and hard thresholding algorithm is proposed that outputs an order-optimal robust estimate of the population mean. The proposed algorithm (with breakdown point $\approx 0.3$) does not require prior knowledge of the fraction of outliers, in contrast with most existing algorithms, and for $p=1$ it has near-linear time complexity. Both synthetic and real data experiments demonstrate that the proposed algorithm outperforms state-of-the-art robust mean estimation methods. %The source code will be made available at GitHub.
 
\end{abstract}

\begin{IEEEkeywords}
Robust estimation, High-dimensional statistics, Global outlier pursuit, Linear time complexity algorithm
\end{IEEEkeywords}
\section{Introduction}

Robust mean estimation in high dimensions has received considerable interest recently, and has found applications in areas such as data analysis (e.g., spectral data in astronomy~\cite{10.2307/1271538}), outlier detection~\cite{huber2011robust,maronna2018robust,NIPS2019_8839} and distributed machine learning~\cite{10.1145/3154503,yin2018byzantine,bubeck2013bandits}. Classical robust mean estimation methods such as coordinate-wise median and geometric median have error bounds that scale with the dimension of the data \cite{LRV}, which results in poor performance in the high dimensional regime. A notable exception is Tukey's Median~\cite{Tukey1975MathematicsAT} that has an error bound that is independent of the dimension, when the fraction of outliers is less than a threshold 
%(\textbf{NOTE::::: the breakdown point for Tukey median is 1/3 under additive corruptions for halfspace-symmetric distributions in dimension d $\geq$ 2, decreases to 1/4 under TV corruptions})
~\cite{donoho1992breakdown,zhu2020does}. However, the computational complexity of Tukey's Median algorithm is exponential in the dimension. 

A number of recent papers have proposed polynomial-time algorithms that have dimension independent error bounds under certain distributional assumptions (e.g., bounded covariance or concentration properties). For a recent comprehensive survey on robust mean estimation, we refer the interested readers to~\cite{diakonikolas2019recent}.
One of the first such algorithms is Iterative Filtering~\cite{7782980,diakonikolas2017being,steinhardt2018robust}, in which one finds the top eigenvector of the sample covariance matrix and removes (or down-weights) the points with large projection scores on that eigenvector, and then repeat this procedure on the rest of points until the top eigenvalue is small. However, as discussed in~\cite{NIPS2019_8839}, the drawback of this approach is that it only looks at one direction/eigenvector at a time, and the outliers may not exhibit unusual bias in only one direction or lie in a single cluster. Figure~\ref{fig:LpApproximation} illustrates an example for which Iterative Filtering might have poor empirical performance. In this figure, the inlier datapoints in blue are randomly generated from the standard Gaussian distribution in (high) dimension $d$, and therefore their $\ell_2$-distances to the origin are roughly $\sqrt{d}$ (see, e.g., Theorem 3.1 of~\cite{adams2020ma3k0}). There are two clusters of outliers in red, and their $\ell_2$-distances to the origin are also roughly $\sqrt{d}$. If there is only one cluster of outliers, Iterative Filtering can effectively identify them; however, in this example, this method may remove many inlier points and perform suboptimally.
\begin{figure*}[!ht]
	\centering
	\includegraphics[trim=280 0 0 0, clip=true,width=0.24\linewidth]{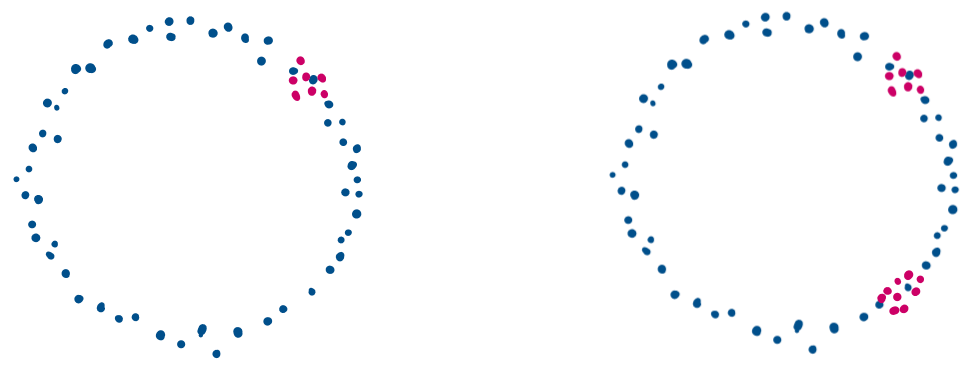}
	\captionsetup{justification=centering}
	\caption{Illustration of two clusters of outliers (red points). The inlier points (blue) are drawn from standard Gaussian distribution in high dimension $d$. Both the outliers and inliers are at roughly $\sqrt{d}$ distance from the origin.}
	\label{fig:LpApproximation}
\end{figure*}

There are interesting connections between existing methods for robust mean estimation and those used in compressive sensing. The Iterative Filtering algorithm has similarities to the greedy Matching Pursuit compressive sensing  algorithm~\cite{Mallat1993MatchingPW}. In the latter algorithm, one finds a single column of sensing matrix $\bm A$ that has largest correlation with the measurements $\bm b$, removes that column and its contribution from $\bm b$, and repeats this procedure on the remaining columns of $\bm A$. %In this view, it is more clear why Iterative Filtering algorithm is suboptimal.
%In light of this, we expect Iterative Filtering to have poor empirical performance despite having order-optimality guarantees.
Dong et al.~\cite{NIPS2019_8839} proposed a new scoring criteria for finding outliers, in which one looks at multiple directions associated with large eigenvalues of the sample covariance matrix in every iteration of the algorithm. Interestingly, this multi-directional approach is conceptually similar to  Iterative Thresholding techniques in compressive sensing (e.g., Iterative Hard Thresholding~\cite{blumensath2009iterative} or Hard Thresholding Pursuit~\cite{foucart2011hard}), in which one simultaneously finds multiple columns of matrix $\bm A$ that are more likely contribute to $\bm b$. Although iterative thresholding techniques are also greedy, they are more accurate than the Matching Pursuit technique in practice~\cite{blumensath2008iterative,BOUCHOT2016412}.

A common assumption in robust mean estimation problem is that the fraction of the corrupted datapoints is small. In this paper, we explicitly use this information 
% to further improve the performance of robust mean estimation. 
through the introduction of an  \textit{outlier indicator vector} whose  $\ell_0$-`norm' we minimize under a second moment constraint on the datapoints. This is partially motivated by compressive sensing and shares the same principle of `fitting the majority of the data' that is common in robust statistics. This new formulation not only enables us to leverage  advanced compressive sensing techniques to solve the robust mean estimation problem, but also allow us to design algorithms that do not require prior knowledge of the fraction of outliers.  {There are some works in sparse recovery (see, e.g. \cite{wang2010, candes2005}), in which $\ell_0/\ell_p$ minimization is used to remove outliers in data. In these works, a linear model $y=Ax+e$ is considered, wherein $y$ denotes the measurements, the matrix $A$ is known, and the unknown sparse vector $e$ models the potential outlier corruption on each datapoint.  Consequently, the analyses in the works on sparse recovery methods heavily rely on the assumption that the underlying model is linear (e.g., some works exploit the range-space/null-space properties of the matrix $A$). On the other hand, in robust mean estimation, a general observation model (not necessarily linear) is considered. In light of this, the analyses in the works on sparse recovery cannot be transferred in an obvious way to the robust mean estimation problem.}

% We provide theoretical justifications for our objective and propose to use the state-of-the-art $\ell_1$ or $\ell_p$ minimization techniques in compressive sensing to solve this problem. 

We consider the setting in which the distribution of the datapoints before corruption has bounded covariance, as is commonly assumed in many recent works (e.g.,~\cite{diakonikolas2017being,NIPS2019_8839,convex,steinhardt2017resilience}). 
In particular, in ~\cite{convex}, the authors propose to minimize the spectral norm of the weighted sample covariance matrix and use the knowledge of the outlier-fraction $\epsilon$ to constrain the weights. Along this line, two very recent works~\cite{cheng2020high,zhu2020robust} show that any approximate stationary point of the objective in~\cite{convex} gives a near-optimal solution. In contrast, our objective is designed to minimize the sparsity of an \emph{outlier indicator vector}, and we show that \textit{any} sparse enough solution is nearly optimal.

\noindent \emph{Contributions:}
\begin{itemize}
\item At a fundamental level, a contribution of this paper is the formulation of the robust mean estimation problem as minimizing the $\ell_0$-`norm' of the proposed \emph{outlier indicator vector}, under a second moment constraint on the datapoints. In addition, order-optimal estimation error guarantees and optimal breakdown point $(\epsilon<1/2)$ are shown for this objective. We relax the $\ell_0$ objective to $\ell_p (0< p\leq 1)$ as in compressive sensing, and establish corresponding order-optimal estimation error guarantees. {The guarantees are order-optimal with respect to the number of datapoints($n$), dimension of the data ($d$), and the fraction of corrupted datapoints($\epsilon$). Henceforth we use the term `order-optimal' in this sense.}
%This new perspective connects some state-of-the-art robust mean estimation approaches to the classical methods in compressive sensing. More importantly, it enables a variety of compressive sensing techniques to tackle robust mean estimation problem.
%

\item Motivated by the proposed $\ell_0$ and $\ell_p$ objectives and their theoretical justifications, we propose a computationally tractable \emph{iterative} $\ell_p (0<p\leq 1)$ minimization and hard thresholding algorithm, and establish the order optimality of the algorithm. Empirical studies show that the proposed algorithm significantly outperforms state-of-the-art methods in robust mean estimation.%Log

\item The proposed algorithm (with maximal breakdown point of $1-1/\sqrt{2}$) does not require the knowledge of the fraction of outliers (in contrast to most existing algorithms). For $p=1$, the algorithm has near-linear time complexity.
\end{itemize}

\section{Proposed optimization problems}\label{sec:propose}
We begin by defining what we mean by a corrupted sample of datapoints.
\newtheorem{definition}{Definition}
\begin{definition}\label{def1}
($\epsilon$-corrupted sample~\cite{NIPS2019_8839}) Let $P$ be a distribution on $\mathbb{R}^d$ with unknown mean $\bm\mu$, and let $\tilde{\bm y}_1,...,\tilde{\bm y}_n$ be independent and identically distributed (i.i.d.) drawn from $P$. These datapoints are then modified by an adversary who can inspect all the datapoints, remove $\epsilon n$ of them, and replace them with arbitrary vectors in $\mathbb{R}^d$. We then obtain an $\epsilon$-corrupted sample, denoted as $\bm y_1,...,\bm y_n$.

\end{definition}
Throughout the rest of the paper, we adhere to the notation given above: we represent a datapoint before corruption as $\tilde{\bm y}_i$, and after corruption as $\bm y_i$. Given a set of datapoints $\{\bm x_i, i=1,\dots,n\}$, we term the following as \textit{sample covariance matrix around $\bm z$}:
\begin{align}
    \sum\limits_{i=1}^n (\bm x_i-\bm z)(\bm x_i-\bm z)^\top.
\end{align}

There are other types of contamination one can consider, e.g., Huber's $\epsilon$-contamination model~\cite{huber1964}. The contamination model described in Definition \ref{def1} is the strongest in the sense that the adversary is not oblivious to the original datapoints, and can replace any subset of $\epsilon n$ datapoints with any vectors in $\mathbb{R}^d$. We refer the reader to~\cite{diakonikolas2019recent} for a more detailed discussion on contamination models.

Our primary goal is to robustly estimate the true population mean, given an $\epsilon$-corrupted sample. We assume that the underlying distribution has bounded second moment. A powerful and useful key insight that was exploited in previous work on the problem is that if the outliers in an $\epsilon$-corrupted sample (of large size) shift the average of datapoints before corruption by $\Omega(\xi)$ in a direction $\nu$, then the variance of the projected sample along $\nu$ increases by $\Omega(\xi^2/\epsilon)$. Thus, intuitively, it suffices to find a large subset of the $\epsilon$-corrupted sample, whose sample covariance matrix is close to the covariance matrix of the underlying distribution. In order for such a subset to exist and for the mean of this large subset to be close to the true mean, we need some form of concentration of the datapoints (before corruption) around the mean of their distribution. A constrained second moment condition is sufficient to guarantee this, and such an assumption is also used in previous works. In the following, we provide a brief high-level explanation (details can be found in the Appendix). Suppose we are given a sufficiently large sample of datapoints of size $n$, generated from a distribution with mean $\bm\mu$ and spectral norm of the covariance matrix bounded by $\sigma^2$.  Then, with high probability, there exists a large subset of the sample with spectral norm of the sample covariance matrix around $\bm\mu$ bounded by $O(\sigma^2)$. Hence, after corruption, with high probability there still exists a sufficiently large subset, say $\bm G^*$, of the resulting $\epsilon$-corrupted sample, of size $(1-\epsilon')n$ (where $\epsilon'\to\epsilon$ as $n\to\infty$), such that the spectral norm of the sample covariance matrix around $\bm\mu$ is bounded by $O(\sigma^2)$. Utilizing this, the concentration of the sample before corruption around $\bm\mu$, and a fundamental result~\cite[Lemma C.2]{zhu2020robust} about closeness of population mean and conditional mean, it can be shown that the distance between $\bm\mu$ and the sample average of $\bm G^*$ is $O(\sigma\sqrt{\epsilon'})$.

% In particular, assuming that the spectral norm of the covariance matrix $\Sigma$ is bounded by $\sigma^2$, we show in the Appendix (see \eqref{eq:existence}) that given a sample of size $n\geq\frac{ed}{\alpha\delta^2c'_1}\log\para*{\frac{d}{\delta}}$, there exists an index subset $\bm I$ such that the events 
% \begin{align*}
%     \mathcal{E}_1=&\{|\bm I|\geq (1-\alpha)n\},\\
%   \mathcal{E}_2=&\left\lbrace\lambda_{\mathrm{max}}\left(\sum_{i\in \bm I} (\tilde{\bm y}_i-\bm\mu)(\tilde{\bm y}_i-\bm\mu)^\top\right)\leq c^2_1   \sigma^2n\right\rbrace,
% \end{align*}
% hold with probability at least $1-2\delta$, i.e., $\pr(\mathcal{E}_1\cap\mathcal{E}_2)\geq 1-2\delta$ (see also, ~\cite[Proposition B.1]{charikar2017}, \cite{convex}). Note that $\alpha\in(0,1)$ dictates the trade-off between the size of the index set $\bm I$ and the minimum number of samples required. The constant $c_1$ is a tuning parameter, $\sigma^2$ is a known upper bound on the spectral norm of the covariance matrix of the distribution of the datapoints before corruption and $c_1'=c^2_1 \min\bre*{c^2_1\log c^2_1+1-c^2_1,1}$.

Based on this motivation, we propose an $\ell_0$-minimization problem to find the largest subset, whose sample covariance matrix exhibits bounded spectral norm. We first introduce an \textit{outlier indicator vector} $\bm h$: for the $i$-th datapoint, $h_i$ indicates that whether it is an outlier ($h_i = 1$) or not ($h_i= 0$). Given an $\epsilon$-corrupted sample of size $n$, we propose the following optimization problem, for which the solution in $\bm x$ should yield a robust estimate of the mean:
\begin{align}\label{obj_01}
   \min_{\bm h, \bm x} \|\bm h\|_0 \quad
    s.t.\  &   h_i \in\{0, 1\}, \forall i,\\
   & \lambda_{\mathrm{max}}\left(\sum_{i=1}^{n} (1-h_i)(\bm y_i-\bm x)(\bm y_i-\bm x)^\top \right)\leq c_1^2 \sigma^2n, \nonumber
\end{align}
{where $c_1$ is a constant that controls the inflation of the constraint with respect to the bound $(\sigma^2)$ on the spectral norm of the covariance matrix of the underlying distribution.} 

 We further relax the problem to the following:
\begin{align}\label{obj}
   \min_{\bm h, \bm x} \|\bm h\|_0 \quad
    s.t.\  &   0\leq h_i \leq 1, \forall i,\\
   & \lambda_{\mathrm{max}}\left(\sum_{i=1}^{n} (1-h_i)(\bm y_i-\bm x)(\bm y_i-\bm x)^\top \right)\leq c_1^2 \sigma^2n. \nonumber
\end{align}
 %We primarily focus on the first part $\bm h^{\mathrm{opt}}$ of the globally optimal solution, $(\bm h^{\mathrm{opt}},\bm x^{\mathrm{opt}})$. 
 Note that any globally optimal solution of \eqref{obj_01} is also globally optimal solution of \eqref{obj}. {To see this, let $\tilde{\bm h}$ be a global optimum of \eqref{obj}. Let $\bm h'$ be the vector obtained after setting the non-zero values of $\tilde{\bm h}$ to 1. Note that $\bm h'$ has the same $\ell_0$-norm as $\tilde{\bm h}$, and is also a feasible point of \eqref{obj_01}. Since the constraint set of \eqref{obj} is larger than \eqref{obj_01}, the optimum value of \eqref{obj_01} must be greater than or equal to the optimum value of \eqref{obj_01}. This implies that $\bm h'$ is a global optimum of \eqref{obj_01}. Hence, the claim holds.}  We show in Theorem \ref{thm:l0}, that any sparse enough feasible pair including the global optimum of~\eqref{obj} achieves order-optimality in terms of the error in estimating the mean.
 
 However, minimizing the above $\ell_0$ objective is not computationally tractable. Motivated by compressive sensing, we further propose to relax the $\ell_0$-`norm' to the $\ell_p$-norm ($0<p\leq 1$), which leads to the following optimization problem:

\begin{align}\label{objL1}
   \min_{\bm h, \bm x} \|\bm h\|_p \quad
    s.t.\  &  0\leq h_i \leq 1, \forall i,\\
   & \lambda_{\mathrm{max}}\left(\sum_{i=1}^{n} (1-h_i)(\bm y_i-\bm x)(\bm y_i-\bm x)^\top \right)\leq c_1^2 \sigma^2n. \nonumber
\end{align}

We show in Theorem \ref{thm:lp2}, that even in this case any `good' feasible pair including the global optimum is order-optimal in terms of the error in estimating the mean.

{
In the approaches taken in prior works (see, e.g., \cite{zhu2020robust}), the robust mean estimation problem is the following feasibility problem:
\begin{align}\label{eq:opt_prob_1}
    \text{Find } h \text{ s.t. } h_i\in [0,1],\sum\limits_{i=1}^n h_i\leq\epsilon n, \lambda_{\max}(\sum\limits_{i=1}^n(1-h_i)(y_i-x)(y_i-x)^\top)\leq C\sigma^2n.
\end{align}
 Most works (see, e.g., \cite{NIPS2019_8839,convex}) consider the following problem (or its variant), which is obtained by changing the feasibility problem into the following optimization problem:
\begin{align}\label{eq:opt_prob_2}
    \min \lambda_{\max} (\sum\limits_{i=1}^n(1-h_i)(y_i-x)(y_i-x)^\top)\text{ s.t. }h_i\in [0,1], \sum\limits_{i=1}^n h_i\leq \epsilon n
\end{align}
where $x$ is either fixed or is the weighted average of $y_i$'s with weights as $1-h_i$. Landscape results related to the optimization problem \eqref{eq:opt_prob_2} were obtained in \cite{convex} and \cite{zhu2020robust}. Our formulation \eqref{objL1}, for the special case of $p=1$, corresponds to minimizing the feasibility condition related to the sum of "weights" in \eqref{eq:opt_prob_1}. We provide landscape results for the optimization problem given in \eqref{objL1} (Theorems \ref{thm:l0} and \ref{thm:lp2}). An advantage of our formulation, which we will exploit in Algorithm \ref{alg:1}, is that it does not require knowledge of the fraction of outliers $\epsilon$.
}

%One could assign all the $h_i$ to be 1 to satisfy the constraint, but in this case the $\ell_0$-`norm' of $\bm h$ is very large.
We now provide theoretical guarantees for the estimator which is given by the solution of the optimization problem \eqref{obj}. We show that given an $\epsilon$-corrupted sample of sufficiently large size, then with high probability, the $\ell_2$-norm of the estimator's error is $O\left(\sigma\sqrt{\frac{\epsilon+\frac{d\log d}{n}}{1-2\para*{\epsilon+\frac{d\log d}{n}}}}\right)$. We formalize this in the following theorem. It is well known that an information-theoretic lower bound on the $\ell_2$-norm of any estimator's error $\|\hat{\bm x}-\bm\mu\|_2$ is $\Omega\left(\sigma\sqrt{\frac{\epsilon}{1-2\epsilon}}\right)$ (see~\cite{zhu2020robust}). Thus, the estimator is order-optimal in terms of the error as $\alpha \to 0$ and $n\to\infty$.

\begin{theorem}\label{thm:l0}
Let $P$ be a distribution on $\mathbb{R}^d$ with unknown mean $\bm \mu$ and unknown covariance matrix 	$\Sigma\preceq \ \sigma^2 I$. Let $\delta\in(0,1/4)$ and $c_1>1$ be fixed. Let $c_1'=c^2_1 \min\bre*{c^2_1\log c^2_1+1-c^2_1,1}$, {$n>\frac{2e}{c_1'\delta^2}d\log(d/\delta)$ and $\alpha=\frac{ed\log(d/\delta)}{n\delta^2c'_1}$}. Let $\epsilon\in (0,1/2-\alpha)$ and $\epsilon'=\epsilon+\alpha$. Given an $\epsilon$-fraction corrupted set of $n$ datapoints from $P$, let
\begin{equation}
    \mathcal{S}=\bre*{(\bm h,\bm x):  \|\bm h\|_0< (1-\epsilon')n; {\bm x}=\frac{\sum_{\{i:{h}_i=0\}} \bm y_i}{|\{i:{h}_i=0\}|}}.
\end{equation}
%\textbf{Alternative way:} Given an $\epsilon$-fraction corrupted set of $n\geq \frac{ed}{\alpha\delta^2c'_1}\log\para*{\frac{d}{\delta}}$ datapoints from $P$, for any feasible point $(\hat{\bm h}, \hat{\bm x}=\frac{\sum_{\{i:\hat{h}_i=0\}} \bm y_i}{|\{i:\hat{h}_i=0\}|})$ of \eqref{obj} with $\| \hat{\bm h}\|_0\leq (\epsilon+\alpha)n$ satisfies

Then the following holds with probability at least $1-4\delta$:\\
1) Any feasible pair $(\hat{\bm h},\hat{\bm x})$ for the optimization problem \eqref{obj} such that $(\hat{\bm h},\hat{\bm x})\in\mathcal{S}$ satisfies
    \begin{align}
    \left \| \hat{\bm x}-\bm\mu\right \|_2 \leq
        \para*{\sqrt{\frac{c^2_1\sigma^2}{1-\epsilon'}}+\sqrt{\frac{c^2_1\sigma^2}{1-\frac{\|\hat{\bm h}\|_0}{n}}}}\sqrt{\frac{\max\bre*{\epsilon',\frac{\|\hat{\bm h}\|_0}{n}}}{1-\epsilon'-\frac{\|\hat{\bm h}\|_0}{n}}}+\sqrt{\frac{c_1^2\sigma^2}{1-\alpha}.\frac{\epsilon}{1-\epsilon}}+{\sigma\sqrt{\alpha\delta}\para*{1+2\sqrt{\frac{c_1'}{e\log(\frac{d}{\delta})}}}}.\label{eq:L0Objgau}
\end{align}
\\
2) A global optimum $(\bm h^{\mathrm{opt}}, \bm x^{\mathrm{opt}})$ of \eqref{obj} lies in $\mathcal{S}$ with $\|\bm h^{\mathrm{opt}}\|_0\leq \epsilon' n$.
\end{theorem}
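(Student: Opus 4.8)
The plan is to make both parts rest on a single high-probability event supplied by the concentration argument sketched before the theorem (and carried out in the Appendix): on an event of probability at least $1-4\delta$ there is a set $\bm G^{*}$ of \emph{uncorrupted} datapoints with $\abs{\bm G^{*}}\ge(1-\epsilon')n$ such that $\lambda_{\max}(\sum_{i\in\bm G^{*}}(\bm y_i-\bm\mu)(\bm y_i-\bm\mu)^{\top})\le c_1^{2}\sigma^{2}n$ and $\norm{\bm x_{\bm G^{*}}-\bm\mu}_2$ is at most the sum of the last two terms on the right-hand side of \eqref{eq:L0Objgau}; here and below $\bm x_{S}:=\abs{S}^{-1}\sum_{i\in S}\bm y_i$. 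Everything past this event is deterministic and uses only two facts: (i) for any index set $S$ and any $\bm z$, $\sum_{i\in S}(\bm y_i-\bm z)(\bm y_i-\bm z)^{\top}\succeq\sum_{i\in S}(\bm y_i-\bm x_S)(\bm y_i-\bm x_S)^{\top}$, so centering at the sample mean only shrinks the spectral norm; and (ii) the subset-mean bound (Lemma~C.2 of~\cite{zhu2020robust}): if $T\subseteq S$ then $\norm{\bm x_S-\bm x_T}_2^2\le\frac{\abs{S\setminus T}}{\abs{T}\,\abs{S}}\,\lambda_{\max}(\sum_{i\in S}(\bm y_i-\bm x_S)(\bm y_i-\bm x_S)^{\top})$. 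I also record that $\epsilon'=\epsilon+\alpha<\tfrac12$, since $\epsilon<\tfrac12-\alpha$.

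For Part~1, fix a feasible $(\hat{\bm h},\hat{\bm x})\in\mathcal{S}$ and set $\hat G=\{i:\hat h_i=0\}$, so that $\hat{\bm x}=\bm x_{\hat G}$ and $\abs{\hat G}=n-\norm{\hat{\bm h}}_0$; discarding from the constraint of \eqref{obj} the nonnegative PSD terms $(1-\hat h_i)(\bm y_i-\hat{\bm x})(\bm y_i-\hat{\bm x})^{\top}$ with $i\notin\hat G$ leaves $\lambda_{\max}(\sum_{i\in\hat G}(\bm y_i-\hat{\bm x})(\bm y_i-\hat{\bm x})^{\top})\le c_1^{2}\sigma^{2}n$. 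Then I would take $W=\hat G\cap\bm G^{*}$: writing $\rho:=\norm{\hat{\bm h}}_0/n$, the defining inequality $\norm{\hat{\bm h}}_0<(1-\epsilon')n$ of $\mathcal{S}$ is exactly what makes $\abs{W}\ge\abs{\hat G}+\abs{\bm G^{*}}-n\ge(1-\epsilon'-\rho)n>0$, and in addition $\abs{\hat G\setminus W}\le n-\abs{\bm G^{*}}\le\epsilon' n$ and $\abs{\bm G^{*}\setminus W}\le n-\abs{\hat G}=\rho n$. Applying Fact~(ii) to $W\subseteq\hat G$ (with $\abs{\hat G}\ge(1-\rho)n$) and to $W\subseteq\bm G^{*}$ (with $\abs{\bm G^{*}}\ge(1-\epsilon')n$ and $\lambda_{\max}(\sum_{i\in\bm G^{*}}(\bm y_i-\bm x_{\bm G^{*}})(\bm y_i-\bm x_{\bm G^{*}})^{\top})\le c_1^{2}\sigma^{2}n$ by Fact~(i)) gives $\norm{\bm x_{\hat G}-\bm x_W}_2\le\sqrt{\tfrac{c_1^{2}\sigma^{2}}{1-\rho}}\,\sqrt{\tfrac{\epsilon'}{1-\epsilon'-\rho}}$ and $\norm{\bm x_{\bm G^{*}}-\bm x_W}_2\le\sqrt{\tfrac{c_1^{2}\sigma^{2}}{1-\epsilon'}}\,\sqrt{\tfrac{\rho}{1-\epsilon'-\rho}}$. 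Bounding $\epsilon'$ and $\rho$ in the two numerators by $\max\{\epsilon',\rho\}$, summing, and combining with the triangle inequality $\norm{\hat{\bm x}-\bm\mu}_2\le\norm{\bm x_{\hat G}-\bm x_W}_2+\norm{\bm x_W-\bm x_{\bm G^{*}}}_2+\norm{\bm x_{\bm G^{*}}-\bm\mu}_2$ and the event's bound on the last summand reproduces \eqref{eq:L0Objgau} exactly.

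For Part~2, $\bm G^{*}$ itself supplies a feasible point of \eqref{obj}: take $h^{*}_i=0$ for $i\in\bm G^{*}$, $h^{*}_i=1$ otherwise, and $\bm x^{*}=\bm x_{\bm G^{*}}$; by Fact~(i) the constraint left-hand side equals $\lambda_{\max}(\sum_{i\in\bm G^{*}}(\bm y_i-\bm x_{\bm G^{*}})(\bm y_i-\bm x_{\bm G^{*}})^{\top})\le\lambda_{\max}(\sum_{i\in\bm G^{*}}(\bm y_i-\bm\mu)(\bm y_i-\bm\mu)^{\top})\le c_1^{2}\sigma^{2}n$, while $\norm{\bm h^{*}}_0=n-\abs{\bm G^{*}}\le\epsilon' n$. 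Hence the optimal value of \eqref{obj} is at most $\epsilon' n$, which is $<(1-\epsilon')n$ because $\epsilon'<\tfrac12$; so every global optimum meets the $\ell_0$-constraint defining $\mathcal{S}$. Finally, exactly as in the discussion of \eqref{obj_01}, one may first raise the fractional nonzeros of a global optimum $\bm h^{\mathrm{opt}}$ to $1$ (which changes neither $\norm{\bm h^{\mathrm{opt}}}_0$ nor feasibility, since increasing an entry only shrinks the constraint's left-hand side), and then replace the $\bm x$-component by $\bm x_{\{i\,:\,h^{\mathrm{opt}}_i=0\}}$, feasible by Fact~(i); the result is a global optimum lying in $\mathcal{S}$ with $\norm{\bm h^{\mathrm{opt}}}_0\le\epsilon' n$, to which Part~1 applies.

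The main obstacle is the concentration statement invoked at the outset: producing $\bm G^{*}$ with precisely the parameters $c_1'$, $\alpha$ and failure probability $4\delta$ of the theorem, which is the only probabilistic ingredient. I expect it to proceed by truncating the uncorrupted $\tilde{\bm y}_i$ at a carefully chosen radius, bounding the number of truncated points, applying a matrix Bernstein/Bennett inequality to the truncated centered points (the quantity $c_1^{2}\log c_1^{2}+1-c_1^{2}$ is the Bennett rate function evaluated at $c_1^{2}-1$, which pins down the admissible $\alpha$ and the lower bound on $n$), controlling the deviation of the truncated average from $\bm\mu$, and finally discarding the at most $\epsilon n$ adversarially altered indices; this last step is what inflates $\alpha$ into $\epsilon'=\epsilon+\alpha$ and produces the $\sqrt{\tfrac{c_1^{2}\sigma^{2}}{1-\alpha}\cdot\tfrac{\epsilon}{1-\epsilon}}$ summand of \eqref{eq:L0Objgau}. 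Everything after that event is elementary set arithmetic plus a couple of applications each of Facts~(i) and~(ii), so the delicate part is matching the constants of the concentration bound, not the landscape argument itself.
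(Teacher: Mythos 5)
Your argument is correct and lands on exactly the inequality \eqref{eq:L0Objgau}, but the deterministic half is organized differently from the paper's. The paper works with the probability weight vectors $\hat{\bm w}=\frac{\bm 1-\hat{\bm h}}{n-\|\hat{\bm h}\|_0}$ and $\bm w^*=\frac{\bm 1-\bm h^*}{n-\|\bm h^*\|_0}$, bounds their total variation distance by $\frac{\max\{\beta,\epsilon'\}}{1-\min\{\beta,\epsilon'\}}$ via Lemma \ref{lem:tv}, and then invokes the resilience-type Lemma \ref{lem:res} once to compare the two weighted means; you instead pass explicitly through the intersection $W=\hat G\cap\bm G^*$ and apply a subset-mean inequality twice, triangulating through $\bm x_W$. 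The two routes are arithmetically equivalent here (your bounds $\frac{c_1^2\sigma^2}{1-\rho}\cdot\frac{\epsilon'}{1-\epsilon'-\rho}$ and $\frac{c_1^2\sigma^2}{1-\epsilon'}\cdot\frac{\rho}{1-\epsilon'-\rho}$ sum to exactly the paper's first term, since $\frac{\zeta}{1-\zeta}=\frac{\max\{\beta,\epsilon'\}}{1-\epsilon'-\beta}$ for the paper's $\zeta$), and your version of the subset-mean bound with denominator $|T|\,|S|$ rather than $|T|^2$ is the sharp one, so the constants match. What the paper's formulation buys is generality: Lemmas \ref{lem:res} and \ref{lem:tv} handle arbitrary fractional weights, which is what makes the same proof template carry over verbatim to Theorem \ref{thm:lp2} and Theorem \ref{thm:algo}, whereas your intersection argument is specific to $0/1$ indicator vectors (adequate for Theorem \ref{thm:l0} after the rounding step you and the paper both perform). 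Your Part 2 matches the paper's (the paper certifies feasibility of $(\bm h^*,\bm\mu)$ directly from event $\mathcal{E}_2$ rather than of $(\bm h^*,\bm x_{\bm G^*})$, an immaterial difference), and your reading of the probabilistic preliminaries is accurate: the paper's events $\mathcal{E}_1$--$\mathcal{E}_4$ are established by Markov bounds (Lemma \ref{lem:size}), a truncated matrix Chernoff/Bennett bound in which $c_1^2\log c_1^2+1-c_1^2$ is indeed the Bennett rate at $c_1^2-1$ (Lemma \ref{lem:event2}), a truncation-bias estimate (Lemma \ref{lem:conc}), and the subset comparison \eqref{eq:res}, exactly as you anticipate; that portion of your write-up is a sketch rather than a proof, but it is the right sketch.
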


% The proof is deferred to the Appendix. A high-level sketch is as follows. We use the result in \cite[Lemma 2.2]{zhu2020robust} stated in Lemma \ref{lem:tv}. Informally, if two probability distributions on a set a datapoints are close in total variation distance, then the weighted means of the distributions are close. We show that the total variation distance between the uniform distribution on the set $\{\bm y_i:\hat{h}_i=0\}$, and the  uniform distribution on the set of inlier datapoints that are within a distance of $\sigma\sqrt{\frac{d}{\alpha\delta}}$ from $\bm\mu$, is small. Therefore one can show that the distance between the sample averages of these two sets is $O\left(\sigma\sqrt{\frac{\epsilon'}{1-2\epsilon'}}\right)$. Note that the average of the first set $\{\bm y_i:\hat{h}_i=0\}$ is $\hat{\bm x}$, and using Lemma \ref{lem:tv} again we can show that the distance between the average of the second set and $\bm\mu$ is $O\left(\sigma\sqrt{\epsilon}\right)$. Thus we establish that $\| \hat{\bm x}-{ {\bm\mu}} \|_2$ is upper bounded as given in \eqref{eq:L0Objgau}.

The proof is deferred to the Appendix. A high-level sketch of the proof of Theorems~\ref{thm:l0} is as follows. We use the idea in \cite[Lemma 2.2]{zhu2020robust} stated in Lemma \ref{lem:tv}. Informally, if two probability distributions on a set of datapoints are close in total variation distance, then the weighted means of the distributions are close. Consider the uniform distribution on the set $\{\bm y_i:\hat{h}_i=0\}$ (say $P_1$). Note that the estimator $\hat{\bm x}$ in Theorem~\ref{thm:l0} is the mean of $P_1$. We show that the total variation distance  between $P_1$ and the uniform distribution (say $P'$) on the set of inlier datapoints (that are within a distance of {$\sigma\sqrt{\frac{d}{\alpha\delta}}=\sigma\sqrt{\frac{n\delta c'_1}{e\log(d/\delta)}}$}  from $\bm\mu$), is small. Therefore one can show that the distance between $\hat{\bm x}$ and the mean of $P'$ is $O\left(\sigma\sqrt{\frac{\max\bre*{\epsilon',\frac{\|\hat{\bm h}\|_0}{n}}}{1-\epsilon'-\frac{\|\hat{\bm h}\|_0}{n}}}\right)$. Using Lemma \ref{lem:tv}, we show that the distance between the mean of $P'$ and $\bm\mu$ is $O(\sigma\sqrt{\epsilon'})$. Using triangle inequality, it follows that the distance between $\hat{\bm x}$ and $\bm\mu$ is $O\left(\sigma\sqrt{\frac{\max\bre*{\epsilon',\frac{\|\hat{\bm h}\|_0}{n}}}{1-\epsilon'-\frac{\|\hat{\bm h}\|_0}{n}}}\right)$.

\begin{remark}\label{rem:2}
Theorem~\ref{thm:l0} shows that, as long as we find a feasible point $\hat{\bm h}$ that is sparse enough, i.e., $\|\hat{\bm h}\|_0 \leq (\epsilon+\alpha)n$, the average of the estimated inliers $\frac{\sum_{\{i:\hat{h}_i=0\}} \bm y_i}{|\{i:\hat{h}_i=0\}|}$ is close to the true mean in the optimal sense. It is not necessary to reach the global optimum of the objective~\eqref{obj}.% states that with probability at least $1-2\delta$, any feasible solution $(\hat{\bm h}, \hat{\bm x})$ with ${\hat{h}_i} \in \{0,1\}$ and $\|\hat{\bm h}\|_0 \leq (\epsilon+\alpha)n$ satisfies $\| \hat{\bm x}-{ \bar{\bm x}^*}\|_2 \leq 6c_1\sigma \sqrt{\epsilon+\alpha}$.%We provide experimental evidence in Section 4 that Algorithm \ref{alg:1} indeed returns such feasible solutions.
\end{remark}

We now provide a similar order-optimal error guarantee for the solution of the optimization problem in \eqref{objL1}.

\begin{theorem}\label{thm:lp2}
Let $P$ be a distribution on $\mathbb{R}^d$ with unknown mean $\bm\mu$ and unknown covariance matrix 	$\Sigma\preceq \sigma^2 I$.Let $\delta\in(0,1/4)$, $c_1>1$ and $p\in(0,1]$ be fixed. Let $c_1'=c^2_1 \min\bre*{c^2_1\log c^2_1+1-c^2_1,1}$, {$n>\frac{2e}{c_1'\delta^2}d\log(d/\delta)$ and $\alpha=\frac{ed\log(d/\delta)}{n\delta^2c'_1}$}. Let $\epsilon\in (0,1/2-\alpha)$ and $\epsilon'=\epsilon+\alpha$. Given an $\epsilon$-fraction corrupted set of $n$ datapoints from $P$, let
\begin{align}
\mathcal{S}'=\bre*{(\bm h,\bm x):  \|\bm h\|^p_p< (1-\epsilon')n;\;\; {\bm x}=\frac{\sum_{i=1}^n (1-h_i)\bm y_i}{\sum_{i=1}^n (1-h_i)}}\label{eq:H'}.
\end{align}

Then the following holds with probability at least $1-4\delta$:
\begin{enumerate}
    \item Any feasible pair $(\hat{\bm h},\hat{\bm x})$ of \eqref{objL1} such that $(\hat{\bm h},\hat{\bm x})\in\mathcal{S}'$ satisfies
    \begin{align}
    \begin{split}
    \left \| \hat{\bm x}-\bm\mu\right \|_2 \leq &
         \para*{\sqrt{\frac{c^2_1\sigma^2}{1-\epsilon'}}+\sqrt{\frac{c^2_1\sigma^2}{1-\frac{\|\hat{\bm h}\|^p_p}{n}}}}\sqrt{\frac{\max\bre*{\epsilon',\frac{\|\hat{\bm h}\|^p_p}{n}}}{1-\epsilon'-\frac{\|\hat{\bm h}\|^p_p}{n}}}+\sqrt{\frac{c_1^2\sigma^2}{1-\alpha}.\frac{\epsilon}{1-\epsilon}}\\
         &+\sigma\sqrt{\alpha\delta}\para*{1+2\sqrt{\frac{c_1'}{e\log(\frac{d}{\delta})}}}.
         \end{split}
\end{align}
\item A global optimum $(\bm h^{\mathrm{opt}}, \bm x^{\mathrm{opt}})$ of \eqref{objL1} lies in $\mathcal{S}'$ with $\|\bm h^{\mathrm{opt}}\|^p_p\leq \epsilon' n$.
\end{enumerate}
\end{theorem}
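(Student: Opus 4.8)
\emph{Proof plan.} The plan is to prove Theorem~\ref{thm:lp2} by adapting the argument of Theorem~\ref{thm:l0}, the only substantive change being that the surviving sample is now described by a weight vector $\bm h\in[0,1]^n$ rather than a $\{0,1\}$-valued one, and that sparsity is measured by $\|\bm h\|_p^p$. The elementary observation that makes this transfer work is that, for $0<p\le1$ and $h_i\in[0,1]$, one has $h_i\le h_i^p$, so the total removed weight obeys $\sum_i h_i=\|\bm h\|_1\le\|\bm h\|_p^p$. Since $t\mapsto\frac{\max\{\epsilon',t\}}{1-\epsilon'-t}$ and $t\mapsto\frac{c_1^2\sigma^2}{1-t}$ are nondecreasing on $[0,1-\epsilon')$, wherever the removed-mass fraction enters a Theorem~\ref{thm:l0}-type estimate I would replace $\sum_i\hat h_i/n$ by the larger $\|\hat{\bm h}\|_p^p/n$, which produces exactly the form claimed here.

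For part~1, I would work on the same high-probability event (probability at least $1-4\delta$) as in Theorem~\ref{thm:l0}, whose relevant consequences are: there is a subset $\bm G^*$ of the corrupted sample of size $(1-\epsilon')n$ with $\lambda_{\max}\bigl(\sum_{i\in\bm G^*}(\bm y_i-\bm\mu)(\bm y_i-\bm\mu)^\top\bigr)\le c_1^2\sigma^2 n$; and the uniform law $P'$ on the inliers within distance $\sigma\sqrt{d/(\alpha\delta)}$ of $\bm\mu$ satisfies $\lambda_{\max}(\mathrm{Cov}(P'))\le\frac{c_1^2\sigma^2}{1-\epsilon'}$ and $\|\mathrm{mean}(P')-\bm\mu\|_2\le\sqrt{\frac{c_1^2\sigma^2}{1-\alpha}\cdot\frac{\epsilon}{1-\epsilon}}+\sigma\sqrt{\alpha\delta}\bigl(1+2\sqrt{c_1'/(e\log(d/\delta))}\bigr)$ (this last inequality being exactly what Theorem~\ref{thm:l0} establishes, via \cite[Lemma~C.2]{zhu2020robust} and Chebyshev applied to $\|\tilde{\bm y}-\bm\mu\|^2$, whose mean is $\Tr\Sigma\le\sigma^2 d$). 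Given a feasible $(\hat{\bm h},\hat{\bm x})\in\mathcal S'$, I would set $w_i=1-\hat h_i$ and $W=\sum_i w_i$, and let $P_1$ assign mass $w_i/W$ to $\bm y_i$, so that $\mathrm{mean}(P_1)=\hat{\bm x}$ since $\hat{\bm x}$ is the weighted average; because $W=n-\sum_i\hat h_i\ge n-\|\hat{\bm h}\|_p^p>\epsilon' n>0$, the feasibility constraint then yields $\lambda_{\max}(\mathrm{Cov}(P_1))\le\frac{c_1^2\sigma^2 n}{W}\le\frac{c_1^2\sigma^2}{1-\|\hat{\bm h}\|_p^p/n}$. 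Next I would decompose $\|P_1-P'\|_{\mathrm{TV}}$ into the $P_1$-mass falling outside the within-radius inlier set (at most the count of outliers plus far inliers plus the removed weight, divided by $W$) and the reweighting inside that set, and use $|\{\text{outliers}\}\cup\{\text{far inliers}\}|\le\epsilon' n$ together with $\sum_i\hat h_i\le\|\hat{\bm h}\|_p^p$ to bound this TV distance in terms of $\max\{\epsilon',\|\hat{\bm h}\|_p^p/n\}$ relative to $1-\epsilon'-\|\hat{\bm h}\|_p^p/n$. Feeding the two covariance bounds and this TV bound into Lemma~\ref{lem:tv} would give $\|\hat{\bm x}-\mathrm{mean}(P')\|_2\le\bigl(\sqrt{\tfrac{c_1^2\sigma^2}{1-\epsilon'}}+\sqrt{\tfrac{c_1^2\sigma^2}{1-\|\hat{\bm h}\|_p^p/n}}\bigr)\sqrt{\tfrac{\max\{\epsilon',\|\hat{\bm h}\|_p^p/n\}}{1-\epsilon'-\|\hat{\bm h}\|_p^p/n}}$, and the triangle inequality with the bound on $\|\mathrm{mean}(P')-\bm\mu\|_2$ would close part~1.

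For part~2, I would exhibit the explicit pair $\bm h^*$ with $h^*_i=1$ for $i\notin\bm G^*$ and $h^*_i=0$ otherwise, and $\bm x^*$ the unweighted average of $\{\bm y_i:i\in\bm G^*\}$. By the parallel-axis identity $\sum_{i\in\bm G^*}(\bm y_i-\bm x^*)(\bm y_i-\bm x^*)^\top\preceq\sum_{i\in\bm G^*}(\bm y_i-\bm\mu)(\bm y_i-\bm\mu)^\top$, on the good event $(\bm h^*,\bm x^*)$ is feasible for \eqref{objL1} with $\|\bm h^*\|_p^p=|[n]\setminus\bm G^*|=\epsilon' n$, so any global optimum satisfies $\|\bm h^{\mathrm{opt}}\|_p^p\le\epsilon' n<(1-\epsilon')n$ (using $\epsilon'<1/2$). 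The same identity shows that replacing $\bm x^{\mathrm{opt}}$ by the weighted average with weights $1-h^{\mathrm{opt}}_i$ preserves feasibility and optimality, so I may assume $\bm x^{\mathrm{opt}}$ has the form required in \eqref{eq:H'}; hence $(\bm h^{\mathrm{opt}},\bm x^{\mathrm{opt}})\in\mathcal S'$, and part~1 then gives it the optimal $O\bigl(\sigma\sqrt{\epsilon'/(1-2\epsilon')}\bigr)$ error rate.

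The step I expect to be the main obstacle is the total-variation bookkeeping in part~1. Unlike in Theorem~\ref{thm:l0}, where the survivors form a genuine set whose uniform law is directly comparable to $P'$, one must here control the weighted law $P_1$ whose normalization $W$ can be as small as just above $\epsilon' n$; this forces a careful split of $\|P_1-P'\|_{\mathrm{TV}}$ into a missing-mass part and a reweighting part that are \emph{each} bounded in terms of $\|\hat{\bm h}\|_p^p/n$ and $\epsilon'$, followed by chaining the TV-distance and spectral-norm comparisons through Lemma~\ref{lem:tv} so that all constants reproduce the stated expression verbatim.
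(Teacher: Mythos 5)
Your proposal is correct and follows essentially the same route as the paper: the key inequality $\|\hat{\bm h}\|_1\le\|\hat{\bm h}\|_p^p$ for $h_i\in[0,1]$, the comparison of the weighted law $\hat{\bm w}=(\bm 1-\hat{\bm h})/\|\bm 1-\hat{\bm h}\|_1$ with the uniform law on $\bm G^*$ via the total-variation and resilience lemmas, and the feasibility of $\bm h^*$ for part~2. The ``TV bookkeeping'' you flag as the main obstacle is handled in the paper simply by checking $\hat{\bm w}\in\Delta_{n,\|\hat{\bm h}\|_p^p/n}$ and invoking Lemma~\ref{lem:tv} directly (and note $\|\bm h^*\|_0\le\epsilon'n$ rather than equality), so no separate missing-mass/reweighting split is needed.
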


% The proof is deferred to the Appendix. The high-level idea is similar to that of Theorem \ref{thm:l0}. We show that the total variation distance between distribution on the datapoints with (normalized) weights $1-\hat h_i$, and the  uniform distribution on the set of inlier datapoints that are within a distance of $\sigma\sqrt{\frac{d}{\alpha\delta}}$ from $\bm\mu$, is small. Using concentration properties, it can be then shown that the weighted mean $\hat{\bm x}$ is close to $\bm\mu$ (within a distance of $O(\sigma\sqrt{\epsilon'})$).

The proof is deferred to the Appendix. The high-level idea is similar to that of the proof of Theorem \ref{thm:l0}. We consider the distribution on the $\alpha$-corrupted samples with (normalized) probability weights $1-h_i$ (say $P_2$). Note that the estimator $\hat{\bm x}$ in Theorem~\ref{thm:lp2} is the mean of $P_2$. We show that the total variation distance  between $P_2$ and the uniform distribution (say $P'$) on the set of inlier datapoints (that are within a distance of {$\sigma\sqrt{\frac{d}{\alpha\delta}}=\sigma\sqrt{\frac{n\delta c'_1}{e\log(d/\delta)}}$} from $\bm\mu$), is small. Therefore one can show that the distance between $\hat{\bm x}$ and the mean of $P'$ is $O\left(\sigma\sqrt{\frac{\max\bre*{\epsilon',\frac{\|\hat{\bm h}\|^p_p}{n}}}{1-\epsilon'-\frac{\|\hat{\bm h}\|^p_p}{n}}}\right)$. Using Lemma \ref{lem:tv}, we show that the distance between the mean of $P'$ and $\bm\mu$ is $O(\sigma\sqrt{\epsilon'})$. Using triangle inequality, it follows that the distance between $\hat{\bm x}$ and $\bm\mu$ is $O\left(\sigma\sqrt{\frac{\max\bre*{\epsilon',\frac{\|\hat{\bm h}\|^p_p}{n}}}{1-\epsilon'-\frac{\|\hat{\bm h}\|^p_p}{n}}}\right)$.

\begin{remark}
The breakdown point of the estimators in Theorems \ref{thm:l0} and \ref{thm:lp2} is nearly the maximal possible $1/2$ (as $\alpha\to 0$ and $n\to\infty$), that is the estimator can tolerate any corruption level $\epsilon<1/2$, assuming that the number of samples $n$ satisfies the lower bound.
\end{remark}

\begin{remark}\label{rem_thm2}
From Lemma \ref{lem:wmin} in the Appendix, we know that given any feasible pair  of \eqref{objL1} with $\|\hat{\bm h}\|_p\leq (\epsilon'n)^{1/p}$, we have that $\left(\hat{\bm h},\frac{\sum\limits_{i=1}^n (1-\hat{h}_i) \bm y_i}{\sum\limits_{i=1}^n (1-\hat{h}_i)}\right)$ is also a feasible pair, and therefore it lies in the set $\mathcal{S}'$ defined in~\eqref{eq:H'}. Theorem~\ref{thm:lp2} further shows that this weighted average of the datapoints $\frac{\sum\limits_{i=1}^n (1-\hat{h}_i) \bm y_i}{\sum\limits_{i=1}^n (1-\hat{h}_i)}$ is close to the true mean. Again, we note that it is not necessary to reach the global optimum of the objective~\eqref{objL1}; we only need to find a feasible point $\bm h$ of~\eqref{objL1} whose $\ell_p$-norm is small enough. % 
% Theorem \ref{thm:lp2} shows that it, that even in this case any `good' feasible solution including the global optimum is order-optimal. Actually, from Theorem~\ref{thm:lp2} and the proof of Theorem~\ref{thm:algo}, we know that it is not necessary to find the global optimum of the $\ell_p$ objective; we only need to find a solution $\bm h$ whose $\ell_p$-`norm' is small enough.
 \end{remark}
\section{Algorithm}
 %Note that the objective~\eqref{obj} is non-convex. Fortunately, it is known that several efficient algorithms like coordinate-wise median and geometric median can tolerate nearly half outlier points and their estimates are bounded from the true mean. So we can use them as a good initial point $\bm x^{(0)}$ in our algorithm. From Theorem 1, we know that actually any feasible solutions $\bm h$  (with ${\hat{h}_i} \in \{0,1\}$) which are sparse enough, would be sufficient.

%Since updating $\bm h$ (i.e., minimizing $\|\bm h\|_0$ under the constraints in step 1 in Algorithm \ref{alg:1}) is computationally expensive, we propose to minimize the surrogate functions $\|\bm h\|_p^p$ with $0<p\leq 1$. The effectiveness of this approach is well understood in the compressive sensing literature. 

\subsection{\texorpdfstring{$\ell_p$}{lp} minimization and thresholding}
Motivated by the $\ell_p$ objective and its theoretical guarantee, we propose an iterative $\ell_p$ minimization algorithm. The algorithm, which is detailed in Algorithm \ref{alg:1}, alternates between updating the outlier indicator vector $\bm h$ via minimizing its $\ell_p$-norm and updating the estimated mean $\bm x$. To describe Algorithm~\ref{alg:1}, let $\mathcal{H}$ be the set defined by
\begin{align}
    \mathcal{H}(\bm x, c_2)\coloneqq &\arg\min_{\bm h}  \|\bm h\|_p\label{eq:Hset} \\
    s.t.\quad  & 0\leq h_i \leq 1, \forall i,\nonumber\\
     &\lambda_{\mathrm{max}}\left(\sum_{i=1}^{n} (1-h_i)(\bm y_i-\bm x)(\bm y_i-\bm x)^\top\right)\leq (c_1^2+c_2^2)\sigma^2 n .\nonumber
\end{align}

\begin{algorithm}[tb]
   \caption{Robust Mean Estimation via $\ell_p$ Minimization and Thresholding}
   \label{alg:1}
\begin{algorithmic}
   \STATE {\bfseries Inputs:}\\
   1) An $\epsilon$-corrupted set of datapoints $\{\bm y_i\}_{i=1}^{n}\in \mathbb{R}^d$ generated by a distribution whose covariance matrix satisfies $\Sigma\preceq\sigma^2 I$.\\ 2) Upper bound on corruption level: $\check{\epsilon}$\\
   3) Upper bound on spectral norm of $\Sigma$: $\sigma^2$.\\
   4) Threshold: $0<\tau\leq 1$ such that $f(\tau)>\check{\epsilon}$, where $f(\tau)$ is defined in \eqref{eq:f}, if such a $\tau$ exists.\\
   5) Set $c_1>1$.\\
   6) Set $0<p\leq 1$ in $\ell_p$.
   \STATE {\bfseries Initialize:}\\
   1) $\bm x^{(0)}$ as the coordinate-wise median of $\{\bm y_i\}_{i=1}^n$.\\
   2) $c_2^{(0)}=3\sqrt{d}+2c_1$.\\
   3) Iteration number $t=0$.
   \STATE {\bfseries Do:}
  \STATE \emph{Step 1}: Given $\bm x^{(t)}$, update $\bm h$:
   \STATE $\bm h^{(t)}\in \mathcal{H}(\bm x^{(t)},c_2^{(t)})$, 
     where $\mathcal{H}$ is defined in \eqref{eq:Hset}.
    \STATE \emph{Step 2}: Given $\bm h^{(t)}$, update $\bm x$:\\
   $\bm x^{(t+1)}=\frac{\sum_{i=1}^{n} (1-h_i^{(t)})1\{h_i^{(t)}\leq\tau \}\bm y_i}{\sum_{i=1}^{n} (1-h_i^{(t)})1\{h_i^{(t)}\leq\tau \}}$.\\
   $c_2^{(t+1)}=\gamma(\check{\epsilon})c_2^{(t)}+\beta(\check{\epsilon})$,\\
   where $\gamma$ and $\beta$ are defined in \eqref{eq:gamma} and \eqref{eq:beta}\\
   $t=t+1$.\\
   \STATE {\bfseries While:} $t< T = 1+\frac{\log c_2^{(0)}}{\log\abs{\gamma(\check{\epsilon})}}$ and {$c_2^{(t)}< c_2^{(t-1)}$}
   \STATE {\bfseries Output:} $\bm x^{(T)}$
\end{algorithmic}
\end{algorithm}

 When updating the estimated mean $\bm x$ in Step 2 of Algorithm \ref{alg:1}, we add an option to threshold the $h_i$ by $\tau$, so one can use the weighted average of the estimated `reliable' datapoints (i.e., those for which $h_i\approx 0$) to estimate $\bm x$. This is motivated by the analysis of the original $\ell_0$ objective in Theorem~\ref{thm:l0}, where the average of the estimated `reliable' datapoints $\frac{\sum_{\{i:\hat{h}_i=0\}} \bm y_i}{|\{i:\hat{h}_i=0\}|}$ is close to the true mean as long as the outlier indicator vector $\hat {\bm h}$ is sparse enough. The breakdown point of Algorithm \ref{alg:1} depends on the threshold $\tau$ and is given by $f(\tau)$ (see \eqref{eq:f}). The maximal breakdown point corresponds to no thresholding, i.e., $f(1)=1-1/\sqrt{2}$. Algorithm \ref{alg:1} requires an upper bound $\check{\epsilon}$ on the true fraction of outliers. This upper bound can be set arbitrarily close to (but less than) the breakdown point.
 
 With this intuitive updating rule in Step 2, Algorithm \ref{alg:1} has following order-optimal guarantee.

%we update $\bm x$ as the average of the set of estimated inlier points. Then we repeat this procedure until the stopping criteria is met.% The following theorem shows that the objective value is non-increasing through the course of the iterations of this alternating minimization algorithm.

\begin{theorem}\label{thm:algo}
Let $P$ be a distribution on $\mathbb{R}^d$ with unknown mean $\bm\mu$ and unknown covariance matrix $\Sigma\preceq \sigma^2 I$. Let $\delta\in(0,1/5)$, $c_1>1$ and $p\in(0,1]$ be fixed. Let $\tau\in(0,1]$, {$c_1'=c^2_1 \min\bre*{c^2_1\log c^2_1+1-c^2_1,1}$, $n>\max\bre*{90,\frac{e}{c_1'\delta^2f(\tau)}d}\log(d/\delta)$, $\alpha=\frac{ed\log(d/\delta)}{n\delta^2c'_1}$}. Let $\epsilon\geq 0$ be such that $0< \epsilon'\coloneqq \epsilon +\alpha \leq \check{\epsilon}< f(\tau)$. Given an $\epsilon$-fraction corrupted set of $n$ datapoints from $P$, with probability at least $1-5\delta$, {all} the iterates of Algorithm~\ref{alg:1} (for $t\geq 1$) satisfy 
\begin{equation}
    \begin{split}
    \|{\bm x}^{(t)}-{ {\bm\mu}}\|_2 \leq\, & \sigma\left[\gamma(\epsilon')\para*{c_2^{(0)}\gamma(\check{\epsilon})^{t-1}+\frac{1-\gamma(\check{\epsilon})^{t-1}}{1-\gamma(\check{\epsilon})}\beta(\check{\epsilon})}+\beta(\epsilon')\right]+c_1\sigma\sqrt{\frac{\epsilon}{(1-\alpha)(1-\epsilon)}}\\
        &+{\sigma\sqrt{\alpha\delta}\para*{1+2\sqrt{\frac{c_1'}{e\log(d/\delta)}}}}
    \end{split}
\end{equation}

where $c_2^{(0)}$ is given in Algorithm~\ref{alg:1}, and
\begin{align}
    f(\tau) &= \frac{3\tau+\tau^2-\sqrt{\tau^4+2\tau^3+5\tau^2}}{2(1+\tau)}\label{eq:f}\\
     \gamma(\epsilon) &= \sqrt{\frac{\epsilon/\tau}{(1-\epsilon/\tau)(1-\epsilon-\epsilon/\tau)}}\label{eq:gamma}\\
    \beta(\epsilon)&=c_1\para*{(1-\epsilon/\tau)^{-1/2}+(1-\epsilon)^{-1/2}}\sqrt{\frac{\epsilon/\tau}{1-\epsilon-\epsilon/\tau}} \label{eq:beta}.
\end{align}
The output of Algorithm \ref{alg:1} at the end of $T = 1+\frac{\log c_2^{(0)}}{|\log \gamma(\check{\epsilon})|}=O\para*{\frac{\log d}{|\log \check{\epsilon}|}}$ (when $c_2^{(0)}\geq \frac{\beta(\check{\epsilon})}{1-\gamma(\check{\epsilon})}$) or $T=1$ (when $c_2^{(0)}< \frac{\beta(\check{\epsilon})}{1-\gamma(\check{\epsilon})}$) iterations is order-optimal:
\begin{equation}
\begin{split}
    \|\bm x^{(T)}-\bm\mu\|_2 &\leq \sigma\left[\gamma(\epsilon')\para*{1+\frac{\beta(\check{\epsilon})}{1-\gamma(\check{\epsilon})}}+\beta(\epsilon')\right]+c_1\sigma\sqrt{\frac{\epsilon}{(1-\alpha)(1-\epsilon)}}+{\sigma\sqrt{\alpha\delta}\para*{1+2\sqrt{\frac{c_1'}{e\log(d/\delta)}}}}\\
    &= O(\sigma\sqrt{\epsilon'}).
\end{split}
\end{equation}
\end{theorem}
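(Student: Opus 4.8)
The plan is an induction on the iteration count $t$, using the scalar sequence $c_2^{(t)}$ produced by the algorithm as a deterministic ``error envelope,'' and then solving the resulting affine recursion in closed form. The first step is to fix a single good event of probability at least $1-5\delta$ (the $1-4\delta$ event underlying Theorems~\ref{thm:l0} and~\ref{thm:lp2}, intersected with a $1-\delta$ event controlling the coordinate-wise median initialization). On this event I would record: (a) the structural facts from the appendix that drive Theorems~\ref{thm:l0}/\ref{thm:lp2}, namely the existence of a good core $\bm G^*$ of size at least $(1-\epsilon')n$ whose sample covariance around $\bm\mu$ has spectral norm $O(\sigma^2 n)$, together with the bulk-concentration of inliers within $\sigma\sqrt{d/(\alpha\delta)}$ of $\bm\mu$ and the bound $\norm{\bar{\bm y}_{\bm G^*}-\bm\mu}_2=O(\sigma\sqrt{\epsilon'})$ obtained from Lemma~\ref{lem:tv}; and (b) the standard bound $\norm{\bm x^{(0)}-\bm\mu}_2=O(\sigma\sqrt{d})$ for the coordinate-wise median under bounded variance, which with the choice $c_2^{(0)}=3\sqrt{d}+2c_1$ anchors the base case. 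The backbone of everything is that the update $c_2^{(t+1)}=\gamma(\check{\epsilon})c_2^{(t)}+\beta(\check{\epsilon})$ is a contraction, which holds exactly because $\check{\epsilon}<f(\tau)$.

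The inductive step is the core and splits into three moves. \emph{Feasibility/sparsity:} carrying the invariant that $\bm x^{(t)}$ is within (essentially) $c_2^{(t)}\sigma$ of $\bm\mu$, I would expand $\sum_{i\in\bm G^*}(\bm y_i-\bm x^{(t)})(\bm y_i-\bm x^{(t)})^\top$ around the sample mean $\bar{\bm y}_{\bm G^*}$ --- the cross term vanishes --- to get $\lambda_{\max}\le O(\sigma^2 n)+|\bm G^*|\,\norm{\bar{\bm y}_{\bm G^*}-\bm x^{(t)}}_2^2\le(c_1^2+(c_2^{(t)})^2)\sigma^2 n$, so the indicator of $[n]\setminus\bm G^*$ is feasible for $\mathcal H(\bm x^{(t)},c_2^{(t)})$ in~\eqref{eq:Hset}; hence the $\ell_p$-minimizer satisfies $\norm{\bm h^{(t)}}_p^p\le|[n]\setminus\bm G^*|\le\epsilon' n\le\check{\epsilon} n$. \emph{One-step error:} now $\bm h^{(t)}$ is a sparse feasible point of the $\ell_p$ program at the inflated level $c_1^2+(c_2^{(t)})^2$, so I would rerun the total-variation argument of Theorem~\ref{thm:lp2} for the \emph{thresholded} weighted average of Step~2; since at most $\norm{\bm h^{(t)}}_p^p/\tau^p\le\epsilon' n/\tau$ coordinates exceed $\tau$, the effective outlier mass among the retained points is $O(\epsilon'/\tau)$, which is precisely what produces $\gamma(\epsilon')$ and $\beta(\epsilon')$ and yields $\norm{\bm x^{(t+1)}-\bm\mu}_2\le\sigma[\gamma(\epsilon')c_2^{(t)}+\beta(\epsilon')]+c_1\sigma\sqrt{\epsilon/((1-\alpha)(1-\epsilon))}+\sigma\sqrt{\alpha\delta}(1+2\sqrt{c_1'/(e\log(d/\delta))})$. \emph{Closing the loop:} since $\gamma,\beta$ are nondecreasing and $\epsilon'\le\check{\epsilon}$, this one-step bound is dominated by $c_2^{(t+1)}\sigma$ up to the irreducible $O(\sigma\sqrt{\epsilon'})$ terms, which re-establishes the invariant at $t+1$; relabeling $t\mapsto t-1$ gives exactly the per-iterate inequality claimed in the theorem.

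It remains to resolve the recursion. Unrolling $c_2^{(t+1)}=\gamma(\check{\epsilon})c_2^{(t)}+\beta(\check{\epsilon})$ gives $c_2^{(t-1)}=c_2^{(0)}\gamma(\check{\epsilon})^{t-1}+\frac{1-\gamma(\check{\epsilon})^{t-1}}{1-\gamma(\check{\epsilon})}\beta(\check{\epsilon})$; dividing by $1-\gamma(\check{\epsilon})$ is justified by the elementary computation that $\gamma(\epsilon)<1\iff(1+\tau)(\epsilon/\tau)^2-(3+\tau)(\epsilon/\tau)+1>0\iff\epsilon<f(\tau)$, i.e.\ $f(\tau)$ in~\eqref{eq:f} is $\tau$ times the smaller root of $(1+\tau)u^2-(3+\tau)u+1$ (and one reads off $f(1)=1-1/\sqrt{2}$). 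For the stated $T=1+\log c_2^{(0)}/|\log\gamma(\check{\epsilon})|$ we get $c_2^{(0)}\gamma(\check{\epsilon})^{T-1}=1$, hence $c_2^{(T-1)}\le1+\beta(\check{\epsilon})/(1-\gamma(\check{\epsilon}))=O(1)$; plugging this into the per-iterate bound and using $\gamma(\epsilon'),\beta(\epsilon')=O(\sqrt{\epsilon'})$ together with the remaining terms (also $O(\sigma\sqrt{\epsilon'})$) gives $\norm{\bm x^{(T)}-\bm\mu}_2=O(\sigma\sqrt{\epsilon'})$. The alternative case $c_2^{(0)}<\beta(\check{\epsilon})/(1-\gamma(\check{\epsilon}))$ is exactly the regime in which the recursion would increase, so the while-condition $c_2^{(t)}<c_2^{(t-1)}$ already fails after one iteration and $T=1$; there $c_2^{(0)}$ is itself $O(1)$ and the bound is immediate.

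The main obstacle I anticipate is the per-iteration constant bookkeeping in the inductive step: (i) showing that the inflation $(c_1^2+c_2^2)\sigma^2 n$ in~\eqref{eq:Hset} is exactly generous enough for the $[n]\setminus\bm G^*$ indicator to be feasible at $\bm x^{(t)}$ under the carried error bound --- this is what forces the ``centered at $\bar{\bm y}_{\bm G^*}$'' expansion and the precise choice $c_2^{(0)}=3\sqrt{d}+2c_1$; (ii) propagating the invariant despite the mismatch between the conservative $\check{\epsilon}$ used to drive $c_2^{(t)}$ and the true $\epsilon'$ that controls the error, which is handled by monotonicity of $\gamma,\beta$ but needs care near the boundary $\epsilon'\approx\check{\epsilon}$; and (iii) adapting the Theorem~\ref{thm:lp2} estimate to the hard-thresholded average, where the mass of discarded inliers must be controlled by $\epsilon'/\tau$. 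The algebraic identification of $f(\tau)$ as the contraction threshold is routine but must be done explicitly.
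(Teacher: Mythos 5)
Your proposal follows essentially the same route as the paper's proof: induction on $t$ with $c_2^{(t)}$ as the deterministic error envelope, feasibility of the inlier-indicator $\bm h^*$ at the inflated level $(c_1^2+(c_2^{(t)})^2)\sigma^2 n$ via the expansion of the covariance around the inlier average, the $\ell_p\!\to\!\ell_1$ sparsity bound, the $\epsilon'/\tau$ thresholding estimate, the total-variation one-step contraction (Lemmas~\ref{lem:tv} and~\ref{lem:res}), and the unrolled affine recursion with $f(\tau)$ as the contraction threshold. The one refinement the paper makes, which you flag only as an "obstacle," is to anchor the induction invariant at $\bar{\bm x}^*$ (the average of $\bm G^*$) rather than at $\bm\mu$, so that $\|\bm x^{(t)}-\bar{\bm x}^*\|_2\le c_2^{(t)}\sigma$ closes exactly without the irreducible $c_1\sigma\sqrt{\epsilon/((1-\alpha)(1-\epsilon))}$ and $\sigma\sqrt{\alpha\delta}(\cdot)$ terms leaking into the recursion; these are added once at the very end by the triangle inequality.
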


The proof is deferred to the Appendix, but we briefly discuss the design of the algorithm and the high-level approach. Let $\bar{\bm x}^*$ be the average of the set of inlier datapoints that are within a distance of $\sigma\sqrt{\frac{d}{\alpha\delta}}$ from $\bm\mu$. We use induction to show that $\|\bm x^{(t)} -\bar{\bm x}^*\|\leq c_2^{(t)}\sigma$. We show in the Appendix that the coordinate-wise median satisfies $\|\bm x^{(0)} -\bm\mu\|_2\leq c_2^{(0)}\sigma$ with high probability. Firstly,  observe that in Step 1 of Algorithm~\ref{alg:1}, the constraint on the spectral norm of the weighted covariance matrix around $\bm x^{(t)}$ is $\left(c_1^2+(c_2^{(t)})^2\right)\sigma^2n$ instead of $c_1^2\sigma^2n$ as in \eqref{objL1}. This ensures that with high probability that the optimization problem in Step 1 has a feasible point, and that the optimum solution satisfies $\|\bm h^{(t)}\|_p\leq (\epsilon'n)^{1/p}$. Secondly, we exploit the boundedness of $\|\bm h^{(t)}\|_p$ and the fact that the spectral norm of the weighted covariance matrix around $\bm x^{(t)}$ is bounded (similar to the idea used in Theorem 2), along with some concentration bounds to show that in each iteration the iterate $\bm x^{(t+1)}$ in Step 2 moves closer to $\mu$ than $\bm x^{(t)}$. Specifically, we show that $\|\bm x^{(t+1)}-\bm\mu\|_2\leq \gamma\|\bm x^{(t)} -\bm\mu\|_2+\beta\sigma\leq (\gamma c_2^{(t)}+\beta)\sigma=c_2^{(t+1)}\sigma $, where $\gamma<1$. From the proof we can see that it is not necessary to reach the global optimum in Step 1, we only need to find a feasible point whose $\ell_p$-norm is small enough.

% \begin{remark}
% Observe that in Theorems~\ref{thm:l0}, \ref{thm:lp2} and \ref{thm:algo}, $\alpha$ controls the error tolerance level. Also, the lower bound on the required number of datapoints is $\Omega\left(\frac{d \log d}{\alpha}\right)$, which is independent of the corruption level $\epsilon$. Previous works (see, e.g., ~\cite{7782980,diakonikolas2017being,convex}) do not consider a tolerance level, and in these works the lower bound on the required number of datapoints is inversely proportional to the corruption level $\epsilon$, which blows up as $\epsilon\to 0$. Moreover, $\epsilon$ is typically unknown in practice.  Specifying $\alpha$ to control the estimator's error helps us remove the dependence of the number of datapoints required on the fraction of corruption $\epsilon$.
% Note that we can recover the order-optimal results in the form as given in the previous works by setting $\alpha=O(\epsilon)$ in Theorems~\ref{thm:l0}, \ref{thm:lp2} and \ref{thm:algo}.
% The following corollary states this result.\\
% \end{remark}

\begin{remark}
The results of Theorems \ref{thm:l0}, \ref{thm:lp2} and \ref{thm:algo} can be easily extended to establish the estimators' closeness to the average of the datapoints before corruption, $\tilde{\bm\mu}=\frac{1}{n}\sum\limits_{i=1}^n\tilde{\bm y}_i$,  using the fact that $\tilde{\bm\mu}$ is close to $\bm\mu$, which is shown in the Appendix $($see \eqref{event:3}$)$. We obtain the following extension to the above theorems with the same probability guarantees:
\begin{align}
    \|\hat{\bm x}-\tilde{\bm\mu}\|_2\leq \|\hat{\bm x}- \bm\mu\|_2 + {\sigma\sqrt{\frac{d}{n\delta}}}.
\end{align}
Moreover, it can be also shown that the estimators are close to the average of inliers, that are at most a distance of {$\sigma\sqrt{\frac{d}{\alpha\delta}}=\sigma\sqrt{\frac{n\delta c'_1}{e\log(d/\delta)}}$} from $\bm\mu$.
\end{remark}

\begin{remark}
The initialization $c_2^{(0)}=3\sqrt{d}+2c_1$ can be replaced by a smaller value as long as it is possible to guarantee $\|\bm x^{(0)}-\bm\mu\|_2\leq c_2^{(0)}\sigma$ with high probability.
\end{remark}

An important aspect of the proposed algorithm is that it does not require the true fraction of outliers $\epsilon$ and is still order-optimal. To the best of our knowledge no other algorithm for our corruption model has this property.

% For a \emph{weaker} corruption model in which $(1-\epsilon)n$ datapoints are i.i.d according to an underlying distribution and the rest $\epsilon n$ datapoints are set by the adversary, the work in~\cite{depersin2019robust} proposes an outlier-fraction agnostic and order-optimal algorithm.

\subsection{Solving Step 1 of Algorithm~\ref{alg:1}}
\label{headings}

When we set $p=1$ in the objective $\|\bm h\|_p$ in Step~1 of Algorithm~\ref{alg:1}, the resulting problem is convex, and can be reformulated as the following packing SDP ~\cite{iyengar2005approximation} with $w_i\triangleq 1-h_i$, and $e_i$ being the $i$-th standard basis vector in $\mathbb{R}^n$. The details can be found in the Appendix.
\begin{align}\label{packing SDP}
   \max_{\bm w} \ \bm 1^\top \bm w \quad 
    s.t. &\  w_i \geq 0, \forall i\\
   & \sum_{i=1}^{n} w_i \begin{bmatrix}
    e_ie_i^\top &  \\
    &  (\bm y_i-\bm x)(\bm y_i-\bm x)^\top \end{bmatrix} \preceq \begin{bmatrix}
    I_{n\times n} &  \\
    &  c n \sigma^2 I_{d\times d} \end{bmatrix} \nonumber
\end{align}

When $0<p<1$, the equivalent objective function $\|\bm h\|_p^p = \sum_i h_i^p$ is concave, \textit{not }convex. So it may be difficult to find its global minimum. Nevertheless, we can iteratively construct and minimize a \textit{tight} upper bound on this objective function via iterative re-weighted $\ell_2$~\cite{4518498,gorodnitsky1997sparse} or  $\ell_1$ techniques~\cite{candes2008enhancing} from compressive sensing.\footnote{We observe that iterative re-weighted $\ell_2$ achieves better empirical performance.} And it is well-known in compressive sensing that such iterative re-weighted approaches often performs better than $\ell_1$~\cite{candes2008enhancing,4518498}.

\subsection{Complexity analysis}

Theorem~\ref{thm:algo} guarantees that the total number of iterations of Algorithm \ref{alg:1} required to achieve optimality is upper bounded by $O(\frac{\log d}{\log\abs*{\check{\epsilon}}})$. %We note that in practice Algorithm \ref{alg:1} often terminates in much fewer than $n$ iterations. 
In each iteration, the computational complexity of Step 2 is $O (nd)$. It follows easily from the proof of Theorem \ref{thm:algo}, that it suffices to solve the SDP in step 1 of Algorithm 1 (with $p=1$) to a \textit{constant} precision. As a result, the error is affected by a constant and thus remains order-optimal and the time complexity is $\tilde O(nd)$ parallelizable work using positive SDP solvers~\cite{Allen-Zhu2016} (the notation $\tilde{O}(m)$ hides the poly-log factors: $\tilde{O}(m)=O(m.\text{polylog}(m))$). {A comparison of our theoretical results with those in state-of-the-art works is given in Table \ref{table:theory}.}

If we use $\ell_p$ with $0<p<1$ in Step 1, we iteratively construct and minimize a \textit{tight} upper bound on the $\ell_p$ objective via iterative re-weighted $\ell_2$~\cite{4518498,gorodnitsky1997sparse} or iterative re-weighted $\ell_1$ techniques~\cite{candes2008enhancing}\footnote{We run fewer than 10 re-weighted iterations in our implementation.}. Minimizing the resulting weighted $\ell_1$ objective can be also solved very efficiently to a \emph{constant} precision by formulating it as a Packing SDP (see 
Appendix) with computational complexity of $\tilde{O}(nd)$~\cite{Allen-Zhu2016}. If we use iterative re-weighted $\ell_2$, minimizing the resulting weighted $\ell_2$ objective is \RED{a SDP constrained least squares problem, whose computational complexity is in general polynomial in both $d$ and $n$.} We will explore more  efficient solutions for this objective in future work.

\begin{table}[!ht]
    \centering
    \begin{tabular}{|c|c|c|c|c|}
    \hline
         Algorithm & Time complexity & Error guarantee & Breakdown point & Requires $\epsilon$  \\
         \hline
         Tukey median~\cite{zhu2020does} & NP-hard & $O(\sigma\sqrt{\epsilon})$ & $\frac{1}{d+1}$ & No\\
         \hline
         IF~\cite{diakonikolas2017being} & $\tilde{O}(nd^2)$ & $O(\sigma\sqrt{\epsilon})$ & $NA$ & Yes\\
         \hline
         GF~\cite{zhu2020robust} & $\tilde{O}(n^2d)$ & $O\para*{\sigma\frac{\sqrt{\epsilon}}{1-2\epsilon}}$ & $\frac{1}{2}$ & Yes\\
         \hline
         CDG~\cite{convex} & $\tilde{O}\para*{\frac{nd}{\epsilon^6}}$ & $O(\sigma\sqrt{\epsilon})$ & $\frac{1}{3}$ & Yes\\
         \hline
         QUE~\cite{NIPS2019_8839} & $\tilde{O}(nd)$ & $O(\sigma\sqrt{\epsilon})$ & NA & Yes\\
         \hline
         Proposed optimization problems $(\ell_p, p\in[0,1])$ & NA & $O\para*{\sigma\sqrt{\frac{\epsilon}{1-2\epsilon}}}$ & $\frac{1}{2}$ & No\\
         \hline
         Proposed algorithm $(p=1)$ & $\tilde{O}(nd)$ & $O(\sigma\sqrt{\epsilon})$ & $1-\frac{1}{\sqrt{2}}\approx 0.3$ & No\\
         \hline
    \end{tabular}
    \caption{Theoretical comparison}
    \label{table:theory}
\end{table}

\section{Empirical Studies}
\label{others}

In this section, we present empirical results on the performance of Algorithm~\ref{alg:1} and compare with the following state-of-the-art high dimension robust mean estimation methods: Iterative Filtering (IF)~\cite{diakonikolas2017being}, Generalized Filtering (GF)~\cite[Algorithm 2]{zhu2020robust}, the method proposed in~\cite{LRV} (denoted as LRV), the method {for bounded covariance distributions} in~\cite{convex} (denoted as CDG), and Quantum Entropy Scoring (QUE)~\cite{NIPS2019_8839}, which scores the outliers based on multiple directions.  {We briefly discuss the implementation details of these algorithms. We implemented the QUE method by utilizing the code provided in \cite{NIPS2019_8839}. In \cite{convex}, the authors provide a way to implement the CDG method approximately; we provide results for an exact implementation of the CDG method. Since the number of datapoints considered in the following simulations is less than the minimum requirement, a reasonable approach to compare the performance of the algorithms is to tune the hyper-parameters of all algorithms to get the best possible error. For example, the hyper-parameter $c_4$ that appears in CDG method~\cite{convex}, is set to be $1.05$, which produced the smallest empirical error. The value of $\sigma$ provided to the algorithms is not the theoretical value, but the empirical one (precisely, the spectral norm of the sample covariance matrix of $G^*$ (see Section \ref{sec:propose}))}. For evaluation purposes, we report the recovery error, which we define it as the $\ell_2$ distance of the estimated mean to the oracle solution, i.e., the average of the uncorrupted datapoints after corruption.
%. Note that the above methods as well as the proposed methods need to specify the upper bound on the spectral norm of the sample covariance matrix of the inlier points. Throughout the experiments, We also test the method proposed by Lai et al.~\cite{LRV} (denoted as LRV), which needs the knowledge of outlier fractions. We additionally compare with a recently proposed method by Cheng et al.~\cite{convex} (denoted as CDG), which needs the knowledge of \textit{both} outlier fraction and the upper bound on the spectral norm of inlier covariance matrix. The true values are provided to this method.
\subsection{Synthetic data}\label{synthetic data}

We consider two experimental settings. For the first setting we follow~\cite{NIPS2019_8839}. The dimension of the data is $d$, and the number of datapoints is $n$. The inlier datapoints are generated i.i.d. according to the standard Gaussian distribution with zero mean. Randomly (uniformly) chosen $\epsilon$ fraction of the datapoints are replaced by outliers. For the outliers, half of them are set to be $(\sqrt{d/2},\sqrt{d/2},0,...,0)^\top$, and the other half are set as $(\sqrt{d/2},-\sqrt{d/2},0,...,0)^\top$, so that their $\ell_2$ distances to the population mean $(0,...,0)^\top$ are all $\sqrt{d}$, similar to that of the inlier points. These are two clusters of outliers, and their $\ell_2$ distances to the true mean $\bm x$ are similar to that of the inlier points.  %We fix $p=0.5$ for the proposed $\ell_p$ method.
In Algorithm \ref{alg:1}, we set the threshold $\tau=0.6$, $c_1=1.1$, and we initialize $c_2^{(0)}$ as the $\ell_2$ error of the coordinate-wise Median relative to the true mean. {We implemented the IF method for sub-Gaussian parameters~\cite[Theorem 3.1]{diakonikolas2017being}.} We vary the total fraction $\epsilon$ of the outliers and report the average recovery error of each method over 10 trials in Table~\ref{settingB_100} with $d=100, n=1000$. The proposed $\ell_1$ and $\ell_{0.5}$ methods show significant improvements over the competing methods, and the $\ell_{0.5}$ method performs the best.
%
%, and in Table~\ref{settingA_400} with $d=400, n=800$
\begin{table}[!ht]\centering
	\caption{Recovery error of each method under different fraction $\epsilon$ of the outlier points ($d=100, n=1000$)}
	\label{settingB_100}
	\begin{tabular}{|c|c|c|c|c|c|c|c|c|c|}
		
		\hline
		$\epsilon$	   & IF & GF & QUE & LRV & CDG & $\ell_1$ & $\ell_{0.5}$\\
		\hline
		10\% & 0.124 &0.098 & 0.429 & 0.367 & 0.064 &\textbf{0.013}& \textbf{0.006} \\
		\hline
		20\% & 0.131&0.115 & 0.492 & 0.659 & 0.084 &\textbf{0.013}& \textbf{0.007} \\
	%	QUE: 5%1.0424  15%0.4638
	%	30\% & 0.0912 & 0.5076 & 0.8708 & 0.1691 &\textbf{0.0087}&\textbf{ 0.0056 } \\
		\hline
	\end{tabular}
\end{table}

\iffalse
\begin{table}[!ht]\centering
	\caption{Recovery error of each method under different fraction $\epsilon$ of the outlier points (Setting B, $d=400, n=800$)(running experiments)}
	\label{settingB_400}
	\begin{tabular}{|c|c|c|c|c|c|c|c|}
		
		\hline
		$\epsilon$	   & Iterative filter & QUE & LRV & CDG & $\ell_1$ & $\ell_p$ \\
		\hline
		10\% & 0.2588 & 1.070 & 0.5442 & 0.5895 & 0.6358 & 0.6833 \\
		\hline
		20\% & 0.5038 & 1.198 & 0.5381 & 0.5777 & 0.6189 & 0.6619 \\
		\hline
		30\% & 0.5031 & 1.401 & 0.5317 & 0.5644 & 0.5981 & 0.6325  \\
		\hline
	\end{tabular}
\end{table}
\fi

We also tested the performance of each method for different numbers of datapoints. The dimension of the data is fixed to be 100. The fraction of the corrupted points is fixed to be 20\%. We vary the number of datapoints from 100 to 1000, and report the average recovery error for each method over 50 trials in Table~\ref{settingB_100_02}. We can see that the performance of all methods get better when the number of datapoints is increased. Again, our proposed methods consistently perform better than the other methods.

\begin{table}[!ht]\centering
	\caption{Recovery error of each method w.r.t. different number of samples ($d=100, \epsilon=0.2$)}
	\label{settingB_100_02}
	\begin{tabular}{|c|c|c|c|c|c|c|c|}
		
		\hline
		$n$ &  IF & GF & QUE & LRV & CDG & $\ell_1$ & $\ell_{0.5}$\\
		\hline
		100 & 0.493&0.293 & 1.547 & 1.423 & 0.316 & \textbf{0.060}&\textbf{0.033}\\
		\hline
		200 & 0.313 & 0.239&1.038 & 1.084 & 0.198 & \textbf{0.036}&\textbf{0.021} \\
		\hline
		500 & 0.186&0.170 & 0.680 & 0.794 & 0.148 & \textbf{0.021}&\textbf{0.012} \\
		\hline
		1000 & 0.131&0.115 & 0.492 & 0.659 & 0.084 &\textbf{0.013}& \textbf{0.007} \\
	%	\hline
	%	2000 & 0.0892 & 0.4971 & 0.4962 & 0.0437 & \textbf{0.0083 }&\textbf{0.0043}\\
	%	\hline
	%	5000 & 0.0588 & 0.4076 & 0.4923 & 0.0170 & \textbf{0.0058}&\textbf{0.0042} \\
		\hline
	\end{tabular}
\end{table}

{The second experimental setting is as follows. The dimension of the data is $d$, and the number of datapoints is $n$. The inlier datapoints are generated i.i.d. such that each coordinate follows the Pareto distribution with scale parameter as $1$ and shape parameter as $2.5$. This implies that each coordinate has bounded second moment but no higher moments. All outliers are set to be the same vector $\bm v$ which is chosen as follows. Let $g$ be the average of the $\ell_2$ norms of the datapoints. The vector $\bm v$ is set as $(2+\sqrt{g/d},2+\sqrt{g/d},\dots,2+\sqrt{g/d})^\top$. Randomly (uniformly) chosen $\epsilon$ fraction of the datapoints are replaced by outliers. The LRV method is applicable only to cases where distributions have bounded fourth moment, and hence is not applicable to this setting. The CDG method was not implemented due to the high computational complexity of its implementation. We implemented the IF method for bounded second moment parameters ~\cite[Theorem 3.2]{diakonikolas2017being}. We implemented Algorithm \ref{alg:1} with $p=1$, $\tau=1$(no thresholding during the iterations), $c_1=1$ and $c_2^{(0)}=3\sqrt{d}+2c_1$. However, we threshold the last iterate $h^{(T)}$ with threshold $0.6$. We vary the total fraction $\epsilon$ of the outliers and report the average recovery error of each method over 100 trials in Table~\ref{table:pareto1} with $d=1000, n=100000$. The proposed $\ell_1$ method show significant improvement over the competing methods. We also tested the performance of each method for two different numbers of datapoints. The dimension of the data is fixed to be 1000. The fraction of the corrupted points is fixed to be 20\%. We consider the number of datapoints to be 10000 and 100000, and report the average recovery error for each method over 100 trials in Table~\ref{table:pareto2}. Again, the proposed method consistently perform better than the other methods.} 

\begin{table}[!ht]\centering
	\caption{Recovery error of each method under different fraction $\epsilon$ of the outlier points ($d=1000, n=100000$)}
	\label{table:pareto1}
	\begin{tabular}{|c|c|c|c|c|}
		
		\hline
		$\epsilon$	   & IF & GF & QUE  & $\ell_1$\\
		\hline
		10\% & 0.0164 & 0.0550 & 0.1096 &\textbf{0.0161}\\
		\hline
		20\% & 0.0845 & 0.0726 & 0.2489 &\textbf{0.0190} \\
	%	QUE: 5%1.0424  15%0.4638
	%	30\% & 0.0912 & 0.5076 & 0.8708 & 0.1691 &\textbf{0.0087}&\textbf{ 0.0056 } \\
		\hline
	\end{tabular}
\end{table}

\begin{table}[!ht]\centering
	\caption{Recovery error of each method w.r.t. different number of samples ($d=1000, \epsilon=0.2$)}
	\label{table:pareto2}
	\begin{tabular}{|c|c|c|c|c|}
		
		\hline
		$n$ &  IF & GF & QUE & $\ell_1$\\
		\hline
		10000 & 0.3027 & 0.0780 & 0.3991 & \textbf{0.0257}\\
		\hline
		100000 & 0.0845 & 0.0726 & 0.2489 &\textbf{0.0190}\\
	%	\hline
	%	2000 & 0.0892 & 0.4971 & 0.4962 & 0.0437 & \textbf{0.0083 }&\textbf{0.0043}\\
	%	\hline
	%	5000 & 0.0588 & 0.4076 & 0.4923 & 0.0170 & \textbf{0.0058}&\textbf{0.0042} \\
		\hline
	\end{tabular}
	
\end{table}

\subsection{Corrupted image dataset}

Here we use a dataset of real face images to test the effectiveness of the robust mean estimation methods. The average face of particular regions or certain groups of people is useful for many social and psychological studies~\cite{little2011facial}. Here we use 100 frontal human face images from the Brazilian face database\footnote{https://fei.edu.br/~cet/facedatabase.html} as inliers. For the outliers, we choose 15 face images of cats and dogs from the CIFAR10~\cite{krizhevsky2009learning} database. In order to be able to run the CDG method~\cite{convex}, we scale the size of images to 18 $\times$ 15 pixels, so the dimension of each datapoint is 270. The oracle solution is the average of the 100 human faces. Table~\ref{face} reports the recovery error, which is the $\ell_2$ distance of the estimated mean to the oracle solution, for each method. The proposed methods achieve smaller recovery error than the state-of-the-art methods. The sample inlier and outlier images as well as the estimated mean for each method can be found in the Appendix. % Fig. 4 shows the true averaging face of the inliers (oracle solution) as well as the estimated mean face by each method.

\begin{table}[!ht]\centering
	\caption{Recovery error of the mean face by each method}
	\label{face}
	\begin{tabular}{|c|c|c|c|c|c|c|}
		
		\hline
		Sample average  & IF & LRV & CDG & $\ell_1$ & $\ell_{0.5}$ \\
		\hline
		141  & 63 & 83 & 81 & \textbf{38} & \textbf{46} \\
		\hline

	\end{tabular}
\end{table}

\section{Conclusion} 

%\textbf{VV: These conclusions are a repeat of the abstract. Not sure if we need this. Better to talk about implications of the results and possible future directions.}

We formulated the robust mean estimation problems as the minimization of  the  $\ell_0$-`norm' of the introduced {outlier indicator vector}, under a second moment constraint on the datapoints. We further relaxed the $\ell_0$ objective to an $\ell_p$ $(0<p\leq 1)$ objective, and theoretically justified the new objective. The proposed $\ell_0$ and $\ell_p$ optimization problems do not need to know $\epsilon$, and still achieve information-theoretically order-optimal error bounds with optimal breakdown points. Then we proposed a computationally tractable iterative $\ell_p (0<p\leq 1)$ minimization and hard thresholding algorithm, which significantly outperforms state-of-the-art robust mean estimation methods, and is order-optimal. In the empirical studies, we observed strong numerical evidence that using the $\ell_p$ $(0<p\leq 1)$ norm in the optimization leads to sparse solutions; theoretically justifying this phenomenon is also of interest. It is worth noting that almost all previous polynomial-time methods (with dimension-independent error bound) need to know $\epsilon$, while our Algorithm 1 does not require to know $\epsilon$. It has a maximal breakdown point of $1-1/\sqrt{2}$, and has near-linear time complexity for $p=1$.  %Along these lines, two recent works~\cite{cheng2020high,zhu2020robust} show that any approximate stationary point of the objective in~\cite{convex} gives a near-optimal solution. It is of interest to see if a similar property can be shown for the proposed $\ell_0$ and  $\ell_p$ $(0<p\leq 1)$ objectives.
\bibliographystyle{IEEEtran}

% argument is your BibTeX string definitions and bibliography database(s)
\bibliography{main}

% Generated by IEEEtran.bst, version: 1.14 (2015/08/26)
\begin{thebibliography}{10}
\providecommand{\url}[1]{#1}
\csname url@samestyle\endcsname
\providecommand{\newblock}{\relax}
\providecommand{\bibinfo}[2]{#2}
\providecommand{\BIBentrySTDinterwordspacing}{\spaceskip=0pt\relax}
\providecommand{\BIBentryALTinterwordstretchfactor}{4}
\providecommand{\BIBentryALTinterwordspacing}{\spaceskip=\fontdimen2\font plus
\BIBentryALTinterwordstretchfactor\fontdimen3\font minus
  \fontdimen4\font\relax}
\providecommand{\BIBforeignlanguage}[2]{{%
\expandafter\ifx\csname l@#1\endcsname\relax
\typeout{** WARNING: IEEEtran.bst: No hyphenation pattern has been}%
\typeout{** loaded for the language `#1'. Using the pattern for}%
\typeout{** the default language instead.}%
\else
\language=\csname l@#1\endcsname
\fi
#2}}
\providecommand{\BIBdecl}{\relax}
\BIBdecl

\bibitem{10.2307/1271538}
R.~A. Maronna and R.~H. Zamar, ``Robust estimates of location and dispersion
  for high-dimensional datasets,'' \emph{Technometrics}, vol.~44, no.~4, pp.
  307--317, 2002.

\bibitem{huber2011robust}
P.~J. Huber, \emph{Robust statistics}.\hskip 1em plus 0.5em minus 0.4em\relax
  New York: Wiley, 2004.

\bibitem{maronna2018robust}
R.~A. Maronna, R.~D. Martin, V.~J. Yohai, and M.~Salibi{\'a}n-Barrera,
  \emph{Robust statistics: theory and methods (with R)}.\hskip 1em plus 0.5em
  minus 0.4em\relax New York: Wiley, 2019.

\bibitem{NIPS2019_8839}
Y.~Dong, S.~Hopkins, and J.~Li, ``Quantum entropy scoring for fast robust mean
  estimation and improved outlier detection,'' in \emph{Advances in Neural
  Information Processing Systems 32}.\hskip 1em plus 0.5em minus 0.4em\relax
  Curran Associates, Inc., 2019, pp. 6067--6077.

\bibitem{10.1145/3154503}
Y.~Chen, L.~Su, and J.~Xu, ``Distributed statistical machine learning in
  adversarial settings: Byzantine gradient descent,'' in \emph{Proc. ACM
  Measurement and Analysis of Computing Systems}, vol.~1, no.~2.\hskip 1em plus
  0.5em minus 0.4em\relax ACM New York, NY, USA, 2017, pp. 1--25.

\bibitem{yin2018byzantine}
D.~Yin, Y.~Chen, R.~Kannan, and P.~Bartlett, ``Byzantine-robust distributed
  learning: Towards optimal statistical rates,'' in \emph{International
  Conference on Machine Learning}, 2018, pp. 5650--5659.

\bibitem{bubeck2013bandits}
S.~Bubeck, N.~Cesa-Bianchi, and G.~Lugosi, ``Bandits with heavy tail,''
  \emph{IEEE Transactions on Information Theory}, vol.~59, no.~11, pp.
  7711--7717, 2013.

\bibitem{LRV}
K.~A. {Lai}, A.~B. {Rao}, and S.~{Vempala}, ``Agnostic estimation of mean and
  covariance,'' in \emph{2016 IEEE 57th Annual Symposium on Foundations of
  Computer Science (FOCS)}, 2016, pp. 665--674.

\bibitem{Tukey1975MathematicsAT}
J.~W. Tukey, ``Mathematics and the picturing of data,'' in \emph{Proceedings of
  the International Congress of Mathematicians}, vol.~2, Vancouver, August
  1975, pp. 523--531.

\bibitem{donoho1992breakdown}
D.~L. Donoho, M.~Gasko \emph{et~al.}, ``Breakdown properties of location
  estimates based on halfspace depth and projected outlyingness,'' \emph{The
  Annals of Statistics}, vol.~20, no.~4, pp. 1803--1827, 1992.

\bibitem{zhu2020does}
B.~Zhu, J.~Jiao, and J.~Steinhardt, ``When does the {Tukey} median work?''
  \emph{arXiv preprint arXiv:2001.07805}, 2020.

\bibitem{diakonikolas2019recent}
I.~Diakonikolas and D.~M. Kane, ``Recent advances in algorithmic
  high-dimensional robust statistics,'' \emph{arXiv preprint arXiv:1911.05911},
  2019.

\bibitem{7782980}
I.~{Diakonikolas}, G.~{Kamath}, D.~M. {Kane}, J.~{Li}, A.~{Moitra}, and
  A.~{Stewart}, ``Robust estimators in high dimensions without the
  computational intractability,'' in \emph{2016 IEEE 57th Annual Symposium on
  Foundations of Computer Science (FOCS)}, 2016, pp. 655--664.

\bibitem{diakonikolas2017being}
I.~Diakonikolas, G.~Kamath, D.~M. Kane, J.~Li, A.~Moitra, and A.~Stewart,
  ``Being robust (in high dimensions) can be practical,'' in \emph{Proceedings
  of the 34th International Conference on Machine Learning-Volume 70}, 2017,
  pp. 999--1008.

\bibitem{steinhardt2018robust}
J.~Steinhardt, ``Robust learning: Information theory and algorithms,'' Ph.D.
  dissertation, Stanford University, 2018.

\bibitem{adams2020ma3k0}
S.~Adams, ``High-dimensional probability lecture notes,'' 2020, available
  online.

\bibitem{Mallat1993MatchingPW}
S.~Mallat and Z.~Zhang, ``Matching pursuits with time-frequency dictionaries,''
  \emph{IEEE Trans. Signal Process.}, vol.~41, pp. 3397--3415, 1993.

\bibitem{blumensath2009iterative}
T.~Blumensath and M.~E. Davies, ``Iterative hard thresholding for compressed
  sensing,'' \emph{Applied and computational harmonic analysis}, vol.~27,
  no.~3, pp. 265--274, 2009.

\bibitem{foucart2011hard}
S.~Foucart, ``Hard thresholding pursuit: an algorithm for compressive
  sensing,'' \emph{SIAM Journal on Numerical Analysis}, vol.~49, no.~6, pp.
  2543--2563, 2011.

\bibitem{blumensath2008iterative}
T.~Blumensath and M.~E. Davies, ``Iterative thresholding for sparse
  approximations,'' \emph{Journal of Fourier analysis and Applications},
  vol.~14, no.~5, pp. 629--654, 2008.

\bibitem{BOUCHOT2016412}
\BIBentryALTinterwordspacing
J.-L. Bouchot, S.~Foucart, and P.~Hitczenko, ``Hard thresholding pursuit
  algorithms: Number of iterations,'' \emph{Applied and Computational Harmonic
  Analysis}, vol.~41, no.~2, pp. 412--435, 2016, sparse Representations with
  Applications in Imaging Science, Data Analysis, and Beyond, Part II.
  [Online]. Available:
  \url{https://www.sciencedirect.com/science/article/pii/S1063520316000130}
\BIBentrySTDinterwordspacing

\bibitem{wang2010}
M.~Wang, W.~Xu, and A.~Tang, ``The limits of error correction with lp
  decoding,'' in \emph{2010 IEEE International Symposium on Information
  Theory}, 2010, pp. 749--753.

\bibitem{candes2005}
E.~Candes and T.~Tao, ``Decoding by linear programming,'' \emph{IEEE
  Transactions on Information Theory}, vol.~51, no.~12, pp. 4203--4215, 2005.

\bibitem{convex}
Y.~Cheng, I.~Diakonikolas, and R.~Ge, ``High-dimensional robust mean estimation
  in nearly-linear time,'' in \emph{Proceedings of the Thirtieth Annual
  ACM-SIAM Symposium on Discrete Algorithms}, San Diego, CA, USA, 2019, pp.
  2755--2771.

\bibitem{steinhardt2017resilience}
J.~Steinhardt, M.~Charikar, and G.~Valiant, ``Resilience: A criterion for
  learning in the presence of arbitrary outliers,'' \emph{arXiv preprint
  arXiv:1703.04940}, 2017.

\bibitem{cheng2020high}
Y.~Cheng, I.~Diakonikolas, R.~Ge, and M.~Soltanolkotabi, ``High-dimensional
  robust mean estimation via gradient descent,'' \emph{arXiv preprint
  arXiv:2005.01378}, 2020.

\bibitem{zhu2020robust}
B.~Zhu, J.~Jiao, and J.~Steinhardt, ``Robust estimation via generalized
  quasi-gradients,'' \emph{arXiv preprint arXiv:2005.14073}, 2020.

\bibitem{huber1964}
P.~J. Huber, ``Robust estimation of a location parameter,'' \emph{Ann. Math.
  Statist.}, vol.~35, no.~1, pp. 73--101, 03 1964.

\bibitem{iyengar2005approximation}
G.~Iyengar, D.~J. Phillips, and C.~Stein, ``Approximation algorithms for
  semidefinite packing problems with applications to maxcut and graph
  coloring,'' in \emph{International Conference on Integer Programming and
  Combinatorial Optimization}, Berlin, June 2005, pp. 152--166.

\bibitem{4518498}
R.~{Chartrand} and W.~{Yin}, ``Iteratively reweighted algorithms for
  compressive sensing,'' in \emph{2008 IEEE International Conference on
  Acoustics, Speech and Signal Processing}, 2008, pp. 3869--3872.

\bibitem{gorodnitsky1997sparse}
I.~F. Gorodnitsky and B.~D. Rao, ``Sparse signal reconstruction from limited
  data using focuss: a re-weighted minimum norm algorithm,'' \emph{IEEE Trans.
  Signal Process.}, vol.~45, no.~3, pp. 600--616, Mar. 1997.

\bibitem{candes2008enhancing}
E.~J. Candes, M.~B. Wakin, and S.~P. Boyd, ``Enhancing sparsity by reweighted
  l1 minimization,'' \emph{Journal of Fourier analysis and applications},
  vol.~14, no. 5-6, pp. 877--905, 2008.

\bibitem{Allen-Zhu2016}
Z.~Allen-Zhu, Y.~T. Lee, and L.~Orecchia, ``{Using optimization to obtain a
  width-independent, parallel, simpler, and faster positive sdp solver},'' in
  \emph{Proceedings of the Annual ACM-SIAM Symposium on Discrete Algorithms},
  Arlington, VA, Jan 2016.

\bibitem{little2011facial}
A.~C. Little, B.~C. Jones, and L.~M. DeBruine, ``Facial attractiveness:
  evolutionary based research,'' \emph{Philosophical Transactions of the Royal
  Society B: Biological Sciences}, vol. 366, no. 1571, pp. 1638--1659, 2011.

\bibitem{krizhevsky2009learning}
A.~Krizhevsky \emph{et~al.}, ``Learning multiple layers of features from tiny
  images,'' Master's thesis, University of Toronto, 2009.

\end{thebibliography}
\newpage
\section{Appendix}

\subsection{Technical preliminaries}
We introduce the following parameters that control the minimum number of datapoints needed, error and confidence level. Let $\delta>0$, $c_1>1$ and $c_1'=[c^2_1 \min\bre*{c^2_1\log c^2_1+1-c^2_1,1}]$. {Let $n>\frac{ed\log(d/\delta)}{\delta^2c'_1}$ and $\alpha=\frac{ed\log(d/\delta)}{n\delta^2c'_1}$.} Let $\bm S=\lbrace\tilde{\bm y}_1,\dots,\tilde{\bm y}_n\rbrace$ be a set of $n$ datapoints drawn from a distribution $P$ with mean $\bm\mu$ and covariance matrix $\Sigma\preceq \sigma^2 I$. We now define $\bm G $ as the set of datapoints which are less than {$\sigma\sqrt{\frac{d}{\alpha\delta}}=\sigma\sqrt{\frac{n\delta c'_1}{e\log(d/\delta)}}$} distance away from $\bm\mu$: 
\begin{align}
\bm I &=\left\lbrace i: \lVert \tilde{\bm y}_i-\bm\mu \rVert_2\leq \sigma\sqrt{\frac{d}{\alpha\delta}} \right\rbrace\label{eq:I1}\\
\bm G &=\lbrace \tilde{\bm y}_i: i\in\bm I\rbrace.\label{eq:Gdef}    
\end{align}
It follows from Lemma \ref{lem:size} that for the event
\begin{align}
    &\mathcal{E}_1 = \{|\bm I|\geq n-\alpha n\}\label{event:1},\\
    &\pr(\mathcal{E}_1) \geq 1-\delta\label{eq:prob1}.
\end{align}
Let $\mathcal{E}_2$ be the event:

\begin{equation}\label{event}
    \mathcal{E}_2 = \left\lbrace \lambda_{\mathrm{max}}\left(\sum_{i\in \bm I} (\tilde{\bm y}_i-\bm\mu)(\tilde{\bm y}_i-\bm\mu)^\top\right)\leq c^2_1   \sigma^2n \right\rbrace.
\end{equation}
It follows from Lemma \ref{lem:event2} that
\begin{align}
    \pr(\mathcal{E}_2)\geq 1-\delta.
\end{align}

Thus, we have that 
\begin{equation}\label{eq:existence}
    \pr(\mathcal{E}_1\cap \mathcal{E}_2) \geq 1-2\delta.
\end{equation}
For analysis purposes, we consider the \textit{far away} uncorrupted datapoints $\bm S\setminus \bm G$ as outliers also.

%Therefore, given enough datapoints, we can simply treat those uncorrupted but troublesome datapoints as outliers in our analysis. 

 Let $\lbrace \bm y_1,\dots,\bm y_n \rbrace$ be an $\epsilon$-corrupted version of the set $\bm S$. Let $\bm h^*$ be such that $h_i^*=1$ for the outliers (both far away uncorrupted datapoints and corrupted datapoints), and $h_i^*=0$ for the rest of uncorrupted datapoints, i.e.,
 \begin{equation}\label{eq:outlier}
 h_i^*=
     \begin{cases}
     1, &\text{ if\quad $\bm y_i\neq \tilde{\bm y}_i$\quad or\quad $\tilde{\bm y}_i\in \bm S\setminus \bm G$}\\
     0, &\text{otherwise}
     \end{cases}
 \end{equation}

Let the set of \textit{inliers} be given by $\bm G^*$:
\begin{align}
    \bm{I}^* &=\{i:h_i^*=0\}\label{eq:I2}\\
    \bm{G}^* &=\lbrace \bm y_i: i\in\bm{I}^*\rbrace=\lbrace \tilde{\bm y}_i: i\in\bm{I}^*\rbrace\label{eq:G2}
\end{align}

Note that $\bm I^*\subseteq \bm I$ and $\bm G^*\subseteq \bm G$. % \textcolor{red}{ We further define $\bar{\bm x}$ as the average of datapoints in $\bm G$.}
Since $(\tilde{\bm y}_i-\bm\mu)( \tilde{\bm y}_i-\bm\mu)^\top$ is positive semi-definite (PSD), we must have 
\begin{align*}
    \lambda_{\mathrm{max}}\left(\sum_{i=1}^{n} (1-h_i^*)(\bm y_i-\bm\mu)(\bm y_i-\bm\mu)^\top\right)\leq \lambda_{\mathrm{max}}\left(\sum_{i\in \bm I} (\tilde{\bm y}_i-\bm\mu)(\tilde{\bm y}_i-\bm\mu)^\top\right).
\end{align*}
This implies that
\begin{equation}\label{event2}
    \left\lbrace\lambda_{\mathrm{max}}\left(\sum_{i=1}^{n} (1-h_i^*)(\bm y_i-\bm\mu)(\bm y_i-\bm\mu)^\top\right)\leq c_1^2  \sigma^2n\right\rbrace\supseteq \mathcal{E}_2.
\end{equation}
Then, we have:
\begin{equation}\label{goodPoints_cov}
\pr\left\lbrace\lambda_{\mathrm{max}}\left(\sum_{i=1}^{n} (1-h_i^*)(\bm y_i-\bm\mu)(\bm y_i-\bm\mu)^\top\right)\leq c_1^2  \sigma^2n\right\rbrace
\geq\pr(\mathcal{E}_2) \geq 1-\delta.
\end{equation}
Our intended solution is to have $h_i=0$ for the inlier points and $h_i=1$ for the outlier points.

Let $\bar{\bm x}$ and $\bar{\bm x}^*$ be the averages of datapoints in $\bm G$ and $\bm G^*$ respectively. Applying Lemma C.2 from \cite{zhu2020robust}, we have
\begin{equation}\label{eq:res}
    \|\bar{\bm x}-\bar{\bm x}^*\|_2\leq \sqrt{\frac{c_1^2\sigma^2}{1-\alpha}.\frac{\epsilon}{1-\epsilon}}.
\end{equation}

We now introduce some more events (c.f.  \cite[Lemma A.18]{diakonikolas2017being}):
\begin{align}
    \mathcal{E}_3 &= \bre*{\left\lVert \frac{1}{n}\sum\limits_{i=1}^n (\tilde{\bm y}_i-\bm\mu) \right\rVert_2\leq {\sigma\sqrt{\frac{d}{n\delta}}}}\label{event:3}\\
    \mathcal{E}_4 &= \bre*{\left\lVert \frac{1}{n}\sum\limits_{i=1}^n (\bm z_i-\E[\bm z_1]) \right\rVert_2\leq {\sigma\sqrt{\frac{d}{n\delta}}}}\label{event:4},
\end{align}
where $\bm z_i=(\tilde{\bm y}_i - \bm\mu)\mathbbm{1}\bre*{ \lVert \tilde{\bm y}_i-\bm\mu \rVert_2> \sigma\sqrt{\frac{d}{\alpha\delta}}}$. From Lemma \ref{lem:size}, we get that
\begin{align}
    \pr(\mathcal{E}_3)\geq 1-\delta, \text{ and } \pr(\mathcal{E}_4)\geq 1-\delta\label{eq:event34}.
\end{align}
Let $\mathcal{E}$ be the event given by
\begin{align}\label{event:all}
    \mathcal{E}=\mathcal{E}_1\cap \mathcal{E}_2\cap \mathcal{E}_3\cap \mathcal{E}_4.
\end{align}
Let $\Delta_{n,\xi}$ be the set of probability vectors given by:
\begin{equation}\label{eq:Delta}
    \Delta_{n,\xi}=\bre*{\bm w \in\mathbb{R}^n: 0\leq w_i\leq\frac{1}{1-\xi}, \sum\limits_{i=1}^n w_i=1}.
\end{equation}
Let $\mathrm{TV}(.,.)$ denote the total variation distance between probability measures.
\begin{table}[!ht]
    \centering
    \caption{Description of variables}
    \begin{tabular}{|c|c|}
          \hline    
          Variable &  Description\\
          \hline\hline
          $\bm \mu$ & Mean (expected value) of population distribution\\
          \hline
          $\tilde{\bm\mu}$ & Average of all datapoints before corruption\\
          \hline
          $\bm G$ & Set of datapoints within {$\sigma\sqrt{\frac{d}{\alpha\delta}}=\sigma\sqrt{\frac{n\delta c'_1}{e\log(d/\delta)}}$} of  $\bm\mu$ before corruption\\
          \hline
          $\bm G^*$ & Maximal subset of $\bm G$ which is uncorrupted by adversary\\
          \hline
          $\bar{\bm x}$ & Average of vectors in $\bm G$, the set of datapoints within {$\sigma\sqrt{\frac{d}{\alpha\delta}}=\sigma\sqrt{\frac{n\delta c'_1}{e\log(d/\delta)}}$} of  $\bm\mu$\\
          \hline
          $\bar{\bm x}^*$ & Average of vectors in $\bm G^*$, the set of inliers within $\bm G$\\
          \hline
    \end{tabular}
    
    \label{tab:my_label}
\end{table}

\subsection{Technical Lemmas}\label{sec_lemma2}
%\begin{definition}[Resilience~\cite{steinhardt2017resilience}]\label{def:resilience}
% A set of points $\bm y_1,...,\bm y_m$ lying in $\mathbb{R}^d$ is ($\sigma$,$\beta$)-resilient in $\ell_2$-norm around a point $\bm x$ if, for all its subsets $\bm T$ of size at least $(1-\beta)m$, we have $\left \|\frac{1}{|\bm T|}\sum_{\bm y_i\in \bm T} \bm y_i - \bm x \right \|_2 \leq \sigma$.  
%\end{definition}

\begin{lemma}[Lemma 2.2 \cite{zhu2020robust}]\label{lem:res} For a finite set of datapoints $\{\bm y_i\}_{i=1}^n$, let $\bm x_{\bm w} =\sum\limits_{i \in [n]} w_i\bm y_i$ and $\Sigma_{\bm w}=\sum\limits_{i \in [n]} w_i(\bm y_i-\bm x_{\bm w})(\bm y_i-\bm x_{\bm w})^\top$ be the weighted average and weighted covariance with respect to a probability weight vector $\bm w$. Let $\bm w_1$ and $\bm w_2$ be two probability weight vectors such that $\mathrm{TV}(\bm w_1,\bm w_2)\leq\zeta.$ Then,
\begin{align}
    \|\bm x_{\bm w_1}-\bm x_{\bm w_2}\|_2\leq \left(\sqrt{\lambda_{\max}(\Sigma_{\bm w_1})}+\sqrt{\lambda_{\max}(\Sigma_{\bm w_2})}\right)\sqrt{\frac{\zeta}{1-\zeta}}
\end{align}
\end{lemma}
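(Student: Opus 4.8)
\textbf{Proof proposal for Lemma~\ref{lem:res}.}

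The plan is to reduce the claim to a statement about a coupling of the two probability weight vectors $\bm w_1$ and $\bm w_2$, and then control the mean discrepancy by the weighted covariances via Cauchy--Schwarz. First I would write $\bm x_{\bm w_1}-\bm x_{\bm w_2}=\sum_i(w_{1,i}-w_{2,i})\bm y_i$, and observe that this difference is unchanged if we replace each $\bm y_i$ by $\bm y_i-\bm z$ for any fixed $\bm z$, since $\sum_i(w_{1,i}-w_{2,i})=0$. This recentering freedom is the crucial device: I will recenter at $\bm x_{\bm w_1}$ when bounding one term and at $\bm x_{\bm w_2}$ when bounding the other.

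Next I would use the total-variation bound. Since $\mathrm{TV}(\bm w_1,\bm w_2)\le\zeta$, we can write $w_{1,i}-w_{2,i}=a_i-b_i$ where $a_i=(w_{1,i}-w_{2,i})_+$ and $b_i=(w_{1,i}-w_{2,i})_-$ are nonnegative, $\sum_i a_i=\sum_i b_i=\mathrm{TV}(\bm w_1,\bm w_2)\le\zeta$, and moreover $a_i\le w_{1,i}$ and $b_i\le w_{2,i}$ coordinatewise. Hence
\begin{align}
\|\bm x_{\bm w_1}-\bm x_{\bm w_2}\|_2
&=\Bigl\|\sum_i a_i(\bm y_i-\bm x_{\bm w_1})-\sum_i b_i(\bm y_i-\bm x_{\bm w_1})\Bigr\|_2\nonumber\\
&\le\Bigl\|\sum_i a_i(\bm y_i-\bm x_{\bm w_1})\Bigr\|_2+\Bigl\|\sum_i b_i(\bm y_i-\bm x_{\bm w_2})\Bigr\|_2,\nonumber
\end{align}
where in the second term I have recentered at $\bm x_{\bm w_2}$ instead (legitimate because $\sum_i b_i$ times a constant is cancelled by the corresponding piece of $\sum_i a_i$ times that constant — more carefully, one splits the original centered-at-$\bm x_{\bm w_1}$ expression and re-derives the $b$-term centered at $\bm x_{\bm w_2}$ using $\sum_i(a_i-b_i)=0$ again). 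For the first term, let $\theta=\sum_i a_i\le\zeta$ and apply Jensen/Cauchy--Schwarz with the subprobability weights $a_i$: for any unit vector $\bm v$,
\begin{align}
\Bigl(\sum_i a_i\langle\bm y_i-\bm x_{\bm w_1},\bm v\rangle\Bigr)^2
\le\theta\sum_i a_i\langle\bm y_i-\bm x_{\bm w_1},\bm v\rangle^2
\le\theta\sum_i w_{1,i}\langle\bm y_i-\bm x_{\bm w_1},\bm v\rangle^2
\le\theta\,\lambda_{\max}(\Sigma_{\bm w_1}),\nonumber
\end{align}
using $a_i\le w_{1,i}$. Taking the supremum over $\bm v$ gives $\|\sum_i a_i(\bm y_i-\bm x_{\bm w_1})\|_2\le\sqrt{\theta\,\lambda_{\max}(\Sigma_{\bm w_1})}\le\sqrt{\zeta\,\lambda_{\max}(\Sigma_{\bm w_1})}$, and symmetrically the $b$-term is at most $\sqrt{\zeta\,\lambda_{\max}(\Sigma_{\bm w_2})}$.

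Adding the two bounds yields $\|\bm x_{\bm w_1}-\bm x_{\bm w_2}\|_2\le\sqrt{\zeta}\bigl(\sqrt{\lambda_{\max}(\Sigma_{\bm w_1})}+\sqrt{\lambda_{\max}(\Sigma_{\bm w_2})}\bigr)$, which is slightly weaker than the claimed bound with $\sqrt{\zeta/(1-\zeta)}$. To recover the stated $\sqrt{\zeta/(1-\zeta)}$ I would sharpen the Cauchy--Schwarz step: instead of bounding $\sum_i a_i\langle\cdot\rangle^2$ by $\sum_i w_{1,i}\langle\cdot\rangle^2$ crudely, note that the "leftover" mass $w_{1,i}-a_i$ is itself a subprobability vector summing to $1-\theta$, and one can run the argument with the conditional second moment, i.e. bound $\sum_i a_i\langle\cdot\rangle^2\le\theta\cdot\frac{1}{1-\theta}\lambda_{\max}(\Sigma_{\bm w_1})$ by a min--max / worst-case placement argument (the variance of a subprobability chunk of mass $\theta$ drawn from a distribution with total variance $V$ is at most $\frac{\theta}{1-\theta}V$). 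This gives each term the factor $\sqrt{\zeta/(1-\zeta)}$ and completes the proof. The main obstacle is exactly this last refinement — getting the $1/(1-\zeta)$ rather than $1$ — which requires the observation that removing an $a_i$-weighted chunk leaves behind enough mass ($1-\theta\ge1-\zeta$) to "pay for" the variance, formalized via the decomposition of $\Sigma_{\bm w_1}$ into the chunk's contribution plus the remainder; since both are PSD this is a short but slightly delicate monotonicity argument. Everything else is Cauchy--Schwarz and the triangle inequality. $\hfill\blacksquare$
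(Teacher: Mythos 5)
First, a point of comparison: the paper does not prove this lemma at all --- it is imported verbatim as Lemma 2.2 of \cite{zhu2020robust}, where the argument runs through the normalized overlap distribution $\min(\bm w_1,\bm w_2)/(1-\mathrm{TV}(\bm w_1,\bm w_2))$ and a resilience (conditioning) bound applied once to $\bm w_1$ and once to $\bm w_2$. Your decomposition into the positive and negative parts $a_i,b_i$ is a close cousin of that argument, and the skeleton (Cauchy--Schwarz against the weighted covariance) is the right one, but two of your steps are genuinely wrong and the final constant does not come out.

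The recentering step fails: since $\sum_i b_i=\theta$, where $\theta=\mathrm{TV}(\bm w_1,\bm w_2)\neq 0$, replacing $\sum_i b_i(\bm y_i-\bm x_{\bm w_1})$ by $\sum_i b_i(\bm y_i-\bm x_{\bm w_2})$ shifts the expression by $\theta(\bm x_{\bm w_1}-\bm x_{\bm w_2})$; the correct identity is $\sum_i a_i(\bm y_i-\bm x_{\bm w_1})-\sum_i b_i(\bm y_i-\bm x_{\bm w_2})=(1-\theta)(\bm x_{\bm w_1}-\bm x_{\bm w_2})$, with an extra factor $(1-\theta)$. Consequently your intermediate conclusion $\|\bm x_{\bm w_1}-\bm x_{\bm w_2}\|_2\le\sqrt{\zeta}\,(\sqrt{\lambda_{\max}(\Sigma_{\bm w_1})}+\sqrt{\lambda_{\max}(\Sigma_{\bm w_2})})$ is not established, and it is in fact false --- note it would be \emph{stronger} than the lemma, not ``slightly weaker,'' since $\sqrt{\zeta}\le\sqrt{\zeta/(1-\zeta)}$. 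Concretely, take $n=2$, $\bm y_1=0$, $\bm y_2=1$ in $\mathbb{R}$, $\bm w_1=(1,0)$, $\bm w_2=(1-\zeta,\zeta)$: then $|\bm x_{\bm w_1}-\bm x_{\bm w_2}|=\zeta$ while $\sqrt{\zeta}\,(0+\sqrt{\zeta(1-\zeta)})=\zeta\sqrt{1-\zeta}<\zeta$; the lemma's bound equals $\zeta$ exactly here, so there is no slack to absorb errors. The second error is the proposed sharpening $\sum_i a_i\langle\bm y_i-\bm x_{\bm w_1},\bm v\rangle^2\le\frac{\theta}{1-\theta}\lambda_{\max}(\Sigma_{\bm w_1})$: in the same example the analogous sum for $\bm w_2$ equals $\zeta(1-\zeta)^2=(1-\zeta)\lambda_{\max}(\Sigma_{\bm w_2})$, which exceeds $\frac{\zeta}{1-\zeta}\lambda_{\max}(\Sigma_{\bm w_2})$ for small $\zeta$, so a mass-$\theta$ chunk can carry essentially all of the variance. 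The repair is to sharpen the \emph{first} moment rather than the second: with $s_i=\langle\bm y_i-\bm x_{\bm w_1},\bm v\rangle$ and $A=\sum_i a_i s_i$, the split $\bm w_1=\bm a+(\bm w_1-\bm a)$ together with $\sum_i w_{1,i}s_i=0$ gives $A^2/\theta+A^2/(1-\theta)\le\sum_i w_{1,i}s_i^2\le\lambda_{\max}(\Sigma_{\bm w_1})$, i.e.\ $|A|\le\sqrt{\theta(1-\theta)\lambda_{\max}(\Sigma_{\bm w_1})}$, and likewise for the $b$-term. Combined with the corrected identity and its $(1-\theta)$ factor this yields exactly the factor $\sqrt{\theta/(1-\theta)}$, and monotonicity in $\theta\le\zeta$ finishes the proof.
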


\begin{lemma}[Lemma 2.3 \cite{zhu2020robust}]\label{lem:tv}
Let $\bm w_1\in\Delta_{n,\epsilon_1}$ and $\bm w_2 \in \Delta_{n,\epsilon_2}$. Then
\begin{align}
    \mathrm{TV}(\bm w_1,\bm w_2)\leq \frac{\max\{\epsilon_1,\epsilon_2\}}{1-\min\{\epsilon_1,\epsilon_2\}}.
\end{align}
\end{lemma}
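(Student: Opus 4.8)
The plan is to prove the bound directly from the box constraints defining $\Delta_{n,\epsilon_1}$ and $\Delta_{n,\epsilon_2}$ in \eqref{eq:Delta}, together with the elementary identity $\mathrm{TV}(\bm w_1,\bm w_2)=\sum_{i\in A}(w_{1,i}-w_{2,i})$, where $A=\{i:w_{1,i}>w_{2,i}\}$ is the one-sided set on which $\bm w_1$ exceeds $\bm w_2$. Without loss of generality I assume $\epsilon_1\le\epsilon_2$, so that $\min\{\epsilon_1,\epsilon_2\}=\epsilon_1$ and $\max\{\epsilon_1,\epsilon_2\}=\epsilon_2$ (both $\mathrm{TV}$ and the claimed right-hand side are symmetric once the max and min are taken), and I dispose of the trivial cases $\epsilon_2=0$ (the constraints then force both vectors to be uniform, so $\mathrm{TV}=0$) and $\epsilon_1+\epsilon_2\ge 1$ (the right-hand side is then $\ge 1\ge\mathrm{TV}$).

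The first substantive step is to write the box constraint for $\bm w_2$, namely $w_{2,i}\le \tfrac{1}{(1-\epsilon_2)n}$ (I use the deletion normalization, comparing the weights against the uniform vector $u_i=1/n$, so that $\Delta_{n,\xi}$ is the set of $\xi$-deletions of the empirical distribution), in the equivalent ``deletion'' form $\bm u=(1-\epsilon_2)\bm w_2+\epsilon_2\bm r_2$, where $\bm r_2$ defined by $r_{2,i}=\big(u_i-(1-\epsilon_2)w_{2,i}\big)/\epsilon_2$ is a genuine probability vector (nonnegative precisely because of the box constraint, and summing to one because $\bm w_2$ and $\bm u$ both do). The box constraint for $\bm w_1$, namely $(1-\epsilon_1)w_{1,i}\le u_i$, then becomes, after substituting this decomposition of $\bm u$, the pointwise inequality
\[
(1-\epsilon_1)w_{1,i}-(1-\epsilon_2)w_{2,i}\le \epsilon_2\,r_{2,i},\qquad i\in[n].
\]

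The crux of the argument is to split the left-hand side as $(1-\epsilon_1)(w_{1,i}-w_{2,i})+(\epsilon_2-\epsilon_1)w_{2,i}$ and discard the second summand, which is nonnegative because $\epsilon_2\ge\epsilon_1$ and $w_{2,i}\ge0$; this yields $(1-\epsilon_1)(w_{1,i}-w_{2,i})\le \epsilon_2 r_{2,i}$ for every $i$. Summing this inequality \emph{only} over the excess set $A$, and using $\sum_{i\in A}r_{2,i}\le\sum_i r_{2,i}=1$, gives $(1-\epsilon_1)\,\mathrm{TV}(\bm w_1,\bm w_2)=(1-\epsilon_1)\sum_{i\in A}(w_{1,i}-w_{2,i})\le\epsilon_2$, which is exactly the claim after dividing by $1-\epsilon_1$.

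I expect the main obstacle to be resisting the temptation to argue symmetrically: the natural route of bounding $\mathrm{TV}(\bm w_1,\bm u)\le\epsilon_1$ and $\mathrm{TV}(\bm w_2,\bm u)\le\epsilon_2$ and invoking the triangle inequality only yields the weaker estimate $\epsilon_1+\epsilon_2$, which strictly exceeds $\tfrac{\epsilon_2}{1-\epsilon_1}$ whenever $0<\epsilon_1$ and $\epsilon_1+\epsilon_2\le1$. The sharp constant $\tfrac{1}{1-\epsilon_1}$ emerges only if one treats the two vectors asymmetrically --- scaling $\bm w_1$ by the \emph{smaller} factor $1-\epsilon_1$ and charging the discrepancy against the deletion mass $\epsilon_2\bm r_2$ of $\bm w_2$ --- and only if one sums over the one-sided excess set $A$ rather than over all coordinates, since summing over all $i$ reintroduces the slack $(\epsilon_2-\epsilon_1)w_{2,i}$ and degrades the constant. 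A sanity check on the extremal configuration (uniform weights on subsets $S_1,S_2$ of sizes $(1-\epsilon_1)n,(1-\epsilon_2)n$ with minimal overlap, where $w_{2,i}=0$ and $w_{1,i}=\tfrac{1}{(1-\epsilon_1)n}$ on $A=S_1\setminus S_2$) confirms that the discarded term vanishes there and that the bound is attained with equality.
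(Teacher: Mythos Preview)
Your argument is correct. The decomposition $\bm u=(1-\epsilon_2)\bm w_2+\epsilon_2\bm r_2$, the pointwise inequality $(1-\epsilon_1)(w_{1,i}-w_{2,i})\le\epsilon_2 r_{2,i}$ obtained by discarding the nonnegative term $(\epsilon_2-\epsilon_1)w_{2,i}$, and the one-sided summation over $A$ all go through as you describe, and the extremal example you give indeed shows the bound is tight.

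There is nothing to compare against in this paper: Lemma~\ref{lem:tv} is quoted from \cite{zhu2020robust} (their Lemma~2.3) and is stated here without proof. One small remark on notation: the definition \eqref{eq:Delta} as printed reads $0\le w_i\le\frac{1}{1-\xi}$, but every use of $\Delta_{n,\xi}$ in the paper (e.g.\ $\hat{\bm w}=\frac{\bm 1-\hat{\bm h}}{n-\|\hat{\bm h}\|_0}\in\Delta_{n,\beta}$) is consistent only with the constraint $w_i\le\frac{1}{(1-\xi)n}$, which is the normalization you adopt. Your parenthetical ``deletion normalization'' comment handles this correctly.
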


\begin{lemma}\label{lem:median}
Let $P$ be a distribution on $\mathbb{R}^d$ with mean $\bm\mu$ and covariance matrix $\Sigma\preceq \sigma^2 I$. Let $\epsilon\leq 1/3$. Given an $\epsilon$-fraction corrupted set of $n$ datapoints from $P$, the coordinate-wise median of the corrupted set, $\hat{\bm x}$, satisfies with probability at least $1-d\exp(-n/90)$ that
\begin{align}
    \|\hat{\bm x}-\bm\mu\|_2\leq 3\sigma\sqrt{d}.
\end{align}
\end{lemma}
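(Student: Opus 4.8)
The plan is to bound the deviation of each coordinate-wise median separately and then combine the $d$ coordinate bounds via the $\ell_2$ norm, paying an extra $\sqrt d$ factor. Fix a coordinate $j\in\{1,\dots,d\}$. Let $\tilde y_{1,j},\dots,\tilde y_{n,j}$ denote the $j$-th coordinates of the uncorrupted datapoints and $\mu_j$ the $j$-th coordinate of $\bm\mu$. Since the $j$-th marginal of $P$ has variance at most $\sigma^2$ (as $\Sigma\preceq\sigma^2 I$), Chebyshev's inequality gives $\pr(|\tilde y_{i,j}-\mu_j|>t\sigma)\le 1/t^2$ for any $t>0$. I would pick a suitable constant, say $t=3$ (so that $1/t^2 = 1/9$), and define the i.i.d.\ Bernoulli indicators $B_i = \mathbbm{1}\{\tilde y_{i,j}>\mu_j+3\sigma\}$ and $B_i' = \mathbbm{1}\{\tilde y_{i,j}<\mu_j-3\sigma\}$, each with success probability at most $1/9$.

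Next I would argue that, with high probability, strictly fewer than $n/2$ of the \emph{corrupted} datapoints have $j$-th coordinate exceeding $\mu_j+3\sigma$, and likewise fewer than $n/2$ fall below $\mu_j-3\sigma$; this forces the coordinate-wise median $\hat x_j$ to lie in $[\mu_j-3\sigma,\mu_j+3\sigma]$. The number of corrupted points above $\mu_j+3\sigma$ is at most $\epsilon n + \sum_{i=1}^n B_i$, since the adversary controls at most $\epsilon n$ points and the remaining ones contribute only through the $B_i$. With $\epsilon\le 1/3$ it suffices to show $\sum_i B_i < n/2 - n/3 = n/6$. Since $\E[\sum_i B_i]\le n/9 < n/6$, a Chernoff/Hoeffding bound on the sum of $n$ independent $\{0,1\}$ variables gives $\pr(\sum_i B_i \ge n/6)\le \exp(-cn)$ for an absolute constant $c$; one checks that choosing the constants so that the exponent is at least $n/90$ works (the gap between $1/9$ and $1/6$ is comfortably large, and the relative-entropy form of the Chernoff bound yields a constant well above $1/90$). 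The same bound applies to $\sum_i B_i'$. A union bound over the two tails and over the $d$ coordinates gives failure probability at most $2d\exp(-cn)\le d\exp(-n/90)$ after absorbing the factor of $2$ into the constant (or, more carefully, re-choosing $t$ slightly so that $2\exp(-cn)\le\exp(-n/90)$).

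On the good event, $|\hat x_j - \mu_j|\le 3\sigma$ for every $j$, hence
\begin{align}
\|\hat{\bm x}-\bm\mu\|_2 = \sqrt{\sum_{j=1}^d (\hat x_j-\mu_j)^2} \le \sqrt{d\cdot 9\sigma^2} = 3\sigma\sqrt d,
\end{align}
which is the claimed bound. The routine parts are the Chebyshev estimate for the marginals and the Chernoff bound for the binomial tail; the only place requiring a little care is tuning the numerical constants ($t$, the $1/6$ vs.\ $1/9$ gap, and the $1/90$ in the exponent) so that after the union bound over $d$ coordinates and the two tails the stated probability $1-d\exp(-n/90)$ comes out exactly, rather than with a worse constant. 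I expect the main (mild) obstacle to be precisely this bookkeeping of constants — in particular verifying that the Chernoff exponent for deviating a Bernoulli$(1/9)$ sum up to mass $1/6$ exceeds $1/90$ with room to spare for the factor-of-two union bound; everything else is standard.
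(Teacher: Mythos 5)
Your proposal is correct and follows essentially the same route as the paper: Chebyshev per coordinate to get failure probability $1/9$ at radius $3\sigma$, a Chernoff bound showing that fewer than $n/6$ of the uncorrupted points leave the interval, the observation that $n/6+n/3\le n/2$ keeps the median inside, and a union bound over the $d$ coordinates followed by the $\sqrt{d}$ conversion to the $\ell_2$ norm. The one difference is that the paper uses the single two-sided indicator $b_i=\mathbbm{1}\{|\tilde y_{i,j}-\mu_j|\geq 3\sigma\}$ rather than two one-sided tails, which removes the factor of $2$ you were worried about and gives $\pr\left(\sum_i b_i\geq n/6\right)\leq\exp\left(-\tfrac{(0.5)^2 n}{9(2+0.5)}\right)=\exp(-n/90)$ exactly.
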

\begin{proof}
We first show that with high probability the error in each dimension is bounded by $3\sigma$. Fix a coordinate, and let $\tilde{y}_i$, $y_i$, $\mu$ and $\hat x$ be the component of $\tilde{\bm y}_i$, $\bm y_i$, $\bm\mu$ and $\hat{\bm x}$ respectively in that coordinate. By Markov's inequality, we have
\begin{align}
    \pr(|\tilde y_i-\mu_i|\geq 3\sigma)\leq 1/9.
\end{align}
Let $b_i=1\{|\tilde y_i-\mu_i|\geq 3\sigma\}$. By Chernoff's inequality, we obtain
\begin{align}
    \pr\left(\sum\limits_{i=1}^n b_i\geq n/6\right)\leq \exp\left(-\frac{(0.5)^2n}{9(2+0.5)}\right)= \exp(-n/90).
\end{align}

Thus with high probability more than five-sixth of the datapoints satisfy $|\tilde y_i-\mu_i|\leq 3\sigma$, which implies that even if $\epsilon\leq 1/3$ fraction of datapoints are corrupted, we would have
\begin{align}
    |\hat x-\mu|\leq 3\sigma.
\end{align}

Applying union bound, we get that with probability at least $1-d\exp(-n/90)$, the error in each dimension is bounded by $3\sigma$ and hence $\|\hat{\bm x}-\bm\mu\|_2\leq 3\sigma\sqrt{d}$ holds.

\end{proof}

\begin{lemma}\label{lem:size} Let $0<\delta\leq 1$. Let $\mathcal{E}_1$, $\mathcal{E}_3$ and $\mathcal{E}_4$ be the events as described in \eqref{event:1}, \eqref{event:3} and \eqref{event:4}. Then,
\begin{equation*}
\pr(\mathcal{E}_1) \geq 1-\delta,\; \pr(\mathcal{E}_3) \geq 1-\delta,\text{ and }\pr(\mathcal{E}_4) \geq 1-\delta,\\     
\end{equation*}
\end{lemma}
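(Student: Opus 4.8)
The plan is to prove each of the three concentration bounds $\pr(\mathcal{E}_1)\geq 1-\delta$, $\pr(\mathcal{E}_3)\geq 1-\delta$, and $\pr(\mathcal{E}_4)\geq 1-\delta$ separately, since each follows from a second-moment (Chebyshev/Markov-type) argument applied to an appropriate vector-valued average. The common tool is the following elementary fact: if $\bm V_1,\dots,\bm V_n$ are i.i.d. zero-mean random vectors in $\mathbb{R}^d$ with $\E\norm{\bm V_1}_2^2 \leq M$, then $\E\norm{\tfrac{1}{n}\sum_i \bm V_i}_2^2 = \tfrac{1}{n}\E\norm{\bm V_1}_2^2 \leq M/n$, so by Markov's inequality $\pr\para*{\norm{\tfrac1n\sum_i\bm V_i}_2 > t} \leq M/(nt^2)$. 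I would state and prove this as a preliminary observation (or cite it), and then instantiate it three times.

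For $\mathcal{E}_3$, I would take $\bm V_i = \tilde{\bm y}_i - \bm\mu$, which is zero-mean with $\E\norm{\bm V_1}_2^2 = \Tr(\Sigma) \leq d\sigma^2$. Applying the observation with $M = d\sigma^2$ and $t = \sigma\sqrt{d/(n\delta)}$ gives $\pr(\mathcal{E}_3^c) \leq d\sigma^2/(n \cdot \sigma^2 d/(n\delta)) = \delta$, which is exactly what we want. For $\mathcal{E}_4$, I would take $\bm V_i = \bm z_i - \E[\bm z_1]$ where $\bm z_i = (\tilde{\bm y}_i-\bm\mu)\mathbbm{1}\bre*{\norm{\tilde{\bm y}_i - \bm\mu}_2 > \sigma\sqrt{d/(\alpha\delta)}}$; these are again i.i.d. and zero-mean, and since truncation only decreases the norm, $\E\norm{\bm z_1 - \E[\bm z_1]}_2^2 \leq \E\norm{\bm z_1}_2^2 \leq \E\norm{\tilde{\bm y}_1 - \bm\mu}_2^2 = \Tr(\Sigma) \leq d\sigma^2$, so the identical computation yields $\pr(\mathcal{E}_4^c) \leq \delta$.

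For $\mathcal{E}_1$, the statement $|\bm I| \geq n - \alpha n$ is about a count rather than an average, so I would argue slightly differently. Write $|\bm I| = \sum_{i=1}^n \mathbbm{1}\bre*{\norm{\tilde{\bm y}_i - \bm\mu}_2 \leq \sigma\sqrt{d/(\alpha\delta)}}$, so $n - |\bm I| = \sum_i \mathbbm{1}\bre*{\norm{\tilde{\bm y}_i - \bm\mu}_2 > \sigma\sqrt{d/(\alpha\delta)}}$. By Markov's inequality applied to $\norm{\tilde{\bm y}_1 - \bm\mu}_2^2$, each indicator has expectation at most $\Tr(\Sigma)/(\sigma^2 d/(\alpha\delta)) \leq \alpha\delta$, so $\E[n - |\bm I|] \leq \alpha\delta n$. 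Then Markov's inequality on the nonnegative integer random variable $n - |\bm I|$ gives $\pr(n - |\bm I| > \alpha n) = \pr(n-|\bm I| \geq \alpha n) \leq \E[n-|\bm I|]/(\alpha n) \leq \delta$ (being careful about strict versus weak inequalities, which is why I'd use $\pr(n - |\bm I| \geq \lceil \alpha n\rceil)$ or simply note $\{|\bm I| \geq n - \alpha n\} \supseteq \{n - |\bm I| < \alpha n\}^c$-complement bookkeeping), so $\pr(\mathcal{E}_1) \geq 1 - \delta$.

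I do not anticipate a genuine obstacle here: all three bounds are routine first/second-moment inequalities, and the only thing requiring care is bookkeeping — making sure the truncation radius $\sigma\sqrt{d/(\alpha\delta)}$ is exactly calibrated so that the Markov bound produces $\alpha\delta$ (and recalling $\alpha = ed\log(d/\delta)/(n\delta^2 c_1')$ is just the name for $\tfrac{d}{n}\cdot\tfrac{e\log(d/\delta)}{\delta^2 c_1'}$, which plays no role beyond fixing the radius), tracking the factor $1/n$ correctly in the variance of the average, and handling strict-versus-non-strict inequalities in the event definitions. If anything is slightly delicate it is that $\mathcal{E}_1$ counts points rather than averaging them, so one should not try to force it into the same "variance of the mean" template as $\mathcal{E}_3$ and $\mathcal{E}_4$; a direct Markov bound on the count of far-away points is cleaner.
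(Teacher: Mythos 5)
Your proposal is correct and follows essentially the same route as the paper: a double application of Markov's inequality for $\mathcal{E}_1$ (first to the count of far-away points, then to $\norm{\tilde{\bm y}_1-\bm\mu}_2^2$ to bound each indicator's expectation by $\alpha\delta$), and a second-moment Markov/Chebyshev bound on the norm of the i.i.d.\ average for $\mathcal{E}_3$ and $\mathcal{E}_4$. The only difference is cosmetic — you package the latter two cases in a single preliminary observation and spell out the variance bound $\E\norm{\bm z_1-\E[\bm z_1]}_2^2\leq\E\norm{\tilde{\bm y}_1-\bm\mu}_2^2\leq d\sigma^2$ for the truncated variables, which the paper leaves as ``similar reasoning.''
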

\begin{proof}
By Markov's inequality we have

\begin{align}
    \pr(|\bm G^c|>\alpha n)&\leq \frac{\E[|\bm G^c|]}{\alpha n}\\
    &=\frac{\E\left[\sum\limits_{i=1}^n 1\left\lbrace \lVert \tilde{\bm y}_i-\bm\mu \rVert_2> \sigma\sqrt{\frac{d}{\alpha\delta}} \right\rbrace\right]}{\alpha n}\\
    &=\frac{ \pr\left( \lVert \tilde{\bm y}_1-\bm\mu \rVert_2> \sigma\sqrt{\frac{d}{\alpha\delta}}\right)}{\alpha}.
\end{align}
Applying Markov's inequality again, we have
\begin{align}
    \pr\left( \lVert \tilde{\bm y}_1-\bm\mu \rVert_2> \sigma\sqrt{\frac{d}{\alpha\delta}}\right)&\leq \frac{\alpha\delta \E\left[ \lVert \tilde{\bm y}_1-\bm\mu \rVert^2_2\right]}{\sigma^2 d}\\
    &= \frac{\alpha \delta \mathrm{Tr}(\E[(\tilde{\bm y}_1-\bm\mu)(\tilde{\bm y}_1-\bm\mu)^\top])}{\sigma^2d}\\
    &\leq \frac{\alpha\delta\sigma^2d}{\sigma^2d}\\
    &=\alpha\delta.
\end{align}
Thus, we get
\begin{align}
    \pr(|\bm G^c|>\alpha n)&\leq \delta\\ \pr(|\bm G|\geq (1-\alpha) n)&\geq 1-\delta.
\end{align}
This proves the result for $\mathcal{E}_1$. Applying Markov's inequality again, we obtain
\begin{align}
    \pr\para*{\left\lVert \frac{1}{n}\sum\limits_{i=1}^n (\tilde{\bm y}_i-\bm\mu) \right\rVert_2\leq {\sigma\sqrt{\frac{d}{n\delta}}}} &\leq \frac{\E\bra*{\left\lVert \frac{1}{n}\sum\limits_{i=1}^n (\tilde{\bm y}_i-\bm\mu) \right\rVert^2_2}}{{\frac{\sigma^2 d}{n\delta}}}\\
    &= {\frac{n\delta}{\sigma^2 d}}\sum\limits_{k=1}^d \E\bra*{ (\tilde{\mu}_k-\mu_k) ^2}\\
    &\leq {\frac{n\delta}{\sigma^2 d}}.\frac{d\sigma^2}{n}\\
    &=\delta.
\end{align}
This proves the result for $\mathcal{E}_3$. By similar reasoning, the result for $\mathcal{E}_4$ follows.
\end{proof}

\begin{lemma}\label{lem:event2}
Let $0<\delta\leq 1$, $c_1>1$, $c_1'=[c^2_1 \min\bre*{c^2_1\log c^2_1+1-c^2_1,1}]$, {$n>\frac{ed\log(d/\delta)}{\delta^2c'_1}$ and $\alpha=\frac{ed\log(d/\delta)}{n\delta^2c'_1}$.} Let $\mathcal{E}_2$ be the event described in \eqref{event}. Then
\begin{equation*}
    \pr(\mathcal{E}_2)\geq 1-\delta.
\end{equation*}
\end{lemma}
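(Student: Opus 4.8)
The plan is to write $\sum_{i\in\bm I}(\tilde{\bm y}_i-\bm\mu)(\tilde{\bm y}_i-\bm\mu)^\top$ as a sum of independent positive semidefinite matrices of bounded operator norm and then apply a matrix Chernoff bound. First I would set $\bm u_i=(\tilde{\bm y}_i-\bm\mu)\,\mathbbm{1}\{\lVert \tilde{\bm y}_i-\bm\mu\rVert_2\le \sigma\sqrt{d/(\alpha\delta)}\}$, so that $\sum_{i\in\bm I}(\tilde{\bm y}_i-\bm\mu)(\tilde{\bm y}_i-\bm\mu)^\top=\sum_{i=1}^n \bm u_i\bm u_i^\top$. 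Each summand is PSD with $\lambda_{\max}(\bm u_i\bm u_i^\top)=\lVert\bm u_i\rVert_2^2\le L:=\sigma^2 d/(\alpha\delta)$, and since truncation can only shrink the second moment, $\E[\bm u_i\bm u_i^\top]\preceq \E[(\tilde{\bm y}_i-\bm\mu)(\tilde{\bm y}_i-\bm\mu)^\top]=\Sigma\preceq \sigma^2 I$; hence $\mu_{\max}:=\lambda_{\max}\bigl(\sum_{i=1}^n\E[\bm u_i\bm u_i^\top]\bigr)\le n\sigma^2$.

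Next I would invoke the matrix Chernoff tail bound (e.g., Tropp): for independent PSD $d\times d$ matrices $X_i$ with $\lambda_{\max}(X_i)\le L$ and any $t\ge\mu_{\max}:=\lambda_{\max}(\sum_i\E X_i)$,
$$\pr\Bigl\{\lambda_{\max}\bigl(\textstyle\sum_i X_i\bigr)\ge t\Bigr\}\le d\exp\!\Bigl(-\tfrac1L\bigl[t\log\tfrac{t}{\mu_{\max}}-t+\mu_{\max}\bigr]\Bigr).$$
Applying this with $X_i=\bm u_i\bm u_i^\top$ and $t=c_1^2\sigma^2 n$ (which exceeds $\mu_{\max}$ since $c_1>1$), and noting that the right-hand side is nondecreasing in $\mu_{\max}$ on $[0,t)$ so that I may replace $\mu_{\max}$ by its upper bound $n\sigma^2$, the bracket becomes $n\sigma^2(c_1^2\log c_1^2-c_1^2+1)=n\sigma^2 a$ with $a:=c_1^2\log c_1^2+1-c_1^2>0$. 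Using $\tfrac{n\sigma^2}{L}=\tfrac{n\alpha\delta}{d}$ and $n\alpha=\tfrac{ed\log(d/\delta)}{\delta^2 c_1'}$, this simplifies to
$$\pr\Bigl\{\lambda_{\max}\bigl(\textstyle\sum_{i\in\bm I}(\tilde{\bm y}_i-\bm\mu)(\tilde{\bm y}_i-\bm\mu)^\top\bigr)\ge c_1^2\sigma^2 n\Bigr\}\le d\exp\!\Bigl(-\tfrac{a n\alpha\delta}{d}\Bigr)=d\,(d/\delta)^{-ae/(\delta c_1')}.$$

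It then remains to show the exponent $ae/(\delta c_1')$ is at least $1$, for then the bound is at most $d\,(d/\delta)^{-1}=\delta$, which gives $\pr(\mathcal{E}_2)\ge 1-\delta$. Since $\delta\le 1$ it suffices to prove $ae/c_1'\ge 1$, i.e. $a/c_1'\ge 1/e$. Writing $x=c_1^2>1$ and $\phi(x)=x\log x+1-x$ (so $a=\phi(x)$ and $c_1'=x\min\{\phi(x),1\}$), I would verify the elementary inequality $\phi(x)\bigl/\bigl(x\min\{\phi(x),1\}\bigr)\ge 1/e$ for all $x>1$: one has $\phi(1)=0$, $\phi'(x)=\log x>0$, and $\phi(e)=1$, so $\phi$ is increasing with $\phi(x)\le 1\iff x\le e$; on the set $\{\phi(x)\le 1\}$ the ratio equals $1/x\ge 1/e$, while on $\{\phi(x)>1\}$ it equals $\phi(x)/x$, whose derivative $(x-1)/x^2>0$ makes it increasing, hence $\ge\phi(e)/e=1/e$. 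This closes the argument.

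The calculation is otherwise routine; the only real content is choosing a matrix concentration inequality whose leading factor is $d$ rather than exponential in $d$ (a covering/$\varepsilon$-net bound on the unit sphere would be far too lossy here), together with noticing that the somewhat unusual definition $c_1'=c_1^2\min\{c_1^2\log c_1^2+1-c_1^2,\,1\}$ is engineered precisely so that $ae/c_1'\ge 1$, with equality at $c_1^2=e$. A minor point to keep honest is the monotonicity-in-$\mu_{\max}$ step used to pass from the true $\mu_{\max}$ to the worst case $n\sigma^2$.
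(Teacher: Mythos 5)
Your proof is correct, and it reaches the conclusion by a somewhat different route than the paper. Both arguments rest on the same underlying tool — the matrix Chernoff bound for sums of independent PSD matrices with the dimensional prefactor $d$ (the paper quotes it as Lemma A.19 of Diakonikolas et al., whose tail estimate $d\bigl(e^\eta/(1+\eta)^{1+\eta}\bigr)^{M/L}$ is exactly your $d\exp\bigl(-\tfrac1L[t\log\tfrac{t}{\mu_{\max}}-t+\mu_{\max}]\bigr)$ after the substitution $t=(1+\eta)M$). The difference is in the execution: because the tail bound is stated with the threshold parameterized relative to the unknown $M=\mu_{\max}$, the paper splits into two cases according to whether $M$ is above or below $e^{-1}\delta c_1^2\sigma^2 n$, handling the small-$M$ case via the expectation bound $\E[\lambda_{\max}(S)]\leq(e^\theta-1)M/\theta+L\log(d)/\theta$ followed by Markov's inequality, and the large-$M$ case via the tail bound with $\eta=c_1^2-1$. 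You instead apply the tail bound once at the fixed threshold $t=c_1^2\sigma^2 n$ and observe that the bound is monotone nondecreasing in $\mu_{\max}$ on $[0,t)$ (since $\tfrac{d}{d\mu}[t\log(t/\mu)-t+\mu]=1-t/\mu<0$ there), which lets you substitute the worst case $\mu_{\max}=n\sigma^2$ and avoid the case split entirely. The price is that your final exponent condition $ea/c_1'\geq 1$ is more delicate than the paper's corresponding condition $ac_1^2/c_1'=a/\min\{a,1\}\geq 1$ (which is immediate), and requires the small monotonicity analysis of $\phi(x)/x$ that you supply, with equality at $c_1^2=e$; your verification of that elementary inequality is correct. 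Both proofs are valid; yours is arguably cleaner in structure, while the paper's avoids having to justify the monotonicity-in-$\mu_{\max}$ step by paying with a case analysis.
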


\begin{proof}
We adopt the approach in ~\cite[Lemma A.18 (iv)]{diakonikolas2017being}. Lemma A.19 from \cite{diakonikolas2017being} states that the following: Let $\{X_i\}_{i=1}^n$ be $d\times d$ positive semi-definite random matrices such that $\lambda_{\max}(X_i)\leq L$ almost surely for all $i$. Let $S=\sum\limits_{i=1}^n X_i$ and $M=\lambda_{\max}\para*{\E[S]}$. Then, for any $\theta>0$,
\begin{align}\label{eq:lem5_1}
    \E\bra*{\lambda_{\max}\para*{S}}\leq (e^\theta-1)M/\theta+L\log (d)/\theta,
\end{align}
and for any $\eta>0$,
\begin{align}\label{eq:lem5_2}
    \pr\para*{\lambda_{\max}\para*{S}\geq (1+\eta)M}\leq d\para*{\frac{e^\eta}{(1+\eta)^{1+\eta}}}^{M/L}.
\end{align}
We apply this result by assigning $X_i=(\tilde{\bm y}_i-\bm\mu)(\tilde{\bm y}_i-\bm\mu)^\top \mathbbm{1}\bre*{\|\tilde{\bm y}_i-\bm\mu\|_2\leq \sigma\sqrt{\frac{d}{\alpha\delta}}}$. Note that $\lambda_{\max}(X_i)\leq L=\frac{\sigma^2 d}{\alpha\delta}$ for all $i\in [n]$, and $M\leq n\lambda_{\max}(\E[X_1])\leq n\sigma^2$.
We consider two mutually exclusive cases:

1) Suppose that $M<e^{-1}\delta c_1^2\sigma^2 n$. Applying  \eqref{eq:lem5_1} with $\theta=1$, we obtain
\begin{align}
    \E[\lambda_{\max} (S)]\leq (e-1)M+L\log d.
\end{align}
Applying Markov's inequality, we obtain
\begin{align}
    \pr(\lambda_{\max} (S)\geq  c_1^2\sigma^2n) &\leq \frac{\E[\lambda_{\max} (S)]}{c_1^2\sigma^2n}\\
    &\leq \frac{(e-1)\delta c_1^2\sigma^2n}{ec_1^2\sigma^2n}+\frac{\sigma^2d\log d}{\alpha\delta c_1^2\sigma^2n}\label{eq:lem5_3}\\
    &\leq \frac{(e-1)\delta}{e}+\frac{\delta}{e}\label{eq:lem5_4}\\
    &=\delta.
\end{align}
The inequality in \eqref{eq:lem5_3} follows from the assumption that $M<e^{-1}\delta c_1^2\sigma^2 n$ and the inequality in \eqref{eq:lem5_4} follows from the fact that {$\alpha=\frac{ed\log(d/\delta)}{n\delta^2c'_1}$ and $c'_1\leq c_1^2$.}

2) Suppose that $M\geq e^{-1}\delta c_1^2\sigma^2 n$. Applying \eqref{eq:lem5_2} with $\eta=c_1^2-1$, we obtain
\begin{align}
    \pr(\lambda_{\max} (S)\geq  c_1^2\sigma^2n) \leq\; &\pr(\lambda_{\max} (S)\geq  c_1^2 M)\label{eq:lem5_5}\\
    \leq\; & d\para*{\frac{e^{c_1^2-1}}{(c_1^2)^{c_1^2}}}^{\frac{\delta c_1^2\sigma^2n}{e}.\frac{\alpha\delta}{\sigma^2 d}}\label{eq:lem5_6}\\
    \leq\; & \delta\label{eq:lem5_7}.
\end{align}
The inequality in \eqref{eq:lem5_5} follows from the fact that $M\leq n\sigma^2$, the inequality in \eqref{eq:lem5_7} follows from the fact that $e^\alpha<(1+\alpha)^{1+\alpha}$ for any $\alpha>0$, and the fact that {$\alpha=\frac{ed\log(d/\delta)}{n\delta^2c'_1}$ and $c_1'=[c^2_1 \min\bre*{c^2_1\log c^2_1+1-c^2_1,1}]$.}
\end{proof}

\begin{lemma}\label{lem:wmin}
Given a set of points $\bm y_i\in \mathbb{R}^d$, $i=1,\dots,n$, then for any $\bm w\in \mathbb{R}^n$ we have
\begin{align}
    {\bm x_w} \triangleq \frac{\sum\limits_{i=1}^n w_i\bm y_i}{\|\bm w\|_1}\in \mathop{\arg\min_{\bm x}}\lambda_{\max} \left(\sum_{i=1}^n w_{i} (\bm y_i -\bm x)(\bm y_i - \bm x)^\top\right)\label{eq:wmin}
\end{align}
\end{lemma}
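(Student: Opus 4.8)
The plan is to reduce everything to the classical ``parallel axis'' identity for weighted second-moment matrices, combined with the monotonicity of $\lambda_{\max}$ under the Loewner order. In the setting of the paper one always has $w_i = 1-h_i \in [0,1]$, so I would state the lemma for $\bm w$ with nonnegative entries and $\|\bm w\|_1 = \sum_{i=1}^n w_i > 0$; the degenerate case $\|\bm w\|_1 = 0$ forces the objective to be the zero matrix for every $\bm x$, so there $\bm x_w$ is vacuous and the claim is trivial.

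First I would fix an arbitrary $\bm x \in \mathbb{R}^d$, write $\bm y_i - \bm x = (\bm y_i - \bm x_w) + (\bm x_w - \bm x)$ inside each rank-one term, and expand the outer product into four pieces. Summing against the weights $w_i$, the two cross terms carry the factor $\sum_{i=1}^n w_i(\bm y_i - \bm x_w)$, which equals $\bm 0$ by the definition $\bm x_w = \|\bm w\|_1^{-1}\sum_{i=1}^n w_i \bm y_i$; the remaining pieces collect into
\begin{equation*}
\sum_{i=1}^n w_i (\bm y_i - \bm x)(\bm y_i - \bm x)^\top = \sum_{i=1}^n w_i (\bm y_i - \bm x_w)(\bm y_i - \bm x_w)^\top + \|\bm w\|_1 (\bm x_w - \bm x)(\bm x_w - \bm x)^\top .
\end{equation*}

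Next I would observe that the last summand is a nonnegative multiple of an outer product, hence positive semidefinite, so the left-hand side dominates $\sum_{i=1}^n w_i (\bm y_i - \bm x_w)(\bm y_i - \bm x_w)^\top$ in the Loewner order. Then, using the variational formula $\lambda_{\max}(M) = \max_{\|\bm v\|_2 = 1}\bm v^\top M \bm v$, the map $M \mapsto \lambda_{\max}(M)$ is monotone on symmetric matrices, i.e.\ $A \succeq B$ implies $\lambda_{\max}(A) \ge \lambda_{\max}(B)$. Applying this with $A = \sum_i w_i (\bm y_i - \bm x)(\bm y_i - \bm x)^\top$ and $B = \sum_i w_i (\bm y_i - \bm x_w)(\bm y_i - \bm x_w)^\top$ gives $\lambda_{\max}(A) \ge \lambda_{\max}(B)$ for every $\bm x$, which is exactly the optimality claimed in \eqref{eq:wmin}.

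I do not expect a genuine obstacle here: the proof is a two-line computation plus a standard monotonicity fact. The only points deserving a sentence of care are (i) the nonnegativity/nondegeneracy hypothesis on $\bm w$ that makes $\bm x_w$ well defined and the extra term PSD, which holds automatically in the paper since $w_i = 1-h_i \in [0,1]$ and not all $h_i$ equal $1$ along the algorithm's iterates; and (ii) invoking Loewner-monotonicity of $\lambda_{\max}$, which I would simply attribute to its variational characterization rather than reprove.
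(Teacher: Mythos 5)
Your proof is correct, but it takes a different route from the paper's. The paper argues via the max--min inequality: it writes $\lambda_{\max}$ through its variational characterization, exchanges $\min_{\bm x}$ and $\max_{\bm\nu}$ to get a lower bound, and then observes that for each fixed unit direction $\bm\nu$ the scalar least-squares problem $\min_{\bm x}\sum_i w_i\langle \bm y_i-\bm x,\bm\nu\rangle^2$ is solved at $\bm x_w$; chaining the inequalities shows the minimum of $\lambda_{\max}$ is attained at $\bm x_w$. You instead prove the matrix identity
\begin{equation*}
\sum_{i=1}^n w_i (\bm y_i - \bm x)(\bm y_i - \bm x)^\top = \sum_{i=1}^n w_i (\bm y_i - \bm x_w)(\bm y_i - \bm x_w)^\top + \Bigl(\textstyle\sum_i w_i\Bigr) (\bm x_w - \bm x)(\bm x_w - \bm x)^\top
\end{equation*}
and conclude by Loewner monotonicity of $\lambda_{\max}$. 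Your version is somewhat stronger in its conclusion: it shows the second-moment matrix at $\bm x_w$ is dominated in the positive-semidefinite order by the one at any other $\bm x$, so $\bm x_w$ simultaneously minimizes every monotone spectral functional (trace, any Schatten norm), not just $\lambda_{\max}$; the paper's argument is tied to $\lambda_{\max}$ specifically, though conceptually it is just the one-dimensional projection of your identity applied direction by direction. You are also right to flag the hypothesis on $\bm w$: both arguments need $w_i\ge 0$ with $\sum_i w_i>0$ (otherwise $\|\bm w\|_1\ne\sum_i w_i$, the cross terms do not cancel, and the extra term need not be PSD), so the lemma's phrasing ``for any $\bm w\in\mathbb{R}^n$'' is an overstatement that your restriction correctly repairs; this costs nothing since the paper only ever applies the lemma with $w_i=1-h_i\in[0,1]$.
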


\begin{proof}
We have
\begin{align}
    \min_{\bm x}\lambda_{\max} \left(\sum_{i=1}^n w_{i} (\bm y_i -\bm x)(\bm y_i - \bm x)^\top\right) = &\min_{\bm x}\max_{\bm \nu: \|\bm\nu\|_2=1}\sum_{i=1}^n w_{i} \dotp*{\bm y_i -\bm x,\bm\nu}^2\\
    \geq &\max_{\bm \nu: \|\bm\nu\|_2=1}\min_{\bm x}\sum_{i=1}^n w_{i} \dotp*{\bm y_i -\bm x,\bm\nu}^2\label{Leq:minRHS}\\
    =&\max_{\bm \nu: \|\bm\nu\|_2=1} \sum_{i=1}^n w_{i} \dotp*{\bm y_i-{\bm x_w},\bm\nu}^2\label{eq:minRHS}\\
    =& \lambda_{\max}\left(\sum_{i=1}^n w_{i} (\bm y_i -{\bm x_w})(\bm y_i - {\bm x_w})^\top\right).
\end{align}
The equality \eqref{eq:minRHS} follows from the fact that the minimum in the RHS of \eqref{Leq:minRHS} is attained at ${\bm x_w} = \frac{\sum\limits_{i=1}^n w_i\bm y_i}{\|\bm w\|_1}$. Consequently, \eqref{eq:wmin} holds.

\end{proof}

\begin{lemma}\label{lem:feasible} Let {$n>\frac{ed\log(d/\delta)}{\delta^2c'_1}$ and $\alpha=\frac{ed\log(d/\delta)}{n\delta^2c'_1}$.}
Suppose $\|\bm x-\bar{\bm x}^*\|_2\leq c_2\sigma$, where $\bar{\bm x}^*$ is the average of datapoints in $\bm G^*$, defined in \eqref{eq:G2}. Then on event $\mathcal{E}_2$ defined in \eqref{event}, $\bm h^*$ satisfies
\begin{align}
    \lambda_{\max} \left(\sum_{i=1}^n (1-h^*_{i}) (\bm y_i -\bm x)(\bm y_i - \bm x)^\top\right)\leq (c_1^2+c_2^2)\sigma^2n.
\end{align}
\end{lemma}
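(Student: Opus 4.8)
The plan is to reduce the claim to the spectral bound furnished by the event $\mathcal{E}_2$ via a bias--variance (parallel-axis) decomposition. First I would note that $1-h_i^*$ is the indicator of $i\in\bm I^*$, and that $\bm y_i=\tilde{\bm y}_i$ for every $i\in\bm I^*$ by the definition \eqref{eq:outlier}, so the matrix to be controlled is exactly $\sum_{i\in\bm I^*}(\tilde{\bm y}_i-\bm x)(\tilde{\bm y}_i-\bm x)^\top$. Passing to the variational characterization $\lambda_{\max}(M)=\max_{\|\bm\nu\|_2=1}\bm\nu^\top M\bm\nu$ and fixing a unit vector $\bm\nu$, I would write $\langle\tilde{\bm y}_i-\bm x,\bm\nu\rangle=\langle\tilde{\bm y}_i-\bar{\bm x}^*,\bm\nu\rangle+\langle\bar{\bm x}^*-\bm x,\bm\nu\rangle$; since $\bar{\bm x}^*$ is the mean of $\{\tilde{\bm y}_i:i\in\bm I^*\}$, the cross term sums to zero and
\[
\sum_{i\in\bm I^*}\langle\tilde{\bm y}_i-\bm x,\bm\nu\rangle^2=\sum_{i\in\bm I^*}\langle\tilde{\bm y}_i-\bar{\bm x}^*,\bm\nu\rangle^2+|\bm I^*|\,\langle\bar{\bm x}^*-\bm x,\bm\nu\rangle^2 .
\]

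Next I would bound the two terms separately. For the second (the bias term), Cauchy--Schwarz gives $\langle\bar{\bm x}^*-\bm x,\bm\nu\rangle^2\le\|\bar{\bm x}^*-\bm x\|_2^2\le c_2^2\sigma^2$ by hypothesis, and $|\bm I^*|\le n$, so this term is at most $c_2^2\sigma^2 n$, uniformly in $\bm\nu$. For the first (the variance term), the scalar function $t\mapsto\sum_{i\in\bm I^*}(\langle\tilde{\bm y}_i,\bm\nu\rangle-t)^2$ is minimized at $t=\langle\bar{\bm x}^*,\bm\nu\rangle$, whence $\sum_{i\in\bm I^*}\langle\tilde{\bm y}_i-\bar{\bm x}^*,\bm\nu\rangle^2\le\sum_{i\in\bm I^*}\langle\tilde{\bm y}_i-\bm\mu,\bm\nu\rangle^2$; enlarging the index set from $\bm I^*$ to $\bm I\supseteq\bm I^*$ only adds nonnegative summands, so this is at most $\sum_{i\in\bm I}\langle\tilde{\bm y}_i-\bm\mu,\bm\nu\rangle^2\le\lambda_{\max}\!\big(\sum_{i\in\bm I}(\tilde{\bm y}_i-\bm\mu)(\tilde{\bm y}_i-\bm\mu)^\top\big)$, which on $\mathcal{E}_2$ is at most $c_1^2\sigma^2 n$. (Equivalently, the minimizing property of $\bar{\bm x}^*$ is Lemma~\ref{lem:wmin} applied with uniform weights on $\bm I^*$.)

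Finally I would combine the pieces: taking the maximum over unit $\bm\nu$ and using $\max_{\bm\nu}\big(A(\bm\nu)+B(\bm\nu)\big)\le\max_{\bm\nu}A(\bm\nu)+\max_{\bm\nu}B(\bm\nu)$ gives $\lambda_{\max}\!\big(\sum_{i=1}^n(1-h_i^*)(\bm y_i-\bm x)(\bm y_i-\bm x)^\top\big)\le c_1^2\sigma^2 n+c_2^2\sigma^2 n=(c_1^2+c_2^2)\sigma^2 n$, which is the assertion. I do not expect a genuine obstacle here; the only point needing a little care is that the bound on the bias term must hold uniformly in $\bm\nu$ (it does, by Cauchy--Schwarz), whereas the variance-term inequality is pointwise in $\bm\nu$ and is only afterwards maximized. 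The hypotheses on $n$ and $\alpha$ do not enter this lemma directly --- they merely fix the definitions of $\bm G$, $\bm I^*$ and the event $\mathcal{E}_2$.
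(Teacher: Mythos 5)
Your proposal is correct and follows essentially the same route as the paper's proof: the same parallel-axis decomposition around $\bar{\bm x}^*$ with vanishing cross term, the bias term bounded by $|\bm I^*|\,\|\bm x-\bar{\bm x}^*\|_2^2\le c_2^2\sigma^2 n$, and the variance term controlled via the minimizing property of the mean (Lemma~\ref{lem:wmin}) together with the enlargement $\bm I^*\subseteq\bm I$ and the event $\mathcal{E}_2$. The only difference is cosmetic: you phrase the argument through the variational characterization of $\lambda_{\max}$, whereas the paper writes the same steps using subadditivity of $\lambda_{\max}$ on the matrix decomposition.
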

\begin{proof}
 Let $\bm I$ and $\bm I^*$ be the sets defined in \eqref{eq:I1} and \eqref{eq:I2}. We have
\begin{align}
    &\lambda_{\mathrm{max}}\left(\sum_{i=1}^{n} (1-h_i^*)(\bm y_i-\bm x)(\bm y_i-\bm x)^\top\right)\\
    =&\lambda_{\mathrm{max}}\left(\sum_{i\in \bm I^*} ({\bm y}_i-\bm x)({\bm y}_i-\bm x)^\top\right) \\
    =&\lambda_{\mathrm{max}}\left(\sum_{i\in \bm I^*} ({\bm y}_i-\bar{\bm x}^*+\bar{\bm x}^*-\bm x)({\bm y}_i-\bar{\bm x}^*+\bar{\bm x}^*-\bm x)^\top\right)\\
    \leq &\lambda_{\mathrm{max}}\left(\sum_{i\in \bm I^*} ({\bm y}_i-\bar{\bm x}^*)({\bm y}_i-\bar{\bm x}^*)^\top\right)+\lambda_{\mathrm{max}}\left(\sum_{i\in \bm I^*} (\bm x-\bar{\bm x}^*)(\bm x-\bar{\bm x}^*)^\top\right)\\ &+2\lambda_{\mathrm{max}}\left(\sum_{i\in \bm I^*} ({\bm y}_i-\bar{\bm x}^*)(\bar{\bm x}^*-\bm x)^\top\right)\\
    =&\lambda_{\mathrm{max}}\left(\sum_{i\in \bm I^*} ({\bm y}_i-\bar{\bm x}^*)({\bm y}_i-\bar{\bm x}^*)^\top\right)+|\bm I^*| \|\bm x-\bar{\bm x}^*\|^2
    + 0\\
    \leq &c_1^2\sigma^2n+ c_2^2 \sigma^2n.
\end{align}
The last inequality follows from the definition of $\mathcal{E}_2$ in \eqref{event} and Lemma \ref{lem:wmin}.
\end{proof}

\begin{lemma}\label{lem:conc} Let {$n>\frac{ed\log(d/\delta)}{\delta^2c'_1}$ and $\alpha=\frac{ed\log(d/\delta)}{n\delta^2c'_1}$.}
Let $\tilde{\bm y}_1,\dots,\tilde{\bm y}_n$ be i.i.d. datapoints drawn from a distribution with mean $\bm\mu$ and covariance matrix $\Sigma \preccurlyeq \sigma^2 I$.
Let $\bm G$ be the set defined in \eqref{eq:Gdef}. Let $\bar{\bm x}$ be the average of datapoints in $\bm G$. Then the following holds on the event $\mathcal{E}_1\cap\mathcal{E}_3\cap\mathcal{E}_4$, where the events are defined in \eqref{event:1}, \eqref{event:3} and \eqref{event:4}:
\begin{equation}
    \lVert \bar{\bm x}-\bm\mu \rVert_2 \leq \sigma\sqrt{\alpha\delta}\para*{1+2\sqrt{\frac{c_1'}{e\log(d/\delta)}}}.
\end{equation}
\end{lemma}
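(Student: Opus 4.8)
\textbf{Proof proposal for Lemma~\ref{lem:conc}.}
The plan is to bound $\|\bar{\bm x}-\bm\mu\|_2$ by relating the average over $\bm G$ to the average over the full sample $\bm S$, and then using the concentration events $\mathcal{E}_3$ and $\mathcal{E}_4$ together with the definition of $\bm G$ from \eqref{eq:Gdef}. First I would write the average of $\bm G$ as
$\bar{\bm x} = \frac{1}{|\bm I|}\sum_{i\in\bm I}\tilde{\bm y}_i$, and decompose the full-sample average as
$\frac{1}{n}\sum_{i=1}^n \tilde{\bm y}_i = \frac{1}{n}\sum_{i\in\bm I}\tilde{\bm y}_i + \frac{1}{n}\sum_{i\notin\bm I}\tilde{\bm y}_i$. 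Subtracting $\bm\mu$ and rearranging, the quantity $\bar{\bm x}-\bm\mu$ can be expressed in terms of $\frac{1}{n}\sum_{i=1}^n(\tilde{\bm y}_i-\bm\mu)$ (controlled by $\mathcal{E}_3$), the tail-truncated sum $\frac{1}{n}\sum_{i=1}^n \bm z_i$ where $\bm z_i = (\tilde{\bm y}_i-\bm\mu)\mathbbm{1}\{\|\tilde{\bm y}_i-\bm\mu\|_2>\sigma\sqrt{d/(\alpha\delta)}\}$ (controlled by $\mathcal{E}_4$ up to the shift $\E[\bm z_1]$), and the ratio $n/|\bm I|$ (controlled by $\mathcal{E}_1$, which gives $|\bm I|\geq(1-\alpha)n$).

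The key algebraic identity I would use is
\[
\bar{\bm x}-\bm\mu \;=\; \frac{n}{|\bm I|}\left(\frac{1}{n}\sum_{i=1}^n(\tilde{\bm y}_i-\bm\mu) \;-\; \frac{1}{n}\sum_{i\notin\bm I}(\tilde{\bm y}_i-\bm\mu)\right)
\;=\; \frac{n}{|\bm I|}\left(\frac{1}{n}\sum_{i=1}^n(\tilde{\bm y}_i-\bm\mu) \;-\; \frac{1}{n}\sum_{i=1}^n \bm z_i\right),
\]
where the second equality holds because $\bm z_i$ is exactly $\tilde{\bm y}_i-\bm\mu$ on the complement of $\bm I$ and zero on $\bm I$. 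Then I would add and subtract $\E[\bm z_1]$ inside the second sum, apply the triangle inequality, and invoke $\mathcal{E}_3$ and $\mathcal{E}_4$ to bound the two random-sum terms by $\sigma\sqrt{d/(n\delta)}$ each. It remains to bound $\|\E[\bm z_1]\|_2$: by Cauchy--Schwarz / Markov, $\|\E[\bm z_1]\|_2 \leq \E\|\bm z_1\|_2 \leq \frac{\E\|\tilde{\bm y}_1-\bm\mu\|_2^2}{\sigma\sqrt{d/(\alpha\delta)}} \leq \frac{\sigma^2 d}{\sigma\sqrt{d/(\alpha\delta)}} = \sigma\sqrt{\alpha\delta\, d}$ — wait, I need to be careful here; the correct bound is $\|\E[\bm z_1]\|_2\leq \sigma\sqrt{\alpha\delta}\cdot\frac{1}{\sqrt{\,\cdot\,}}$ times the right factor, so I would track the truncation level carefully so that the final $\E[\bm z_1]$ term contributes $O(\sigma\sqrt{\alpha\delta})$. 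Collecting terms, using $n/|\bm I|\leq 1/(1-\alpha)$ on $\mathcal{E}_1$ and substituting $\alpha = \frac{ed\log(d/\delta)}{n\delta^2 c_1'}$ (so that $\sigma\sqrt{d/(n\delta)} = \sigma\sqrt{\alpha\delta}\sqrt{c_1'/(e\log(d/\delta))}$), the two random-sum terms combine to give the $2\sqrt{c_1'/(e\log(d/\delta))}$ factor, and the $\E[\bm z_1]$ term gives the leading $1$, yielding the claimed bound $\sigma\sqrt{\alpha\delta}(1+2\sqrt{c_1'/(e\log(d/\delta))})$.

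The main obstacle I anticipate is getting the truncation-bias term $\|\E[\bm z_1]\|_2$ and the factor $n/|\bm I|$ accounted for with exactly the right constants so that everything collapses cleanly into the stated form; in particular one has to verify that the $1/(1-\alpha)$ factor can be absorbed (this likely uses $\alpha$ small, perhaps folded into a slightly larger constant or handled because the stated bound is an upper bound with room to spare) and that the substitution of $\alpha$ converts $\sqrt{d/(n\delta)}$ into the advertised $\sqrt{\alpha\delta}$-scaled expression. The probabilistic content is entirely routine once $\mathcal{E}_1,\mathcal{E}_3,\mathcal{E}_4$ are granted (via Lemma~\ref{lem:size}); the work is purely in the deterministic bookkeeping on the intersection event.
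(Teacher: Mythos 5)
Your decomposition is exactly the one the paper uses: write $\frac{|\bm I|}{n}(\bar{\bm x}-\bm\mu)=\frac{1}{n}\sum_{i=1}^n(\tilde{\bm y}_i-\bm\mu)-\frac{1}{n}\sum_{i=1}^n\bm z_i$, center the second sum at $\E[\bm z_1]$, and invoke $\mathcal{E}_3$ and $\mathcal{E}_4$ for the two random sums, with the substitution $\sigma\sqrt{d/(n\delta)}=\sigma\sqrt{\alpha\delta}\sqrt{c_1'/(e\log(d/\delta))}$ producing the advertised factor of $2\sqrt{c_1'/(e\log(d/\delta))}$. The one genuine gap is the step you flag yourself and then leave unresolved: the bound on $\lVert\E[\bm z_1]\rVert_2$. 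Your first attempt, $\lVert\E[\bm z_1]\rVert_2\leq\E\lVert\bm z_1\rVert_2\leq\E\lVert\tilde{\bm y}_1-\bm\mu\rVert_2^2\big/\bigl(\sigma\sqrt{d/(\alpha\delta)}\bigr)\leq\sigma\sqrt{\alpha\delta d}$, is off by a factor of $\sqrt{d}$, and saying ``track the truncation level carefully'' does not repair it --- no choice of truncation level fixes this, because the loss comes from bounding the full squared norm by its trace $\sigma^2 d$ rather than by a directional second moment. The fix, which is what the paper does, is to use the dual characterization $\lVert\E[\bm z_1]\rVert_2=\max_{\lVert\bm v\rVert_2=1}\E\bigl[\bm v^\top(\tilde{\bm y}_1-\bm\mu)\,\mathbbm{1}\{\lVert\tilde{\bm y}_1-\bm\mu\rVert_2>\sigma\sqrt{d/(\alpha\delta)}\}\bigr]$ and then apply Cauchy--Schwarz to the product of $\bm v^\top(\tilde{\bm y}_1-\bm\mu)$ and the indicator: this gives $\sqrt{\E[(\bm v^\top(\tilde{\bm y}_1-\bm\mu))^2]\cdot\pr(\lVert\tilde{\bm y}_1-\bm\mu\rVert_2>\sigma\sqrt{d/(\alpha\delta)})}\leq\sqrt{\sigma^2\cdot\alpha\delta}=\sigma\sqrt{\alpha\delta}$, where the first factor uses $\Sigma\preceq\sigma^2 I$ (so only $\sigma^2$, not $\sigma^2 d$, enters) and the second is Markov's inequality. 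Without this step your argument does not close.

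On the $n/|\bm I|$ prefactor: your concern is legitimate, but note that the paper itself derives the bound for $\frac{|\bm I|}{n}\lVert\bar{\bm x}-\bm\mu\rVert_2$ and then states the conclusion for $\lVert\bar{\bm x}-\bm\mu\rVert_2$ without carrying the $n/|\bm I|\leq 1/(1-\alpha)$ correction from $\mathcal{E}_1$; so a fully rigorous version (yours or the paper's) would either pick up a $1/(1-\alpha)$ in the final constant or absorb it by slightly enlarging the stated bound. That is deterministic bookkeeping, as you say, and not a conceptual obstacle.
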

\begin{proof}
Note that 

\begin{align}
    &\left\lVert \frac{|\bm G|}{n}(\bar{\bm x} - \bm\mu) \right\rVert_2\\
    =& \left\lVert \frac{1}{n}\sum\limits_{i=1}^n (\tilde{\bm y}_i-\bm\mu)- \frac{1}{n} \sum\limits_{i=1}^n (\tilde{\bm y}_i - \bm\mu)\mathbbm{1}\bre*{ \lVert \tilde{\bm y}_i-\bm\mu \rVert_2> \sigma\sqrt{\frac{d}{\alpha\delta}}} \right\rVert_2\\
    \leq  &\left\lVert \frac{1}{n}\sum\limits_{i=1}^n (\tilde{\bm y}_i-\bm\mu) \right\rVert_2+\left\lVert \frac{1}{n} \sum\limits_{i=1}^n \bm z_i \right\rVert_2\\
    \leq & \left\lVert \frac{1}{n}\sum\limits_{i=1}^n (\tilde{\bm y}_i-\bm\mu) \right\rVert_2+\left\lVert \frac{1}{n}\sum\limits_{i=1}^n (\bm z_i-E[\bm z_1]) \right\rVert_2 + \left\lVert E[\bm z_1] \right\rVert_2,\label{eq:start}
\end{align}
where $\bm z_i=(\tilde{\bm y}_i - \bm\mu)\mathbbm{1}\bre*{ \lVert \tilde{\bm y}_i-\bm\mu \rVert_2> \sigma\sqrt{\frac{d}{\alpha\delta}}}$. 

The last term is upper bounded as follows,
\begin{align}
    \left\lVert E[\bm z_1]\right\rVert_2=  &\left\lVert E\left[(\tilde{\bm y}_1 - \bm\mu)\mathbbm{1}\left\lbrace \lVert \tilde{\bm y}_1-\bm\mu \rVert_2> \sigma\sqrt{\frac{d}{\alpha\delta}} \right\rbrace \right]\right\rVert_2\\
    = & \max_{\lVert v\rVert_2=1}  v^\top E\left[(\tilde{\bm y}_1 - \bm\mu)\mathbbm{1}\left\lbrace \lVert \tilde{\bm y}_1-\bm\mu \rVert_2> \sigma\sqrt{\frac{d}{\alpha\delta}} \right\rbrace \right]\\
    = &\max_{\lVert v\rVert_2=1}   E\left[v^\top(\tilde{\bm y}_1 - \bm\mu)\mathbbm{1}\left\lbrace \lVert \tilde{\bm y}_1-\bm\mu \rVert_2> \sigma\sqrt{\frac{d}{\alpha\delta}} \right\rbrace \right]\\
    \mathop{\leq}\limits^{\text{(a)}} & \max_{\lVert v\rVert_2=1} \sqrt{E[v^\top (\tilde{\bm y}_1-\bm\mu)]^2P\para*{\lVert \tilde{\bm y}_1-\bm\mu \rVert_2> \sigma\sqrt{\frac{d}{\alpha\delta}} }}\label{eq:cs}\\
    = &\sqrt{\lambda_{\max}\left(\Sigma \right)P\para*{\lVert \tilde{\bm y}_1-\bm\mu \rVert_2> \sigma\sqrt{\frac{d}{\alpha\delta}} }}\\
    \mathop{\leq}\limits^{\text{(b)}} &\sqrt{\sigma^2\alpha\delta}\\
    = & \sigma\sqrt{\alpha\delta}.\label{eq:Ez}
\end{align}
The inequality (a) follows from Cauchy-Schwarz inequality, and (b) follows from Markov's inequality.

From  \eqref{eq:start}, \eqref{eq:prob1}, \eqref{eq:event34}, and \eqref{eq:Ez}, we get that on the event $\mathcal{E}_1\cap \mathcal{E}_3\cap \mathcal{E}_4$,
    \begin{equation}
    \lVert \bar{\bm x}-\bm\mu \rVert_2 \leq \sigma\sqrt{\alpha\delta}\para*{1+2\sqrt{\frac{c_1'}{e\log(d/\delta)}}}.
\end{equation}
\end{proof}

\begin{lemma}\label{lem:threshold}
Let $0< \tau\leq 1$. Suppose $\bm h\in \mathbb{R}^n$ such that $\forall i,\;0\leq h_i\leq 1$, and $\|\bm h\|_1\leq \epsilon n$ for some $\epsilon\in[0,1)$. Then
\begin{align}
    \sum\limits_{i=1}^n (1-h_i)1\{h_i\leq\tau\}\geq \para*{1-\frac{\epsilon}{\tau}}n.
\end{align} 
\begin{proof}
We first show that $\sum\limits_{i=1}^n 1\{h_i> \tau\}\leq \frac{\epsilon n}{\tau}$. Observe that
\begin{align}
    \epsilon n\geq \sum\limits_{i=1}^n h_i &= \sum\limits_{i=1}^n h_i1\{h_i\leq\tau\}+\sum\limits_{i=1}^n h_i1\{h_i> \tau\}\\
    &\geq \tau\sum\limits_{i=1}^n 1\{h_i> \tau\}.
\end{align}
Hence, we have
\begin{align}
    \sum\limits_{i=1}^n 1\{h_i> \tau\}\leq \frac{\epsilon n}{\tau}.
\end{align}
Consequently, we obtain
 \begin{align}
     \sum\limits_{i=1}^n (1-h_i)1\{h_i\leq\tau\} &=  \sum\limits_{i=1}^n (1-h_i)-\sum\limits_{i=1}^n (1-h_i)1\{h_i>\tau\}\\
     &\geq  \sum\limits_{i=1}^n (1-h_i)-(1-\tau)\sum\limits_{i=1}^n 1\{h_i>\tau\}\\
     &\geq (1-\epsilon)n -\frac{(1-\tau)\epsilon n}{\tau}\\
     &=\para*{1-\frac{\epsilon}{\tau}}n.
 \end{align}
\end{proof}
\end{lemma}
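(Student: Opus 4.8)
The plan is to prove the bound through two short, elementary steps: first control the number of coordinates that exceed the threshold $\tau$, then use this to lower-bound the weighted count $\sum_i (1-h_i)\mathbbm{1}\{h_i\le\tau\}$. The $\ell_1$-budget is the only quantitative input, and it enters exactly once.

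\textbf{Step 1 (counting the large coordinates).} First I would note that the constraint $\|\bm h\|_1\le\epsilon n$ limits how many $h_i$ can exceed $\tau$: discarding the coordinates with $h_i\le\tau$ and bounding $h_i\ge\tau$ on the rest gives
\[
\epsilon n\ \ge\ \sum_{i=1}^n h_i\ \ge\ \sum_{i:\,h_i>\tau} h_i\ \ge\ \tau\,\bigl|\{i:h_i>\tau\}\bigr|,
\]
so $\bigl|\{i:h_i>\tau\}\bigr|\le \epsilon n/\tau$. This is a Markov-type estimate and is where the hypothesis $\|\bm h\|_1\le\epsilon n$ is used.

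\textbf{Step 2 (assembling the bound).} Next I would split $\sum_i (1-h_i)\mathbbm{1}\{h_i\le\tau\}=\sum_i(1-h_i)-\sum_i(1-h_i)\mathbbm{1}\{h_i>\tau\}$. The full sum satisfies $\sum_i(1-h_i)=n-\sum_i h_i\ge(1-\epsilon)n$. On each large coordinate, $h_i>\tau$ together with $h_i\le 1$ gives $0\le 1-h_i<1-\tau$, so by Step 1 the subtracted term obeys $\sum_i(1-h_i)\mathbbm{1}\{h_i>\tau\}\le(1-\tau)\bigl|\{i:h_i>\tau\}\bigr|\le(1-\tau)\epsilon n/\tau$. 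Combining,
\[
\sum_{i=1}^n (1-h_i)\mathbbm{1}\{h_i\le\tau\}\ \ge\ (1-\epsilon)n-\frac{(1-\tau)\epsilon n}{\tau}\ =\ \Bigl(1-\frac{\epsilon}{\tau}\Bigr)n,
\]
which is the claim.

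There is no genuine obstacle here; the argument is a two-line Markov-type inequality followed by bookkeeping. The one point worth care is tightness in Step 2: one must use the bound $1-h_i<1-\tau$ on the large coordinates (not the crude $1-h_i\le1$), so that the $(1-\tau)$ and $\epsilon$ factors cancel exactly; the weaker bound would only yield $(1-\epsilon-\epsilon/\tau)n$. It is also worth noting the statement is only informative when $\epsilon<\tau$, which is the regime in which it is invoked (otherwise the right-hand side is nonpositive and the inequality is vacuous).
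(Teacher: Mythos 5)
Your proof is correct and follows essentially the same route as the paper's: the same Markov-type count of the coordinates with $h_i>\tau$, the same decomposition of the sum, and the same use of $1-h_i\le 1-\tau$ on the large coordinates to make the terms cancel to $(1-\epsilon/\tau)n$. No differences worth noting.
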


\subsection{Proof of Theorem \ref{thm:l0} }
\begin{proof}
%From \cite[Equation (5.8)]{convex} and the discussion following, we can conclude that 
%\[
%\lambda_{\mathrm{max}}(\sum_{i=1}^{n} (1-h_i^*)(\bm y_i-\bm\mu)(\bm y_i-\bm\mu)^\top)\leq c_1 (1-\epsilon) n \sigma^2 \leq {c_1 n \sigma^2}
%\]
%holds with probability at least 0.9, where $\bm{h}^*$ is such that $h_i^*=0$ for inlier points and $h_i^*=1$ for outlier points, and $c_1$ is some universal constant. In this case, $\|\bm h^*\|_0=\epsilon n$.
Let $(\hat{\bm h},\hat{\bm x})$ be a feasible pair for \eqref{obj} lying in $\mathcal{S}$. Note that we get a corresponding feasible pair lying in $\mathcal{S}$ by only setting non-zero $\hat h_i$ to be 1. With slight abuse of notation, let $(\hat{\bm h},\hat{\bm x})$
be this feasible pair.

Let $\epsilon'\triangleq \alpha+\epsilon$. Let $\hat{\bm w}=\frac{\bm 1-\hat{\bm h}}{n-\|\hat{\bm h}\|_0}$ and $\beta=\|\hat{\bm h}\|_0/n$. Note that $\hat{\bm w}\in \Delta_{n,\beta}$. Consider $\bm h^*$ as defined in~\eqref{eq:outlier}. Let $\bar{\bm x}^*$ be the average of datapoints in the set $\bm G^*$ defined in \eqref{eq:G2} and let $\bm w^*=\frac{\bm 1 -\bm h^*}{n-\|\bm h^*\|_0}$. Observe that on event $\mathcal{E}_1$, $\bm w^*\in\Delta_{n,\epsilon'}$. From Lemma \ref{lem:tv}, we obtain
\begin{align}
    \mathrm{TV}(\hat{\bm w},\bm w^*)\leq \frac{\max(\beta,\epsilon')}{1-\min(\beta,\epsilon')}.
\end{align}
As a consequence of Lemma \ref{lem:wmin}, on event $\mathcal{E}_1\cap\mathcal{E}_2$, we have
\begin{align}
    \lambda_{\max}(\Sigma_{\bm w^*}) &\leq \lambda_{\mathrm{max}}\left(\frac{1}{\abs*{\bm I^*}}\sum_{i=1}^{n} (1-h_i^*)(\bm y_i-\bm\mu)(\bm y_i-\bm\mu)^\top\right)\leq \frac{c_1^2\sigma^2n}{|\bm I^*|}\leq \frac{c_1^2\sigma^2}{1-\epsilon'}\\
    \lambda_{\max}(\Sigma_{\hat{\bm w}}) &\leq \lambda_{\mathrm{max}}\left(\frac{1}{n-\|\hat{\bm h}\|_0}\sum_{i=1}^{n} (1-\hat{h}_i)(\bm y_i-\hat{\bm x})(\bm y_i-\hat{\bm x})^\top\right)\leq \frac{c_1^2\sigma^2n}{n-\|\hat{\bm h}\|_0}= \frac{c_1^2\sigma^2}{1-\beta}.
\end{align}

\noindent Consider the case $\beta\leq \epsilon'<1/2$. This implies $\mathrm{TV}(\hat{\bm w},\bm w^*)\leq \frac{\epsilon'}{1-\beta}<1$. From Lemma \ref{lem:res}, on event $\mathcal{E}_1\cap\mathcal{E}_2$, we get
\begin{align}
    \|\hat{\bm x}-\bar{\bm x}^*\|_2 &\leq \para*{\sqrt{\frac{c^2_1\sigma^2}{1-\epsilon'}}+\sqrt{\frac{c^2_1\sigma^2}{1-\beta}}}\sqrt{\frac{\epsilon'}{1-\epsilon'-\beta}}\\
    &\leq \frac{2c_1\sigma\sqrt{\epsilon'}}{\sqrt{(1-\epsilon')(1-2\epsilon')}}.
\end{align}

Consider the case $\epsilon'\leq \beta< 1-\epsilon'$. This implies $\mathrm{TV}(\hat{\bm w},\bm w^*)\leq \frac{\beta}{1-\epsilon'}<1$. From Lemma \ref{lem:res}, on event $\mathcal{E}_1\cap\mathcal{E}_2$, we get
\begin{align}
    \|\hat{\bm x}-\bar{\bm x}^*\|_2 &\leq \para*{\sqrt{\frac{c^2_1\sigma^2}{1-\epsilon'}}+\sqrt{\frac{c^2_1\sigma^2}{1-\beta}}}\sqrt{\frac{\beta}{1-\epsilon'-\beta}}.
\end{align}

Consequently, on the event $\mathcal{E}$ defined in \eqref{event:all}, using Lemma \ref{lem:conc}, \eqref{eq:res} and applying triangle inequality, we obtain that with probability at least $1-4\delta$
\begin{equation*}
    \| \hat{\bm x}-\bm\mu\|_2 \leq \|\hat{\bm x}-\bar{\bm x}^*\|_2+\sqrt{\frac{c_1^2\sigma^2}{1-\alpha}.\frac{\epsilon}{1-\epsilon}}+\sigma\sqrt{\alpha\delta}\para*{1+2\sqrt{\frac{c_1'}{e\log(d/\delta)}}}.
\end{equation*}
It follows from \eqref{event2} that on the event $\mathcal{E}_2$, $\mathcal{E}_2$, $(\bm h^{*}, \bar{\bm x}^*)$ is feasible. We also have that $\mathcal{E}_1=\{|\bm I| \geq (1-\alpha) n\}\subseteq\{n-\|\bm h^*\|_0\geq (1-\epsilon')n\} =\{\|\bm h^*\|_0 \leq \epsilon' n\}$. Note that for any globally optimal solution of \eqref{obj}, by setting all its non-zero $h_i$ to be 1, we can always get corresponding feasible and globally optimal $(\bm h^{\mathrm{opt}}, \bm x^{\mathrm{opt}})$ with $h_i^{\mathrm{opt}} \in \{0,1\}$ and ${\bm x^{\mathrm{opt}}}=\frac{\sum_{\{i:{h}_i^{\mathrm{opt}}=0\}} \bm y_i}{|\{i:{h}_i^{\mathrm{opt}}=0\}|}$ (i.e., $\bm x^{\mathrm{opt}}$ is the average of the $\bm y_i$'s corresponding to $h_i^{\mathrm{opt}}=0$), and the objective value remains unchanged. Since $(\bm h^{\mathrm{opt}}, \bm x^{\mathrm{opt}})$ is globally optimal, and $(\bm h^{*}, \bm\mu)$ is feasible, we have $\|\bm h^{\mathrm{opt}}\|_0 \leq \|\bm h^*\|_0 \leq \epsilon' n$. Hence, $(\bm h^{\mathrm{opt}}, \bm x^{\mathrm{opt}})\in\mathcal{S}'$ with $\|\bm h^{\mathrm{opt}}\|_0 \leq \epsilon' n$.

\end{proof}

\subsection{Proof of Theorem \ref{thm:lp2}}
\begin{proof}
Let $(\hat{\bm h},\hat{\bm x})\in \mathcal{S}'$ be a feasible pair for \eqref{objL1} with some $0<p\leq 1$. We have
\begin{equation}
    \|\hat{\bm h}\|_p\leq ((1-\epsilon')n)^{1/p}.
\end{equation}
Since $0\leq \hat h_i\leq 1$ for all $i$, we have
\begin{align}
    &\bra*{\sum\limits_{i=1}^n \hat h_i}^{1/p}\leq \bra*{\sum\limits_{i=1}^n \hat h_i^p}^{1/p}\leq ((1-\epsilon')n)^{1/p}.
\end{align}
This implies the following
\begin{align}
    \|\hat{\bm h}\|_1 &\leq \|\hat{\bm h}\|^p_p\leq  (1-\epsilon')n\\
    \|\bm 1-\hat{\bm h}\|_1 &\geq n- \|\hat{\bm h}\|^p_p\geq  \epsilon'n.
\end{align}
Let $\hat{\bm w}=\frac{\bm 1-\hat{\bm h}}{\|\bm 1- \hat{\bm h}\|_1}$ and $\beta=\|\hat{\bm h}\|^p_p/n$.  Note that $\hat{\bm w}\in \Delta_{n,\beta}$. Consider $\bm h^*$ as defined in~\eqref{eq:outlier}. Let $\bar{\bm x}^*$ be the average of datapoints in the set $\bm G^*$ defined in \eqref{eq:G2} and let $\bm w^*=\frac{\bm 1 -\bm h^*}{n-\|\bm h^*\|_0}$. Observe that on event $\mathcal{E}_1$, $\bm w^*\in\Delta_{n,\epsilon'}$.
As a consequence of Lemma \ref{lem:wmin}, on event $\mathcal{E}_1\cap\mathcal{E}_2$, we have
\begin{align}
    \lambda_{\max}(\Sigma_{\bm w^*}) &\leq \lambda_{\mathrm{max}}\left(\frac{1}{\abs*{\bm I^*}}\sum_{i=1}^{n} (1-h_i^*)(\bm y_i-\bm\mu)(\bm y_i-\bm\mu)^\top\right)\leq \frac{c_1^2\sigma^2n}{|\bm I^*|}\leq \frac{c_1^2\sigma^2}{1-\epsilon'}\\
    \lambda_{\max}(\Sigma_{\hat{\bm w}})  &\leq \lambda_{\mathrm{max}}\left(\frac{1}{n-\|\bm h\|_1}\sum_{i=1}^{n} (1-\hat{h}_i)(\bm y_i-\hat{\bm x})(\bm y_i-\hat{\bm x})^\top\right)\leq \frac{c_1^2\sigma^2n}{n-\|\bm h\|^p_p}= \frac{c_1^2\sigma^2}{1-\beta}.
\end{align}
From Lemma \ref{lem:tv}, we obtain
\begin{align}
    \mathrm{TV}(\hat{\bm w},\bm w^*)\leq \frac{\max(\beta,\epsilon')}{1-\min(\beta,\epsilon')}.
\end{align}

Consider the case $\beta\leq \epsilon'<1/2$. This implies $\mathrm{TV}(\hat{\bm w},\bm w^*)\leq \frac{\epsilon'}{1-\beta}<1$. From Lemma \ref{lem:res}, on event $\mathcal{E}_1\cap\mathcal{E}_2$, we get
\begin{align}
    \|\hat{\bm x}-\bar{\bm x}^*\|_2 &\leq \para*{\sqrt{\frac{c^2_1\sigma^2}{1-\epsilon'}}+\sqrt{\frac{c^2_1\sigma^2}{1-\beta}}}\sqrt{\frac{\epsilon'}{1-\epsilon'-\beta}}\\
    &\leq \frac{2c_1\sigma\sqrt{\epsilon'}}{\sqrt{(1-\epsilon')(1-2\epsilon')}}.
\end{align}

Consider the case $\epsilon'\leq \beta< 1-\epsilon'$. This implies $\mathrm{TV}(\hat{\bm w},\bm w^*)\leq \frac{\beta}{1-\epsilon'}<1$. From Lemma \ref{lem:res}, on event $\mathcal{E}_1\cap\mathcal{E}_2$, we get
\begin{align}
    \|\hat{\bm x}-\bar{\bm x}^*\|_2 &\leq \para*{\sqrt{\frac{c^2_1\sigma^2}{1-\epsilon'}}+\sqrt{\frac{c^2_1\sigma^2}{1-\beta}}}\sqrt{\frac{\beta}{1-\epsilon'-\beta}}.
\end{align}

Consequently, on the event $\mathcal{E}$ defined in \eqref{event:all}, using Lemma \ref{lem:conc}, \eqref{eq:res} and applying triangle inequality, we obtain that with probability at least $1-4\delta$
\begin{equation*}
    \| \hat{\bm x}-\bm\mu\|_2 \leq \|\hat{\bm x}-\bar{\bm x}^*\|_2+\sqrt{\frac{c_1^2\sigma^2}{1-\alpha}.\frac{\epsilon}{1-\epsilon}}+\sigma\sqrt{\alpha\delta}\para*{1+2\sqrt{\frac{c_1'}{e\log(d/\delta)}}}.
\end{equation*}
Let $({\bm h}^{\mathrm{opt}},{\bm x}^{\mathrm{opt}})$ be an optimal solution to \eqref{objL1}. From Lemma \ref{lem:wmin} we have that $\left({\bm h}^{\mathrm{opt}},\frac{\sum\limits_{i=1}^n (1-h_i^{\mathrm{opt}}) \bm y_i}{\sum\limits_{i=1}^n (1-h_i^{\mathrm{opt}})}\right)$ is also an optimal solution. Note that on the event $\mathcal{E}$, we have that $(\bm h^*,\bm\mu)$ is a feasible pair for \eqref{objL1}. Hence,
\begin{equation}
    \|{\bm h}^{\mathrm{opt}}\|_p\leq \|\bm h^*\|_p\leq (\epsilon'n)^{1/p}.
\end{equation}
    This implies
    \begin{align}
    \left({\bm h}^{\mathrm{opt}},\frac{\sum\limits_{i=1}^n (1-h_i^{\mathrm{opt}}) \bm y_i}{\sum\limits_{i=1}^n (1-h_i^{\mathrm{opt}})}\right)\in\mathcal{S}'.
\end{align}
\end{proof}

\subsection{Proof of Theorem \ref{thm:algo}}
\begin{proof} 
We prove the result by the method of induction. 

Let $\bm x^{(0)}$ be the coordinate-wise median of the corrupted sample. It is easy to check that under the conditions stated in Theorem \ref{thm:algo}, it follows that $c_3\leq c_1$ and $\epsilon\leq (1-\alpha)(1-\epsilon)$. Note that if $n\geq 90\log\para*{\frac{d}{\delta}}$, then by Lemma \ref{lem:median}, Lemma \ref{lem:conc}, \eqref{eq:res} and triangle inequality, we have that the following holds with probability at least $1-\delta$:
\begin{align}
\|\bm x^{(0)}-\bar{\bm x}^*\|_2&= \|\bm x^{(0)}-\bm \mu +\bm \mu - \bar{\bm x}+ \bar{\bm x}-\bar{\bm x}^*\|_2 \\
&\leq \|\bm x^{(0)}-\bm \mu\|_2+\|\bm \mu - \bar{\bm x}\|_2+\|\bar{\bm x}-\bar{\bm x}^*\|_2 \\
&\leq 3\sigma\sqrt{d}+\sigma c_3+\sigma\sqrt{\frac{c_1^2\sigma^2}{1-\alpha}.\frac{\epsilon}{1-\epsilon}}\leq \sigma(3\sqrt{d}+2c_1)=\sigma c_2^{(0)}.
\end{align}
Let $\mathcal{E}'$ be the event
\begin{align}
    \|\bm x^{(0)}-\bar{\bm x}^*\|_2\leq \sigma c_2^{(0)}.
\end{align}

All the following statements hold on the event $\mathcal{E}\cup \mathcal{E}'$, where $\mathcal{E}$ is defined in \eqref{event:all}. Also note that $\pr(\mathcal{E}\cup\mathcal{E}')\geq 1-5\delta$, when {$n\geq \max\left\lbrace 90,\frac{2e}{c'_1\delta^2}d\right\rbrace\log\para*{\frac{d}{\delta}}$}.

Suppose $\|\bm x^{(t)}-\bar{\bm x}^*\|_2 \leq c^{(t)}_2\sigma$ and $\|\bm h^{(t-1)}\|^p_p\leq\epsilon'n$. Let $\bm h^{(t)}$ be an optimal solution to
\begin{align}
    &\min_{\bm h} \|\bm h\|_p\\
    \text{s.t. } & \lambda_{\max} \left(\sum_{i=1}^n (1-h_{i}) (\bm y_i -\bm x^{(t)})(\bm y_i - \bm x^{(t)})^\top\right)\leq \left(c_1^2+(c^{(t)}_2)^2\right)\sigma^2n\\
    &0\leq h_i\leq 1,\;\forall i\in [n].
\end{align}
From Lemma \ref{lem:feasible}, we have that $\bm h^*$ is a feasible point for the above optimization problem. Hence,
\begin{equation}
    \|\bm h^{(t)}\|_p\leq \|\bm h^*\|_p\leq (\epsilon'n)^{1/p}.
\end{equation}
Since $0\leq h^{(t)}_i\leq 1$ for all $i$, we have
\begin{align}
    &\bra*{\sum\limits_{i=1}^n  h_i^{(t)}}^{1/p}\leq \bra*{\sum\limits_{i=1}^n \left( h^{(t)}_i\right)^p}^{1/p}\leq (\epsilon'n)^{1/p}.
    \end{align}
    This implies
    \begin{align}
     \|\bm h^{(t)}\|_1 \leq \epsilon'n.
\end{align}
Let $\bm w$ be such that 
\begin{equation}
w_i= \frac{(1-h^{(t)}_i) 1\{ h^{(t)}_i\leq\tau\}}{\sum\limits_{i=1}^n (1- h^{(t)}_i) 1\{h^{(t)}_i\leq\tau\}}.    
\end{equation}
By Lemma \ref{lem:threshold}, we have that $\bm w\in \Delta_{n,\frac{\epsilon'}{\tau}}$. Now we follow the proof of Theorem \ref{thm:lp2}. 
Let $\bm x^{(t+1)}=\sum\limits_{i=1}^n w_i \bm y_i$. Observe that $\bm w^*\in\Delta_{n,\epsilon'}$. 
As a consequence of Lemma \ref{lem:wmin}, we have
\begin{align}
    \lambda_{\max}(\Sigma_{\bm w^*})&\leq \lambda_{\mathrm{max}}\left(\frac{1}{\abs*{\bm I^*}}\sum_{i=1}^{n} (1-h_i^*)(\bm y_i-\bm\mu)(\bm y_i-\bm\mu)^\top\right)\leq \frac{c_1^2\sigma^2n}{|\bm I^*|}\leq \frac{c_1^2\sigma^2}{1-\epsilon'},\\
    \lambda_{\max}(\Sigma_{\bm w})&= \lambda_{\mathrm{max}}\left(\frac{1}{\sum\limits_{i=1}^n (1-h^{(t)}_i) 1\{h^{(t)}_i\leq\tau\}}\sum_{i=1}^{n} (1- h^{(t)}_i)1\{h^{(t)}_i\leq\tau\}(\bm y_i-\bm x^{(t+1)})(\bm y_i-\bm x^{(t+1)})^\top\right)\\
    &\leq\lambda_{\mathrm{max}}\left(\frac{1}{\sum\limits_{i=1}^n (1-h^{(t)}_i) 1\{h^{(t)}_i\leq\tau\}}\sum_{i=1}^{n} (1-h^{(t)}_i)1\{h^{(t)}_i\leq\tau\}(\bm y_i-\bm x^{(t)})(\bm y_i-\bm x^{(t)})^\top\right)\\
    &\leq \lambda_{\mathrm{max}}\left(\frac{1}{(1-\epsilon'/\tau)n}\sum_{i=1}^{n} (1-h^{(t)}_i)(\bm y_i-\bm x^{(t)})(\bm y_i-\bm x^{(t)})^\top\right)\\
    &\leq\frac{\left(c_1^2+(c^{(t)}_2)^2\right)\sigma^2}{1-\epsilon'/\tau}.
\end{align}
From Lemma \ref{lem:tv}, we obtain
\begin{align}
     \mathrm{TV}(\bm w,\bm w^*)\leq \frac{\max(\epsilon'/\tau,\epsilon')}{1-\min(\epsilon'/\tau,\epsilon')}=\frac{\epsilon'/\tau}{1-\epsilon'}.
\end{align}
From Lemma \ref{lem:res}, we get
\begin{align}
    \|\bm x^{(t+1)}-\bar{\bm x}^*\|&\leq (\sqrt{\lambda_{\max}(\Sigma_{\bm w})}+\sqrt{\lambda_{\max}(\Sigma_{\bm w^*})})\sqrt{\frac{\mathrm{TV}(\bm w,\bm w^*)}{1-\mathrm{TV}(\bm w,\bm w^*)}}\\
    &\leq \para*{\sqrt{\frac{\left(c_1^2+(c^{(t)}_2)^2\right)\sigma^2}{1-\frac{\epsilon'}{\tau}}}+\sqrt{\frac{c^2_1\sigma^2}{1-\epsilon'}}}\sqrt{\frac{\epsilon'/\tau}{1-\epsilon'-\epsilon'/\tau}}\\
    &\leq \para*{\frac{(c_1+c^{(t)}_2)\sigma}{\sqrt{1-\epsilon'/\tau}}+\frac{c_1\sigma}{\sqrt{1-\epsilon'}}}\sqrt{\frac{\epsilon'/\tau}{1-\epsilon'-\epsilon'/\tau}}\\
    &=\sigma(\gamma(\epsilon') c^{(t)}_2+\beta(\epsilon'))\\
    &\leq \sigma(\gamma(\check{\epsilon})c_2^{(t)}+\beta(\check{\epsilon}))\\
    &=\sigma c_2^{(t+1)}.
\end{align}

We established that $\|\bm x^{(t+1)}-\bar{\bm x}^*\|_2\leq \sigma c_2^{(t+1)}$ and $\|\bm h^{(t)}\|_p^p\leq \epsilon'n$. Hence, by the principle of mathematical induction, the result follows. It is easy to check that $\gamma(\check{\epsilon})<1$ holds if and only if $\check{\epsilon}<f(\tau)$. Furthermore, $\check{\epsilon}<f(\tau)$ implies $\check{\epsilon}<\tau$. Thus, we have that
\begin{align}
    \|\bm x^{(t)}-\bar{\bm x}^*\|_2 &\leq \sigma \left(\gamma(\epsilon') c_2^{(t-1)}+\beta(\epsilon') \right)\\
    &= \sigma\left[\gamma(\epsilon')\para*{c_2^{(0)}\gamma(\check{\epsilon})^{t-1}+\frac{1-\gamma(\check{\epsilon})^{t-1}}{1-\gamma(\check{\epsilon})}\beta(\check{\epsilon})}+\beta(\epsilon')\right].
  \end{align}  

Consequently, using Lemma \ref{lem:conc}, \eqref{eq:res} and applying triangle inequality, we obtain that with probability at least $1-5\delta$
\begin{equation*}
    \| \bm x^{(t)}-\bm\mu\|_2 \leq \sigma\left[\gamma(\epsilon')\para*{c_2^{(0)}\gamma(\check{\epsilon})^{t-1}+\frac{1-\gamma(\check{\epsilon})^{t-1}}{1-\gamma(\check{\epsilon})}\beta(\check{\epsilon})}+\beta(\epsilon')\right]+c_1\sigma\sqrt{\frac{\epsilon}{(1-\alpha)(1-\epsilon)}}+c_3\sigma.
\end{equation*}

It is easy to see that for $T=1+\frac{\log c_2^{(0)}}{|\log \gamma(\check{\epsilon})|}$, we have
\begin{align}
    \|\bm x^{(T)}-\bm\mu\|_2\leq \sigma\left[\gamma(\epsilon')\para*{1+\frac{\beta(\check{\epsilon})}{1-\gamma(\check{\epsilon})}}+\beta(\epsilon')\right]+c_1\sigma\sqrt{\frac{\epsilon}{(1-\alpha)(1-\epsilon)}}+c_3\sigma= O(\sigma\sqrt{\epsilon'}).
\end{align}

\end{proof}

\subsection{Solving \texorpdfstring{$\ell_1$}{l1} objective via Packing SDP}
\begin{align}\label{obj1}
   \min_{ \bm h} & \|\bm h\|_1 \\
    s.t. \ & 0\leq h_i \leq 1, \forall i, \nonumber\\
   & \lambda_{\mathrm{max}} \left(\sum_{i=1}^{n} (1-h_i)(\bm y_i-\bm x)(\bm y_i-\bm x)^\top\right)\leq c n \sigma^2. \nonumber
\end{align}

Define the vector $\bm w$ with $w_i\triangleq 1-h_i$. Since $0\leq h_i \leq 1$, we have $0\leq w_i \leq 1$. Further, $\| \bm h \|_1= \sum_{i=1}^{n} h_i=\sum_{i=1}^{n} (1-w_i)=n- \sum_{i=1}^{n} w_i =n- \bm 1^\top \bm w$. Therefore, solving \eqref{obj1} is equivalent to solving the following:
\begin{align}\label{obj1w}
   \max_{\bm w} & \ \bm 1^\top \bm w \\
    s.t. \ & 0\leq w_i \leq 1, \forall i, \nonumber\\
   & \lambda_{\mathrm{max}}\left(\sum_{i=1}^{n} w_i(\bm y_i-\bm x)(\bm y_i-\bm x)^\top\right)\leq c n \sigma^2. \nonumber
\end{align}

Then, we rewrite the constraints $0 \leq w_i \leq 1, \forall i$ as $0 \leq w_i$, and $\sum w_i e_ie_i^\top \preceq I_{n\times n}$, where $e_i$ is the $i$-th standard basis vector in $\mathbb{R}^n$. %Finally, we can see the
This establishes the equivalence between \eqref{obj1w} and \eqref{packing SDP}.

\subsection{Minimizing \texorpdfstring{$\ell_p$}{lp} via iterative re-weighted \texorpdfstring{$\ell_2$}{l2}}
Consider $\ell_p$ ($0<p<1$) in Step 1 of Algorithm \ref{alg:1}. We have the following equivalent objective:
\begin{align}\label{objp}
   \min_{ \bm h} & \|\bm h\|_p^p \\
    s.t. \ & 0\leq h_i \leq 1, \forall i, \nonumber\\
   & \lambda_{\mathrm{max}}\left(\sum_{i=1}^{n} (1-h_i)(\bm y_i-\bm x)(\bm y_i-\bm x)^\top\right)\leq  c\sigma^2n. \nonumber
\end{align}
Note that $\|\bm h\|_p^p=\sum_{i=1}^{n} h_i^p=\sum_{i=1}^{n} (h_i^2)^{\frac{p}{2}}$. Consider that we employ the iterative re-weighted $\ell_2$ technique~\cite{4518498,gorodnitsky1997sparse}. Then at $(k+1)$-th inner iteration, we construct a tight upper bound on $\|\bm h\|_p^p$ at ${\bm h^{(k)}}^2$ as
\begin{equation}
    \sum_{i=1}^{n} \left\lbrack{\left({h_i^{(k)}}^2\right)}^{\frac{p}{2}}+\frac{p}{2}{\left({h_i^{(k)}}^2\right)}^{\frac{p}{2}-1}\left(h_i^2-{h_i^{(k)}}^2\right)\right\rbrack.
\end{equation} 
We minimize this upper bound:
\begin{align}\label{objp_t}
   \min_{ \bm h} & \sum_{i=1}^{n}{\left({h_i^{(k)}}^2\right)}^{\frac{p}{2}-1}h_i^2 \\
    s.t. \ & 0\leq h_i \leq 1, \forall i, \nonumber\\
   & \lambda_{\mathrm{max}}\left(\sum_{i=1}^{n} (1-h_i)(\bm y_i-\bm x)(\bm y_i-\bm x)^\top\right)\leq c n \sigma^2, \nonumber
\end{align}
Define $u_i={\left({h_i^{(k)}}\right)}^{\frac{p}{2}-1}$, the objective in~\eqref{objp_t} becomes $\sum_{i=1}^{n} u_i^2h_i^2$. Define the vector $\bm w$ with $w_i\triangleq 1-h_i$. Since $0\leq h_i \leq 1$, we have $0\leq w_i \leq 1$. Further, $\sum_{i=1}^{n} u_i^2 h_i^2= \sum_{i=1}^{n} u_i^2 (1-w_i)^2=\sum_{i=1}^{n} (u_i-u_iw_i)^2$. So, solving~\eqref{objp_t} is equivalent to solving the following:
\begin{align}\label{objp_t_w}
   \min_{ \bm w} & \sum_{i=1}^{n}(u_i-u_iw_i)^2 \\
    s.t. \ & 0\leq w_i \leq 1, \forall i, \nonumber\\
   & \lambda_{\mathrm{max}}(\sum_{i=1}^{n} w_i(\bm y_i-\bm x)(\bm y_i-\bm x)^\top)\leq c n \sigma^2. \nonumber
\end{align}

Further, define the vector $\bm z$ with $z_i\triangleq u_iw_i$. Then solving~\eqref{objp_t_w} is equivalent to solving the following:
\begin{align}\label{objp_t_z}
   \min_{ \bm z} & \| \bm u-\bm z\|_2^2 \\
    s.t. \ & 0\leq z_i \leq u_i, \forall i, \nonumber\\
   & \lambda_{\mathrm{max}}\left (\sum_{i=1}^{n} z_i[(\bm y_i-\bm x)(\bm y_i-\bm x)^\top/u_i]\right)\leq c n \sigma^2. \nonumber
\end{align}

Then, we rewrite the constraints $0 \leq z_i \leq u_i, \forall i$ as $0 \leq z_i$, and $\sum_{i=1}^{n} z_i e_ie_i^\top \preceq \diag (\bm u)$, where $e_i$ is the $i$-th standard basis vector in $\mathbb{R}^n$. Finally, we can turn~\eqref{objp_t_z} into the following least squares problem with semidefinite cone constraints:
\begin{align}\label{objp_t_z_SDP}
   \min_{ \bm z} & \| \bm u-\bm z\|_2^2 \\
    s.t. \ & \ z_i \geq 0,  \forall i, \nonumber\\
   & \sum_{i=1}^{n} z_i \begin{bmatrix}
    e_ie_i^\top &  \\
    &  (\bm y_i-\bm x)(\bm y_i-\bm x)^\top /u_i \end{bmatrix} \preceq \begin{bmatrix}
    \diag (\bm u)&  \\
    &  c n \sigma^2 I_{d\times d} \end{bmatrix} .\nonumber
\end{align}

%We now show that \eqref{objp_t_z_SDP} is a packing SDP. Note that \eqref{objp_t_z_SDP} is equivalent to
%\begin{align}\label{objp_t_z_SDP}
%   \min_{t, \bm z} & \;\;t \\
%    s.t. \;\;&\| \bm u-\bm z\|_2^2 \leq t, \nonumber\\ 
%    \ & \ z_i \geq 0,  \forall i, \nonumber\\
%   & \sum_{i=1}^{n} z_i \begin{bmatrix}
%    e_ie_i^\top &  \\
%    &  (\bm y_i-\bm x)(\bm y_i-\bm x)^\top /u_i %\end{bmatrix} \preceq \begin{bmatrix}
%\end{equation*}

%    \diag (\bm u)&  \\
%    &  c n \sigma^2 I_{d\times d} \end{bmatrix} %.\nonumber
%\end{align}

%Observe that the following constraints are equivalent:
%\begin{equation}
%    \| \bm u-\bm z\|_2^2 \leq t \iff \begin{bmatrix}
%    -I & \bm z - \bm u\\
%    (\bm z - \bm u)^\top & -t
%    \end{bmatrix} \preceq 0.
%\end{equation}

\subsection{Solving weighted \texorpdfstring{$\ell_1$}{l1} objective via Packing SDP}
Consider $\ell_p$ $(0<p<1)$ in Step 1 of Algorithm \ref{alg:1} (see objective~\eqref{objp}). If we employ iterative  re-weighted $\ell_1$ approach~\cite{candes2008enhancing,4518498}, we need to solve the following problem:
\begin{align}\label{objp_wL1}
   \min_{ \bm h} & \sum_{i=1}^{n} u_i h_i\\
    s.t. \ & 0\leq h_i \leq 1, \forall i, \nonumber\\
   & \lambda_{\mathrm{max}}\left(\sum_{i=1}^{n} (1-h_i)(\bm y_i-\bm x)(\bm y_i-\bm x)^\top\right)\leq c n \sigma^2, \nonumber
\end{align}

where $u_i$ is the weight on corresponding $h_i$. Define the vector $\bm w$ with $w_i\triangleq 1-h_i$. Since $0\leq h_i \leq 1$, we have $0\leq w_i \leq 1$. Further, $\sum_{i=1}^{n} u_i h_i= \sum_{i=1}^{n} u_i (1-w_i)=\sum_{i=1}^{n} u_i- \sum_{i=1}^{n} u_i w_i$. So, solving \eqref{objp_wL1} is equivalent to solving the following:
\begin{align}\label{obj1wp}
   \max_{\bm w} & \ \bm u^\top \bm w \\
    s.t. \ & 0\leq w_i \leq 1, \forall i, \nonumber\\
   & \lambda_{\mathrm{max}}\left(\sum_{i=1}^{n} w_i(\bm y_i-\bm x)(\bm y_i-\bm x)^\top\right)\leq c n \sigma^2. \nonumber
\end{align}

Then, we rewrite the constraints $0 \leq w_i \leq 1, \forall i$ as $0 \leq w_i$, and $\sum w_i e_ie_i^\top \preceq I_{n\times n}$, where $e_i$ is the $i$-th standard basis vector in $\mathbb{R}^n$. Finally, we can turn \eqref{obj1wp} into the following Packing SDP:
\begin{align}\label{packing SDPw}
   \max_{\bm w} & \ \bm u^\top \bm w \\
    s.t. \ & w_i \geq 0, \forall i, \nonumber\\
   & \sum_{i=1}^{n} w_i \begin{bmatrix}
    e_ie_i^\top &  \\
    &  (\bm y_i-\bm x)(\bm y_i-\bm x)^\top \end{bmatrix} \preceq \begin{bmatrix}
    I_{n\times n} &  \\
    &  c n \sigma^2 I_{d\times d} \end{bmatrix} .\nonumber
\end{align}

\subsection{Corrupted image dataset}
We use real face images to test the effectiveness of the robust mean estimation methods. The average face of particular regions or certain groups of people is useful for many social and psychological studies~\cite{little2011facial}. Here we use 100 frontal human face images from Brazilian face database\footnote{https://fei.edu.br/~cet/facedatabase.html} as inliers. For the outliers, we choose 15 face images of cat and dog from CIFAR10~\cite{krizhevsky2009learning}. In order to run the CDG method~\cite{convex}, we scale the size of images to 18 $\times$ 15 pixels, so the dimension of each datapoint is 270. Fig.~\ref{fig:sample_inlier} and Fig.~\ref{fig:sample_outlier} show the sample inlier and outlier images. Fig.~\ref{fig:face_mean} shows the oracle solution (the average of the 100 inlier human faces) and the estimated mean by each method, as well as their $\ell_2$ distances to the oracle solution. The proposed $\ell_1$ and $\ell_p$ methods achieve smaller recovery error than the state-of-the-art methods. The estimated mean faces by the proposed methods also look visually similar to the oracle solution, which illustrates the efficacy of the proposed $\ell_1$ and $\ell_p$ methods. 

\begin{figure*}[!ht]
	\centering
	\includegraphics[trim=50 22 155 20, clip=true,width=0.3\linewidth]{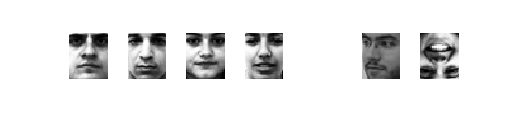}
	\captionsetup{justification=centering}
	\caption{Sample inlier human face images.}
	\label{fig:sample_inlier}
\end{figure*}

\begin{figure*}[!ht]
	\centering
	\includegraphics[trim=0 1200 10 300, clip=true,width=0.22\linewidth]{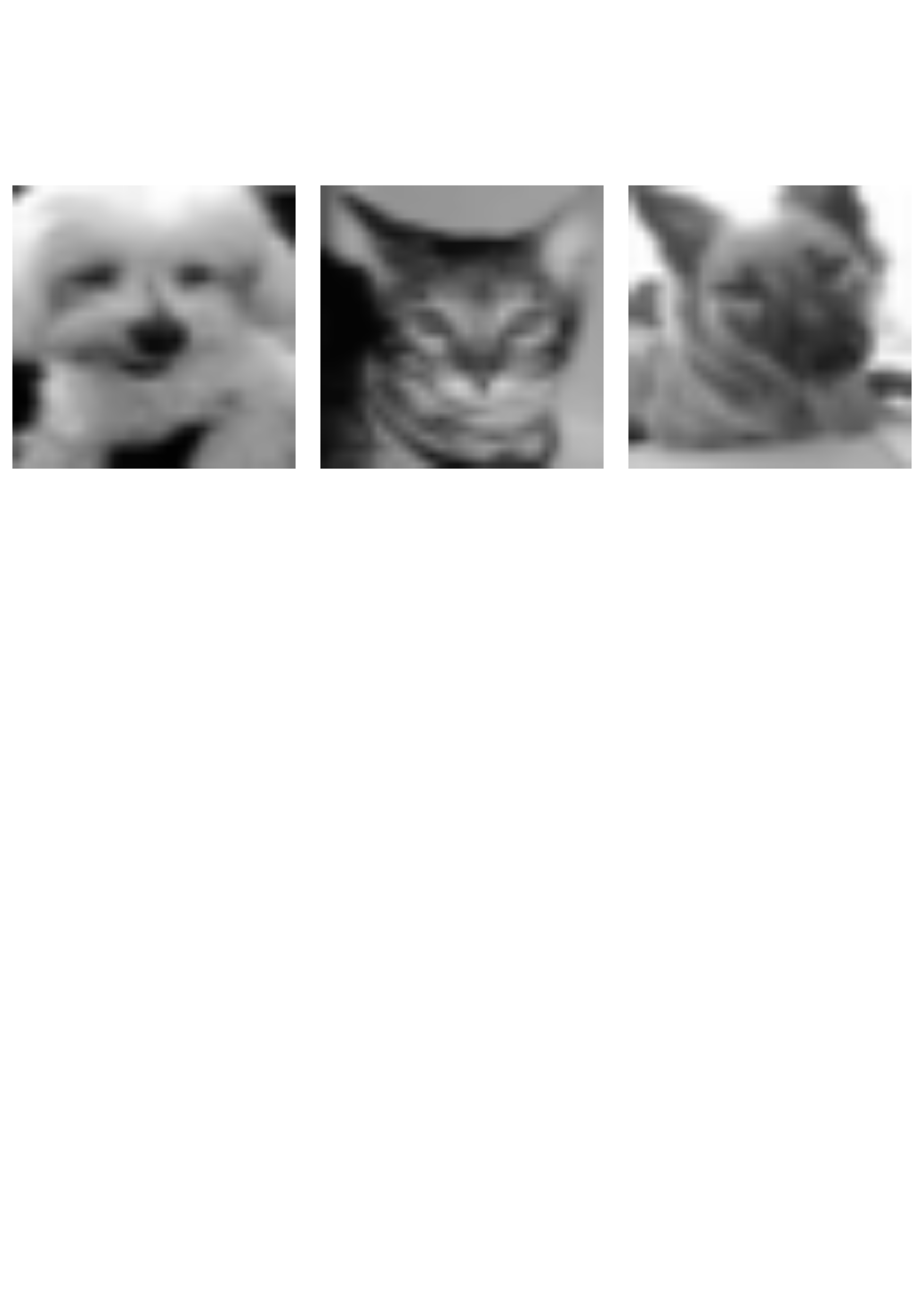}
	\captionsetup{justification=centering}
	\caption{Sample outlier cat and dog face images from CIFAR10.}
	\label{fig:sample_outlier}
\end{figure*}

\begin{figure*}[h]
	\centering
	\includegraphics[trim=40 146 24 110, clip=true,width=0.5\linewidth]{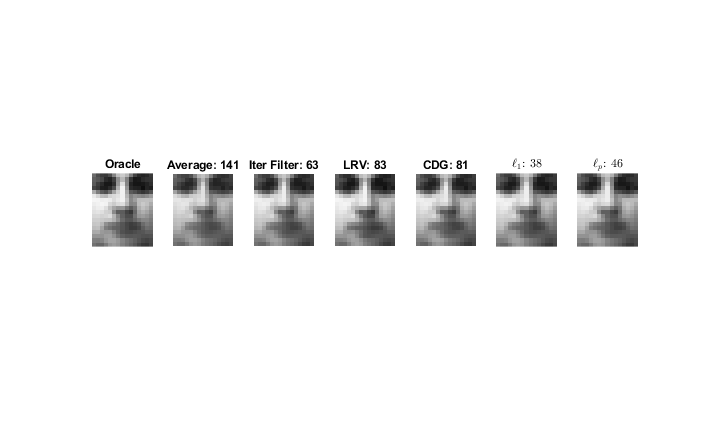}
	\captionsetup{justification=centering}
	\caption{Reconstructed mean face and its recovery error by each method.}
	\label{fig:face_mean}
\end{figure*}

\end{document}